\newtheorem{thm}{Theorem}
\newtheorem{cor}[thm]{Corollary}
\newtheorem{clm}{Claim}
\newtheorem{lem}[thm]{Lemma}
\newcommand{\input images/}[1]{\input images/#1}
\newtheorem{Theorem*}{Theorem}
\newtheorem{Claim*}{Claim}
\newtheorem{CounterExample*}{$\overline{\hbox{\bf Example}}$}
\newtheorem{Example*}{Example}
\newtheorem{Intuition*}{Intuition}
\newtheorem{Joke*}{Joke}
\newtheorem{Lemma*}{Lemma}
\newtheorem{Open problem}{Open problem}
\newtheorem{Question*}{Question}
\newtheorem{Remark*}{Remark}
\def \bSubexa    {\begin{subexa}}
\def \Proof    {\skpbld{Proof}}
\newcommand{\ignore}[1]{{}}
\newcommand{\EE}{\mathbb{E}}
\newcommand{\CC}{\mathbb{C}}
\newcommand{\QQ}{\mathbb{Q}} 
\newcommand{\ZZ}{\mathbb{Z}} 
\def \cS     {{\cal S}}
\def \cT     {{\cal T}}
\newcommand{\eg}{\textit{e.g.,}\xspace}
\newcommand{\ie}{\textit{i.e.,}\xspace}  
\def \mynote#1{{}}
\def \eqed    {\eqno{\qed}}
\def \upto  {{,}\ldots{,}}
\def \setzo    {\{0{,}1\}}
\def \setzon   {\setzo^n}
\def \sets#1{{\{#1\}}}
\def \union  {\cup}
\def \ceil#1{{\lceil{#1}\rceil}}
\def \floor#1{{\lfloor{#1}\rfloor}}
\def \Floor#1{{\left\lfloor{#1}\right\rfloor}}
\def \paren#1{{({#1})}}
\def \Paren#1{{\left({#1}\right)}}
\def \gcd#1#2{{{\rm gcd}\paren{{#1},{#2}}}}
\newcommand{\ed}{\stackrel{\mathrm{def}}{=}}
\def\ignore#1{}
\renewcommand{\Proof}{\par\emph{Proof:}\ }
\newcommand{\prfarg}[1]{\par\emph{Proof of\ {#1}:}\ }
\newcommand{\bi}{\begin{itemize}}
\newcommand{\ei}{\end{itemize}}
\def\orpro{\mathop{\mathchoice
   {\vee\kern-.49em\raise.7ex\hbox{$\cdot$}\kern.4em}
   {\vee\kern-.45em\raise.63ex\hbox{$\cdot$}\kern.2em}
   {\vee\kern-.4em\raise.3ex\hbox{$\cdot$}\kern.1em}
   {\vee\kern-.35em\raise2.2ex\hbox{$\cdot$}\kern.1em}}\limits}
\def\andpro{\mathop{\mathchoice
 {\wedge\kern-.46em\lower.69ex\hbox{$\cdot$}\kern.3em}
 {\wedge\kern-.46em\lower.58ex\hbox{$\cdot$}\kern.25em}
 {\wedge\kern-.38em\lower.5ex\hbox{$\cdot$}\kern.1em}
 {\wedge\kern-.3em\lower.5ex\hbox{$\cdot$}\kern.1em}}\limits}
\def\simge{\mathrel{%
   \rlap{\raise 0.511ex \hbox{$>$}}{\lower 0.511ex \hbox{$\sim$}}}}
\def\simle{\mathrel{
   \rlap{\raise 0.511ex \hbox{$<$}}{\lower 0.511ex \hbox{$\sim$}}}}
\newcommand{\cmpmlt}{composition multiset\xspace}
\newcommand{\ab}{\Sigma}
\newcommand{\numwt}{\ell}
\newcommand{\nummax}{L}
\newcommand{\pd}[2][\empty]{\dfrac{\partial #1}{\partial #2}}
\newcommand{\df}[1][\rm{def}]{\hspace{0.3ex}\stackrel{#1}{=}\hspace{0.3ex}}
\newcommand{\comment}[1]{}
\renewcommand{\hat}{\widehat}
\newcommand{\strs}{{s}}
\newcommand{\strt}{{t}}
\newcommand{\cnfset}{E}
\newcommand{\cnfmax}{E}
\newcommand{\rP}{P^*}
\newcommand{\rQ}{Q^*}
\newcommand{\rA}{A^*}
\newcommand{\rB}{B^*}
\newcommand{\rC}{C^*}
\title{String Reconstruction from Substring Compositions}
\author{Jayadev Acharya\footnote{UCSD. email: \texttt{jacharya@ucsd.edu}} \and
Hirakendu Das\footnote{Yahoo Labs!. email: \texttt{hdas@yahoo-inc.com}} \and
Olgica Milenkovic\footnote{UIUC. email: \texttt{milenkov@uiuc.edu}.} \and
Alon Orlitsky\footnote{UCSD. email: \texttt{alon@ucsd.edu}.} \and
Shengjun Pan\footnote{Google. email: \texttt{s1pan@eng.ucsd.edu}.}
}
\begin{document}

\begin{titlepage}
\maketitle\thispagestyle{empty}
 \begin{abstract}
Motivated by mass-spectrometry protein sequencing, we consider
a simply-stated problem of reconstructing a string from the
multiset of its substring compositions.
We show that all strings of length 7, one less than a prime,
or one less than twice a prime, can be reconstructed uniquely up to reversal.
For all other lengths we show that reconstruction is not
always possible and provide sometimes-tight bounds
on the largest number of strings with given substring compositions.
The lower bounds are derived by combinatorial arguments
and the upper bounds by algebraic considerations
that precisely characterize the set of strings with the same substring
compositions in terms of the factorization of bivariate polynomials.
The problem can be viewed as a combinatorial simplification
of the turnpike problem, and its solution may shed light
on this long-standing problem as well.
Using well known results on transience of multi-dimensional random walks, we 
also provide a reconstruction algorithm 
that reconstructs random strings over alphabets of
size $\ge4$ in optimal near-quadratic time. 
\end{abstract}
\end{titlepage}
\section{Motivation}
A protein is a long sequence of amino acids whose composition
and order determine the protein's properties and functionality.
A common tool for finding the amino-acid sequence is 
\emph{mass spectrometry}~\cite{T92,M01}.
It takes a large number of identical proteins, ionizes and randomly breaks
them into substrings, and analyzes the resulting mixture 
to determine the substring weights.
The substring weights 
are then used to infer the amino-acid sequence.

We make two simplifying assumptions that reduce
protein mass-spectrometry reconstruction to a simply-stated combinatorial problem
that we then analyze.

{\it Assumption 1:}
The composition of every protein substring can be deduced from its weight.
For example, let $A$, $B$, and $C$ be three amino acids with
respective weights 13, 7, and 4.
A string weight of 11 clearly consists of one $B$ and one $C$.
Similarly, weight 18 implies two $B$'s and one $C$.
However, weight 20 could arise from
one $A$ and one $B$ or from 5 $C$'s, hence we cannot deduce
the composition from the weights.
The assumption states that such confusions never arise.

{\it Assumption 2:} 
Each protein-sequence bond gets cut independently with the same probability.
For example, if the sequence is $ABC$ and the cut probability
is $p$, then 
the partition $A|B|C$ is obtained with probability $p^2$,
the partitions $A|BC$, and $AB|C$ are obtained with probability $p(1-p)$, 
and the partition $ABC$ with probability $(1-p)^2$. 
 
While both assumptions are clearly idealized, one can imagine scenarios
where they roughly hold. For example, the first assumption holds 
if all amino-acid weights are sufficiently large and different.
More detailed assumptions, such as some amino-acid weight
similarities, or unequal cut probabilities, can be considered
as well, see \eg Section~\ref{sec:concl-extens}, 
but the two current assumptions provide a clean simple-to-analyze formulation.
Additionally, independently of their precise validity, these
assumptions convert protein sequencing to a very simple
string-reconstruction problem that is interesting on its own merit.
It is also a combinatorial simplification of the long-open
\emph{turnpike problem}~\cite{D00, SSL90}, and may provide insight into
the structure of its solutions.



The \emph{composition} of a string is the multiset of its elements,
namely the number of times each element appears, regardless of
the order. Compositions of strings also known as \emph{Parikh vectors}.
To derive the reconstruction problem, observe that 
the first assumption implies that the composition, and hence also the length,
of each substring can be determined.
The second assumption implies that, ignoring
small effects at the sequence ends, all substrings of a given length
would appear roughly the same number of times.
For example, for cut probability $p$, strings of length $k$ would
appear a number proportional to $p^2(1-p)^{k-1}$.
Since the substring lengths can be determined, their number of
appearances can be normalized so that the composition of each substring appears exactly once.
The problem then is to reconstruct a length-$n$ protein sequence from the multiset
of all its $\binom{n+1}2$ substring compositions.

For example, the composition of the string $BABCAA$ 
is the multiset $\sets{A,A,A,B,B,C}$, denoted
$A^3B^2C$ to indicate that the string consists of
three $A$'s, two $B$'s, and  one $C$.
Sequencing say the string $ACAB$ would result in ${4+1\choose 2}=10$
substring compositions: $A$, $A$, $B$, $C$, $AB$, $AC$, $AC$, $A^2C$, $ABC$,
and $A^2BC$.
Note that for each of substring, only the composition, and not the order, is given,
and that the compositions are provided  along
with their multiplicity, but not the location they appear in the string.
To reconstruct the original string $ACAB$ from its substring
compositions,
 note that the compositions imply it consists
of two $A$'s, a $B$, and a $C$, and the two $A$'s don't appear together,
hence the string is $ABAC$, $ACAB$, $ABCA$, or their reversals.
The appearance of the substring composition $A^2C$ implies $ACAB$ or its reversal.

Clearly a string and its reversal have the same composition multiset
and hence cannot be distinguished. We therefore attempt to recover
the string up to reversal.

As a final simplification, note that protein sequences are over
an alphabet consisting of the 20 amino acids.
Yet reconstruction of binary strings extends to reconstructing strings over any finite alphabet.
For example, to reconstruct the string $ACAB$ above, first
replace appearances of $A$'s by 1 and of $B$'s and $C$'s by 0.
This yields the compositions 0, 0, 1, 1, 01, 01, 01, $0^21$, $01^2$,
$0^21^2$,
which imply 1010, namely $A$'s appear in the first and third
locations. Then replace $A$'s and $B$'s by 1 and $C$'s by 0's
to see that $B$ appears in the last location, and finally deduce
$C$'s location. A short argument
shows that this recursive reconstruction using binary strings
always works.


We therefore consider the reconstruction of binary strings from their
composition multisets.

\ignore{
The assumption implies that in the mixture, each substring
will appear the same number of times, hence for example 
$A$ will occur twice as many times as $B$, $AB$, $AA$ and $AAB$.
This assumption too cannot be expected to hold precisely, but
if each cut occurs with the same probability, then as the number
of proteins tested increases, all substrings of a given length
will appear roughly the same number of times, and that number
could be adjusted by the length to correspond to a single appearance.
%
Under the first assumption, mass-spectrometry protein sequencing
reduces to reconstructing a string from a collection of its
substring compositions where each substring composition is
given an unknown number of times.
With the second assumption, the problem is further reduced to 
reconstructing a string from the collection of its substring
composition, each given exactly once.
We study the extent to which that can be done.
}

\section{Definition and results}
The \emph{\cmpmlt} of a string $\strs=s_1s_2\ldots s_n\in\setzo^n$ is
the multiset 
\[
\cS_\strs
\ed
\sets{
\sets{s_i,s_{i+1}\upto s_{j}}: 1\le i\le j\le n
}
\]
of compositions of all ${n+1\choose2}$ contiguous substrings of $\strs$.
For example, 
\[
\cS_{001}  =  \{0, 0, 1,  0^2, 01,  0^21 \}
\quad\text{and}\quad
\cS_{010}  =  \{ 0,  0,  1,  01,  01,  0^21 \}.
\]
Note that the number of substrings with any given composition is
reflected in the multiset, but their locations are not.

Two strings $\strs$ and $\strt$ are \emph{equicomposable},
denoted $\strs\sim\strt$, if they have the same composition multisets.
Equicomposability is clearly an equivalence relation.
The \emph{reversal} of a string  $\strs=s_1s_2\ldots s_n$ is
the string $\strs^*\ed s_ns_{n-1}\ldots s_1$.
A string and its reversal are clearly equicomposable, but this is not a
problem as in our original motivation
they represent the same protein, hence are effectively the same.
We say that a string is \emph{reconstructable} if it is equicomposable
only with itself and its reversal, hence can be determined
up to reversal from its composition multiset.
A string $\strs$ is \emph{non-reconstructable} or \emph{confusable} if it is equicomposable with
another string $\strt\ne\strs^*$, and if so, we also call $\strs$ and
$\strt$ \emph{confusable}.

As we did with $ACAB$, it is easy to verify that all binary strings of length at most 7 are
reconstructable. 
A natural question is therefore whether all binary strings are reconstructable.
In Section~\ref{S_smp_cnf} we show that for length 8,
\[
01001101\sim 01101001.
\]
Since the two strings are not reversals of each other, they are confusable and neither is reconstructable.
An extension of the example yields non-reconstructable binary strings of length $n$
whenever $n+1$ is a product of two integers, each at least 3.
For example, of length 8 as $8+1=3\cdot 3$, length 11 as $12=4\cdot 3$, etc.
That leaves the question as to whether all binary strings of the
remaining lengths are reconstructable.
Observe that these remaining lengths $n$ are precisely those for which
$n+1$ is either 8, a prime, or twice a prime.
One of the results we prove is that indeed all strings of these
lengths are reconstructable.

To do so, in Sections~\ref{S_pln_rpr} and~\ref{section:rcp_ply} we represent a
binary string $\strs$ by a \emph{generating} bivariate polynomial $P_\strs\in\ZZ[x,y]$.
We also define the \emph{reciprocal} $\rP$ of polynomial $P$.
We show that two strings $\strs$ and $\strt$ are equicomposable if and only if
\[
P_\strs\rP_\strs
=
P_\strt\rP_\strt.
\]
We also consider the sets
\[
\cnfset_\strs
\ed
\sets{\strt: \strt\sim\strs}
\]
and
\[
\mathcal{P}_{\cnfset_\strs}
\ed
\sets{P_{\strt}: \strt\sim \strs}
\]
of strings and generating polynomials equicomposable with a string
$\strs$  and show that it has a very simple polynomial
characterization. Let
\[
P_\strs=
P_1 P_2\cdots P_k
\]
be the prime factorization of $P_\strs$ over $\ZZ[x,y]$. 
It is perhaps intriguing that the equicomposable set of $\strs$ 
can be expressed exactly and simply as:
\[
\mathcal{P}_{\cnfset_\strs}
=
\sets{\tilde P_1 \tilde P_2\cdots \tilde P_k:\textit{each $\tilde P_i$ \textit{is} $P_i$ \textit{or} $\rP_i$}}.
\]

In Section~\ref{S_unq_rcn}, we use these results to 
show that indeed if $n+1$ is 8, a prime, or twice a prime, all length-$n$
binary strings are reconstructable.
Along with Section~\ref{S_smp_cnf}'s construction, this establishes
the exact lengths for which all strings are reconstructable.

In Sections~\ref{S_lwr_bnd} and~\ref{S_upr_bnd} we consider lengths where reconstruction is
not always unique.
Let 
\[
\cnfmax_n
\ed
\max\sets{|\cnfset_\strs|: \strs\in\setzon}
\]
be the the largest number of mutually-equicomposable $n$-bit strings.
Since every string is equicomposable with its reversal, $\cnfmax_n\ge 2$ for
every $n\ge 2$, and from the above, $\cnfmax_n\ge 4$ whenever $n+1$ is
a product of two integers $\ge 3$.

Generalizing the interleaving construction, we lower-bound $\cnfmax_n$, and using
the polynomial representation and results on cyclotomic-polynomials, we upper bound it.
Considering the distinct-prime factorization $n+1=2^{e_0}p_1^{e_1}p_2^{e_2}\ldots p_k^{e_k}$,
we show that
\[
2^{\Floor{\frac{e_0}{2}}+e_1+e_2+\cdots+e_k}\le \cnfmax_n\le \min\{2^{(e_0+1)(e_1+1)\cdots(e_k+1)-1}, (n+1)^{1.23}\},
\]
where $(e_0+1)(e_1+1)\cdots(e_k+1)=d(n+1)$ is the number of divisors of
$n+1$.
We also show that when $n+1$ is a prime power or twice a prime power, the
lower bound is tight. For all $k\ge1$,
\[
\cnfmax_{2^k-1}= 2^{\Floor{\frac{k}{2}}},
\]
and for prime $p\ge3$,
\[
\cnfmax_{p^k-1}=
\cnfmax_{2p^k-1}=2^k.
\]
It follows that for $n=3^k-1$, 
\[
\cnfmax_n=(n+1)^{\log_3 2}.
\]

We then provide a backtracking algorithm that reconstructs
a string from it substring compositions.  Using some well
known results on random walks in high dimensions, 
we show that the algorithm with high probability reconstructs
a string in time $O(n^2\log n)$ for alphabet size $\ge4$. 
This is superior, albeit random, compared to the performance
of polynomial factorization algorithms. 
The algorithm can reconstruct strings over alphabet 
size 2 and 3, although with weaker guarantees. 

\section{Relation to other work}
String reconstruction from multiset decompositions is related to two types of problems.
By its formulation it is similar to other string-reconstruction problems,
while mathematically it is closer to the well-known turnpike problem.

Several string-reconstruction variations have been previously considered.
Reconstruction of a string from a few substrings was considered in~\cite{MS95}.
Reconstruction of a string from its subsequences (not necessarily contiguous)
was considered in~\cite{L01,DS03,BKKM04,VS08}. 
In these problems however, the substrings or subsequences
themselves, which include the order of their symbols, are given.
By contrast, in our problem, for each substring we are
given just the composition, neither the order of the symbols
within it nor the substring's location in the original string.

Note that four simple problems of reconstructing a string from its
substrings can be formulated.
They differ by whether
or not we are given
(a) the order of the substrings in the string, and (b) the order of the bits in each substring.
All four problems are of interest when only some of the
$\binom{n+1}2$ substrings are given. But when all substrings are given,
knowing the order of either the substrings or their symbols clearly determines the
original string.
It is only when, as in this mass-spectrometry application,
neither the substring order nor the symbol order are provided, that
the reconstruction question arises.
Some partial results on this problem were presented in~\cite{ADMOP10}.


In terms of the solutions and proof techniques,
our problem is more closely related to the well-known 
\emph{turnpike problem} where the locations of $n$ highway exits
need to be recovered from the multiset of their ${n\choose2}$
inter-exit distances.
For example inter-exit distances 1, 2, 3, 3, 5, 6 correspond to exit locations 0, 1, 3, 6.

The turnpike problem originally arose in X-Ray crystallography~\cite{P35, P44}
and has many applications, including to DNA analysis~\cite{DK88}.
It is also of theoretical interest as it has an algorithm
whose run time is polynomial in the largest inter-exit distance, but
it is not known whether it has an algorithm whose run time is
polynomial in $n$, independently of the distances.
Several variation of the problem have been recently considered,
e.g.~\cite{DBLP:journals/tcs/DauratGN05,CHK09} and references therein.
A related problem of characterizing strings that have the same
\emph{set} of compositions, instead of multisets, is
considered in~\cite{FiciL11}.

String reconstruction from substring multisets is a combinatorial
simplification of the turnpike problem~\cite{UstoSK,CHK09}.
To see that, convert every string-reconstruction instance to a
turnpike problem whose solution implies the original string.
Given an $n$-bit string reconstruction problem, replace each substring
composition by an inter-exit distance obtained by replacing each 0 by 1 and each
1 by $n+1$ and summing the values.
Then find the exit locations and replace back the bit values.

For example, to recover the string 1011 from the compositions
0, 1, 1, 1, 01, 01, 11, $01^2$, $01^2$, $01^3$, replace every 0 by 1 and every
1 by $4+1=5$ to obtain inter-exit distances 1, 5, 5, 5, 6, 6, 10,
11, 11, 16 which in turn correspond to locations 0, 5, 6, 11, 16.
Then convert 1 back to 0, and 5 back to 1 to obtain the original string
1011.
Note that the mapping of 1 to $n+1$ prevents spurious solutions.


This reduction shows that string reconstruction can be solved
in polynomial time~\cite{LLL82}, but this is not our main focus. Instead, we
show that several questions unsolved for the turnpike problem
can be answered for string reconstruction. Hence, in addition to the
problem's intrinsic value, and potential application to protein
reconstruction, its solution may provide
useful insights for the general turnpike problem.

For example, some of the more important unsolved questions
about the turnpike problem concern the number of solutions
a given instance may have~\cite{SSL90}.
A turnpike problem with $n\ge 6$ exits may have 
multiple solutions, but their largest possible number is not known for
any such $n$. By contrast, for
string reconstruction we show that when $n+1$ is 8, a
prime, or twice a prime reconstruction is always unique.
We also determine the exact number of maximal reconstructions
whenever $n+1$ is a prime power or twice a prime power, and
conjecture that the same formula holds for all string lengths.

We also note that some of the polynomial techniques we
use are related to those applied to the turnpike problems~\cite{RS82,  SSL90}.
Yet others, such as the bivariate polynomial formulation
and relation to cylcotomic polynomials, seem new.


\section{Simple confusions}
\label{S_smp_cnf}
A simple computer search shows that all binary strings of length at most 7 are reconstructable.
For length 8, the strings 01001101 and 01101001 are confusable and are
not reversals of each other. To see that, note that they
can be parsed  as \textbf{01}\,0\,\,\textbf{01}\,1\,\textbf{01}
and  \textbf{01}\,1\,\textbf{01}\,0\,\,\textbf{01}. 
Both have a common substring \textbf{01}, which is
interleaved with 0\,1 for the first string and with its reversal 1\,0
in the second string. 
The \emph{interleaving} of string $\strs$ with the bits of $\strt=t_1\ldots t_m$ is the string
$\strs\circ \strt\ed\strs\, t_1\,\strs\, t_2\,\dotsb \,t_m\,\strs$.
The first string above is therefore $01\circ01$ while the second is $01\circ10$.

Recall that $\strt^*$ represents the reversal of $\strt$, and that
$\sim$ indicates that two strings are equicomposable.
We show that $\strs\circ\strt\sim\strs\circ\strt^*$ for any $\strs$ and $\strt$.
We demonstrate that fact when $\strt$ has length 3, and the general result follows similarly.
The interleaving $\strs\circ\strt$ is represented by the figure below,
where each triangle represents the string $\strs$.

\begin{center}
\def\triangle{[shading=axis,shading angle=90]
      ++(0,0.2)-- ++(1.2,-0.2)-- ++(-1.2,-0.2)--cycle}

\begin{tabular}{r@{:}l}
$\strs\circ \strt$ &\quad
\begin{tikzpicture}[baseline=-2.4pt,scale=0.65]

\foreach \x in {0,1,2,3}
{
  \draw (2*\x,0) \triangle;
  \ifthenelse{\x>0}{\node at (2*\x-0.4,0) {$t_{\x}$};}{}
}
\draw[<->] (0.6,0.2)--(0.6,0.5)--(4.2,0.5)--(4.2,0.2);
\begin{scope}
  \clip\triangle;
  \fill[blue] (0.6,-1) rectangle (2,1);
\end{scope}
\clip (4,0) \triangle;
\fill[green!100] (4.2,-1) rectangle (2,1);
\end{tikzpicture}\\
$\strs\circ \strt^*$ &\quad
\begin{tikzpicture}[baseline=-2.4pt,scale=0.65]
\foreach \x in {0,1,2,3}
{
  \draw (2*\x,0) \triangle;
  \ifthenelse{\x>0}{\node at (7.6-2*\x,0) {$t_{\x}$};}{}
}
\draw[<->] (2.6,-0.2)--(2.6,-0.5)--(6.2,-0.5)--(6.2,-0.2);
\begin{scope}
  \clip (2,0) \triangle;
  \fill[blue] (2.6,-1) rectangle (3.2,1);
\end{scope}
\clip (6,0) \triangle;
\fill[green] (6,-1) rectangle (6.2,1);
\end{tikzpicture}
\end{tabular}


\end{center}

Clearly any substring of $\strs\circ\strt$ can be uniquely
mapped to a substring of $\strs\circ\strt^*$ with the same composition.
For example, as illustrated in the figure, the substring in
$\strs\circ\strt$
consisting of a tail of $\strs$ (blue), $t_1$, $\strs$, $t_2$,
and a head of $\strs$ (green) has the same composition as the substring
in $\strs\circ\strt^*$ consisting of the same tail of $\strs$ (blue),
$t_2$, $\strs$, $t_1$, and the same head of $\strs$ (green).
Thus $\strs\circ\strt$ and $\strs\circ\strt^*$ have the same
multiset of compositions, and hence are equicomposable.

If the length of $\strs$ and $\strt$ are $m$ and $m'$
respectively, then the length of $\strs\circ \strt$
is $n=(m+1)(m'+1)-1$. If $m\ge2$, $m'\ge2$,
we can always choose non-palindromic $\strt$, namely $\strt\ne \strt^*$, ensuring that
$\strs\circ \strt\ne \strs\circ \strt^*$,
and non-palindromic $\strs$, namely $\strs\ne\strs^*$, ensuring that 
$\strs\circ \strt\ne(\strs\circ \strt^*)^*$.
It follows that whenever $n+1$ is a product of
two integers, each $\ge3$, there are confusable $n$-bit strings.

\section{Polynomial representation}
\label{S_pln_rpr}
The previous section described some confusable strings.
To further characterize confusability, we represent strings as polynomials.
A similar representation has been used for the turnpike
problem~\cite{RS82, SSL90},
however the polynomials used there are univariate, whereas 
bivariate polynomials are better suited for string reconstruction.

We use this representation to show that 
the strings equicomposable with a string $\strs$ can be determined by
the prime factorization of the polynomial representing $\strs$, and
that $\strs$ can be
reconstructed from its composition multiset by polynomial
factorization.

All polynomials and factorizations in the paper are over $\ZZ[x,y]$.
A polynomial whose nonzero coefficients are all 1 is \emph{0-1}.
A term $x^ay^b$ is a \emph{monomial} with $x$-\emph{degree} $a$,
$y$-\emph{degree} $b$, and \emph{total degree} $a+b$.
The $x$-, $y$-, and \emph{total-degrees} of a polynomial are
the corresponding highest degrees of any of its monomials.
For example, for  $x^2y+xy^3$ they are 2, 3, and 4 respectively.

When representing strings by polynomials, $0$ is denoted by $x$ and
$1$ by $y$.
Also, $a_i$ denotes the number of 0's in the first
$i$ bits of a string and $a$ denotes their total number, while
$b_i$ and $b$ represent the same for the 1's. For example,
01011 has $a_3=a=2$, $b_3=1$, and $b=3$.

The \emph{generating polynomial} of a binary string
$\strs = s_1 s_2\cdots s_n$ is
\[
P_\strs(x,y)
\ed
\sum\limits_{i=0}^{n}x^{a_i}y^{b_i}
=
\sum\limits_{i=0}^{n}x^{a_i}y^{i-a_i}.
\]
For example,
\[
P_{0100}(x,y) = 1 + x + xy + x^2y + x^3y.
\]

Generating polynomials of $n$-bit strings are characterized by the following
sufficient properties.
\begin{description}
\item[G1]
They are 0-1.
\item[G2]
They have $n+1$ terms, exactly one of each total degree $0, 1, \ldots, n$.
\item[G3]
For all $1\le i\le n$, the ratio between the terms of total degrees
$i$ and $i - 1$ is either $x$ or $y$.
\end{description}
For example, $1+x+xy+x^2y+x^3y$ is generating, 
but $1-x$, $1+x^2$, and $1+x+y^2$ are not.

Similar to strings, a composition $0^{a}1^{b}$ is represented by the monomial
$x^{a}y^{b}$, and $\cS_\strs(x,y)$ is the summation of all monomials
corresponding to the substring compositions of $\strs$.
For example,
\[
\cS_{0100}(x,y)
= 
\sets{0,0,0,1,0^2,01,01,0^21,0^21,0^31}
=
3x + y + x^2 + 2xy + 2x^2y + x^3y.
\]

Note that for all $i\le j$, 
\[
x^{a_j}y^{b_j}\cdot(x^{a_i}y^{b_i})^{-1}
=
x^{a_j-a_i}y^{b_j-b_i}
\]
represents the composition of the substring $s_{i+1}\ldots s_j$, hence the
generating polynomial provides a simple expression for the composition multiset,
\[
P_\strs(x,y) P_\strs\Paren{\frac1x,\frac1y} = n+1 +
\cS_{\strs}(x,y) + \cS_{\strs}\Paren{\frac1x,\frac1y}.
\]
For instance, in our running example,
\begin{align*}
P_{0100}(x,y)P_{0100}\left(\frac1x, \frac1y\right)
&=
5 + 3x + y + x^2 + 2xy + 2x^2y + x^3y + \frac{3}{x} + \frac{1}{y} + \frac{1}{{x}^{2}} + \frac{2}{xy} +
 \frac{2}{{x}^{2}y} + \frac{1}{{x}^{3}y}\\
&=
5+\cS_{0100}(x,y)+\cS_{0100}\Paren{\frac1x,\frac1y}.
\end{align*}

To reconstruct a string from a composition multiset $\cS$,
we therefore find all strings $\strs$ whose generating polynomial $P_\strs$
satisfies the above equation.
We also have the following basic lemma.
\begin{lem}
Strings $\strs$ and $\strt$ are equicomposable iff
\label{lemma:pxpoox}
\begin{equation}
\label{equation:pxpoox}
P_\strs(x,y) P_\strs\Paren{\frac1x, \frac1y}
=
P_\strt(x,y) P_\strt\Paren{\frac1x, \frac1y}.
\end{equation}
\end{lem}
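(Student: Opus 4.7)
The proof plan is to first establish the identity
\[
P_\strs(x,y)\, P_\strs\Paren{\frac1x, \frac1y} \;=\; n+1 + \cS_\strs(x,y) + \cS_\strs\Paren{\frac1x, \frac1y}
\]
that is already indicated in the discussion preceding the lemma, and then read off both directions from it.

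To prove the identity, I would expand the product as the double sum $\sum_{i,j=0}^{n} x^{a_i - a_j}\, y^{b_i - b_j}$ and split into three cases. The diagonal $i=j$ contributes $x^0 y^0$ exactly $n+1$ times. For $i>j$, since $(a_i,b_i)$ is coordinate-wise non-decreasing in $i$, the monomial $x^{a_i-a_j}\, y^{b_i-b_j}$ has non-negative exponents, not both zero, and equals the monomial encoding the composition of the substring $s_{j+1}\ldots s_i$; summing over all such pairs gives exactly $\cS_\strs(x,y)$. By the symmetry $i\leftrightarrow j$, the $i<j$ terms yield $\cS_\strs(1/x,1/y)$.

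The forward direction is then immediate: equicomposable strings have the same length and identical composition multisets, so both sides of the identity coincide, giving \eqref{equation:pxpoox}. For the reverse direction, the crucial observation is that the three summands on the right-hand side occupy disjoint supports in the ring of Laurent polynomials in $x$ and $y$: the scalar sits at the origin, $\cS_\strs(x,y)$ is supported on exponent vectors in $\NN^2 \setminus \{(0,0)\}$, and $\cS_\strs(1/x,1/y)$ on the negation of that set. Hence an equality of the form \eqref{equation:pxpoox} forces the three pieces to match separately: the constant terms force the two strings to have the same length, while the positive-quadrant parts force $\cS_\strs(x,y) = \cS_\strt(x,y)$, which (as a polynomial identity with non-negative integer coefficients counting composition multiplicities) is equivalent to $\strs \sim \strt$.

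The only real subtlety is the disjointness-of-supports claim, which relies on the fact that $(a_i,b_i)$ is monotone non-decreasing in $i$, so no monomial with one positive and one negative exponent ever appears in the expansion; once that is noted, the rest of the argument is bookkeeping.
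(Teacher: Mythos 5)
Your proof is correct and follows exactly the approach the paper intends: the identity $P_\strs(x,y)P_\strs(1/x,1/y) = n+1 + \cS_\strs(x,y) + \cS_\strs(1/x,1/y)$ is established in the text immediately before the lemma, and the lemma is stated as its consequence without a separate proof block. Your disjoint-supports observation for the converse (relying on the monotonicity of $(a_i,b_i)$ in $i$, so that no mixed-sign monomial can arise) is precisely the unstated detail that makes the implication an equivalence, and it is carried out correctly.
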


We next show that for a string $s$, an equation
like~\eqref{equation:pxpoox}
holds only for generating polynomials.
The approach is similar to~\cite{RS82, SSL90} with additional
work needed to prove that the polynomial
obtained satisfied all the properties of 
generating polynomials.

\begin{lem}
\label{lemma:q_generating_oox}
If $P$ is a generating polynomial and $Q\in\ZZ[x,y]$ with
$Q(0,0)>0$ and
\begin{equation}
\label{equation:q_generating_oox}
P(x,y)P\Paren{\frac1x,\frac1y}=Q(x,y)Q\Paren{\frac1x,\frac1y},
\end{equation}
then $Q(x,y)$is a generating polynomial.
\end{lem}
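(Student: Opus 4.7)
The plan is to verify the three defining properties G1, G2, G3 of a generating polynomial for $Q$. Let $R(x,y) = P(x,y) P(1/x, 1/y)$ and decompose $Q = \sum_k Q_k$ into total-degree-homogeneous components.

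The first step pins down the ``corner'' terms of $Q$. Since $P$'s walk $(a_i, b_i)$ runs from $(0, 0)$ to $(a, b)$, the monomial $x^a y^b$ is the unique element of $R$'s support that simultaneously attains the maximum $x$-degree $a$ and maximum $y$-degree $b$, and it has coefficient $1$ (arising only from the pair $(i, j) = (n, 0)$). Reading this off from $Q(x, y) Q(1/x, 1/y) = R$ shows that $Q$'s support lies in $[0, a] \times [0, b]$ and that the coefficient of $x^a y^b$ on the $Q$-side equals $q_{a, b} \cdot q_{0, 0} = 1$; combined with $q_{0, 0} > 0$ and integrality, this forces $q_{0, 0} = q_{a, b} = 1$.

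The second step specialises to $y = x$. Since $P$ is generating, $P(x, x) = 1 + x + \cdots + x^n \eqdef \Phi(x)$, and with $f(x) = Q(x, x)$ the equation becomes $f(x) f(1/x) = \Phi(x) \Phi(1/x)$. Multiplying by $x^n$ and using that $\Phi$ is palindromic yields $f(x) f^*(x) = \Phi(x)^2$, where $f^*(x) = x^n f(1/x)$ is the degree-$n$ reversal of $f$. By the first step $f$ has degree $n$ with constant term and leading coefficient both equal to $1$, so $f$ and $f^*$ are integer polynomials dividing $\Phi^2$ in the UFD $\ZZ[x]$. Since each cyclotomic factor $\Phi_d$ of $\Phi$ is palindromic and irreducible, matching these in $f f^* = \Phi^2 = \prod \Phi_d^2$ and tracking $x$-factors forces each $\Phi_d$ to appear in $f$ with exponent $1$; hence $f = \Phi$. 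Consequently $\sum_{c + d = k} q_{c, d} = 1$ for every $0 \le k \le n$.

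The third step upgrades these aggregate identities to the precise walk structure. The key additional input is monotonicity: every difference $(a_i - a_j, b_i - b_j)$ in $P$'s walk lies in a single closed quadrant, so the support of $R$ is contained in $\{(c, d) : cd \ge 0\}$. Hence for every $(c, d)$ with $cd < 0$ the coefficient of $x^c y^d$ in $Q(x, y) Q(1/x, 1/y)$ must vanish, yielding a family of cross-correlation identities $\sum_{(c', d')} q_{c + c', d + d'} q_{c', d'} = 0$. I would then run a downward induction on total degree: assuming $Q_n, \ldots, Q_{k+1}$ have been identified as single monomials forming a monotone lattice walk ending at $(a, b)$, the cross-correlation identities exclude any monomial of $Q_k$ lying off the down-set of $Q_{k+1}$, and the normalisation $\sum_{c + d = k} q_{c, d} = 1$ from step two pins the remaining coefficient at $1$. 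The main obstacle is this last inductive step: neither the cyclotomic argument (which yields only aggregate degree sums) nor the corner analysis (which fixes only the endpoints) alone rules out negative coefficients cancelling against larger positives, and it is only the simultaneous use of the support bound from step one, the per-degree normalisation from step two, and the mixed-sign vanishings of $R$ inherited from $P$'s monotonicity that lets one show any deviation of $Q$ from a monotone walk would produce a nonzero coefficient of $Q(x, y) Q(1/x, 1/y)$ in the forbidden mixed-sign quadrants.
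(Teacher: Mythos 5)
Your step 1 (corner analysis) and step 2 (specialising to $y=x$ and comparing against the cyclotomic factorisation of $(x^{n+1}-1)/(x-1)$) are both sound, and step 2 is in fact a clean way to obtain the per-total-degree normalisation $\sum_{c+d=k}q_{c,d}=1$; it is a more elaborate substitute for the paper's simple evaluation of equation \eqref{equation:q_generating_oox} at $x=y=1$, which only gives the aggregate identity $\sum q_{c,d}=\pm(n+1)$.

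The genuine gap is in step 3, and you flag it yourself: nothing in steps 1--2 rules out negative coefficients in $Q$, and the downward induction you sketch is not carried out. The fix you are missing is much simpler than the ``cross-correlation identities in mixed-sign quadrants'' route you gesture at: compare the \emph{constant} terms of the two sides of \eqref{equation:q_generating_oox}. On the $P$-side this is $\sum p_{i,j}^2 = n+1$, and on the $Q$-side it is $\sum q_{c,d}^2$. Combined with $\sum q_{c,d}=n+1$ (which you already have from step 2, or more cheaply from $x=y=1$), you get $\sum q_{c,d}(q_{c,d}-1)=0$; since $m(m-1)\ge 0$ for every integer $m$, every $q_{c,d}\in\{0,1\}$, giving G1 directly. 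With G1 in hand, G2 follows immediately from your per-degree identity $\sum_{c+d=k}q_{c,d}=1$ (exactly one monomial of each total degree $0,\dots,n$), and G3 follows from the paper's linear-term count: the coefficient sum of the linear terms $x$ and $y$ in $Q(x,y)Q(1/x,1/y)$ counts pairs of terms of $Q$ whose ratio is $x$ or $y$, which must equal $n$, forcing every consecutive pair of monomials of $Q$ to have ratio $x$ or $y$. In short, your approach can be completed, but as written it stops short of a proof; the missing ingredient is exactly the constant-term comparison that the paper's proof of G1 is built on.

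A secondary (fillable) point: in step 1, the claim that $\deg_x Q-\mathrm{lowdeg}_xQ=a$ from matching top $x$-degrees of both sides needs a short non-cancellation argument (e.g.\ pick the monomial of $Q$ with maximal $x$-exponent and, among those, maximal $y$-exponent, and similarly a minimal one; their product is the unique contributor to its exponent pair). Worth a sentence, though not the essential issue.
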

\begin{proof}
We first show that $Q$ satisfies \textbf{G1}.

\noindent\textbf{G1}
Evaluating~\eqref{equation:q_generating_oox} at $x=y=1$, we obtain
$P(1,1)^2 = Q(1,1)^2$, namely $P(1,1) = \pm Q(1,1)$. Suppose `$+$'.
Let $P(x,y) = \sum_{i,j} p_{i,j}x^iy^j$ and $Q(x,y) = \sum_{i,j}
q_{i,j}x^iy^j$, where $p_{i,j},q_{i,j}\ne0$, then,
\[
 \sum_{i,j} p_{i,j} = \sum_{i,j} q_{i,j}.
\]
Comparing the constant term in~\eqref{equation:q_generating_oox},
\[
 \sum_{i,j} p_{i,j}^2 = \sum_{i,j} q_{i,j}^2.
\]
Subtracting the two equations,
\[
 \sum_{i,j} p_{i,j}(1-p_{i,j}) = \sum_{i,j} q_{i,j}(1-q_{i,j}).
\]
Since all $p_{i,j}$ are 1, the left and hence the right sides are $0$.
For all integers $i$, $i(1-i)\le 0$ with equality iff $i$ is either 0 or 1.
Since all $q_{i,j}$ are nonzero integers, they must be 1. The case
of `$-$' is similar. 

\noindent\textbf{G2}
Each term in $P(x,y)P(1/x,1/y)$ is of the form
$x^{a_j-a_i}y^{b_j-b_i}$. Thus the exponents of $x$ and $y$ 
cannot have different signs. $Q$ cannot have two terms of the
same total degree as their
corresponding product in $Q(x,y)Q(1/x,1/y)$ would yield a term 
with $x$- and $y$-degrees of opposing signs, which by G1 would not
get cancelled by any other product term. Let $0=d_0< d_1< \ldots< d_n$
be the degrees of the terms in $Q$. 
The largest degree term in~\eqref{equation:q_generating_oox}
on the LHS is $x^{a_n}y^{n-a_n}$ whose
degree is $n$. The largest
degree on the RHS is $d_n-d_0=d_n$.
So $d_i=d_0+i=i$ and property G2 is proved.

\noindent\textbf{G3}
Consider the linear terms (constant times $x$ or
$y$) in~\eqref{equation:q_generating_oox}.
For any 0-1 polynomial $F(x,y)$, the sum of the coefficients of the linear terms 
in $F(x,y)F(1/x,1/y)$ is the number of pairs of terms in $F$ whose
ratio is $x$ or $y$.
$P$ has $n$ such pairs, hence so does $Q$.
By G2, $Q$ has $n+1$ terms one of each total degree $0, 1,\dotsc,n$, the
ratio
between any two consecutive terms must be $x$ or $y$.
\end{proof}

The two lemmas characterize the generating
polynomials of all the strings in $E_s$.
\begin{thm}
\label{thm:generating_Es}
\[
\mathcal{P}_{\cnfset_\strs}=\{P(x,y)\in\ZZ[x,y]:P(0,0)>0, P(x,y)
P\Paren{\frac{1}{x},\frac{1}{y}}=P_s(x,y)P_s
P\Paren{\frac{1}{x},\frac{1}{y}}\}
\]
\end{thm}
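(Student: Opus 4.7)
The plan is to prove the set equality by showing both containments, each of which follows almost immediately from the two lemmas just proved.

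For the forward inclusion $(\subseteq)$, I would take an arbitrary $P \in \mathcal{P}_{\cnfset_\strs}$, which by definition means $P = P_t$ for some string $t \sim \strs$. First I would observe that any generating polynomial $P_t$ satisfies $P_t(0,0) > 0$: the $i=0$ term in the defining sum contributes $x^{a_0} y^{b_0} = x^0 y^0 = 1$, so the constant coefficient is $1$. The functional equation $P(x,y) P(1/x, 1/y) = P_\strs(x,y) P_\strs(1/x,1/y)$ is then exactly the content of Lemma \ref{lemma:pxpoox} applied to $\strs \sim t$.

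For the reverse inclusion $(\supseteq)$, I would take any $P \in \ZZ[x,y]$ satisfying the two hypotheses on the right-hand side. Since $P_\strs$ is a generating polynomial and $P$ satisfies $P(0,0) > 0$ together with $P(x,y)P(1/x,1/y) = P_\strs(x,y)P_\strs(1/x,1/y)$, Lemma \ref{lemma:q_generating_oox} applies directly and yields that $P$ itself is a generating polynomial. Hence $P = P_t$ for some binary string $t$ (the string read off from properties \textbf{G1}--\textbf{G3}). Then Lemma \ref{lemma:pxpoox}, applied in the other direction, gives $t \sim \strs$, so $P = P_t \in \mathcal{P}_{\cnfset_\strs}$.

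There is essentially no hard step here: both directions are bookkeeping on top of the two preceding lemmas. The only subtlety worth flagging is the role of the condition $P(0,0) > 0$, which is what pins down the sign ambiguity in the $\pm$ step of Lemma \ref{lemma:q_generating_oox} and thereby ensures that $P$ (rather than $-P$) is the generating polynomial of an actual string. Without that hypothesis the set on the right-hand side would contain spurious polynomials such as $-P_\strs$, so stating and using it explicitly is the one place where a little care is required.
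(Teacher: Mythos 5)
Your proof is correct and is exactly the argument the paper has in mind: the paper states the theorem with only the remark ``The two lemmas characterize the generating polynomials of all the strings in $E_s$,'' and your two-containment write-up (Lemma~\ref{lemma:pxpoox} for $\subseteq$, Lemma~\ref{lemma:q_generating_oox} plus Lemma~\ref{lemma:pxpoox} for $\supseteq$) is precisely how that remark is meant to be unpacked. Your observation about $P(0,0)>0$ resolving the sign ambiguity is also the right thing to flag.
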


\section{Reciprocal polynomials}
\label{section:rcp_ply}
To apply existing results on polynomial factorization, we
relate $P(1/x,1/y)$ to standard polynomials.
Let $\deg_xP$ and $\deg_yP$ be the highest $x$- and $y$-degrees
of a bivariate polynomial $P$.
The \emph{reciprocal} of $P$ is the polynomial
\[
\rP(x,y) \df x^{\deg_x P}y^{\deg_y P} P\Paren{\frac{1}{x},\frac{1}{y}}.
\]
For example,
\[
(y + 3xy - 2y^2)^*
= 
xy^2\Paren{\frac1y+\frac3{xy}-\frac2{y^2}}
= 
xy + 3y - 2x.
\]
We will use the following reciprocal-polynomial properties.
\begin{description}
\item[R1]
$\deg_x \rP\le\deg_x P$ with equality iff $x\nmid P$, and similarly for $y$.
\item[R2]
The reciprocal of the product is the product of the reciprocals: $(P_1P_2)^*
=
\rP_1\rP_2$.
 \item[R3]
The reciprocal of the generating polynomial of $\strs$ generates the
reversal of $\strs$:
$
\rP_{\strs}
=
P_{\strs^*}.
$
\end{description}
For example,
\[
\rP_{0100}(x,y)
=
(1+x+xy+x^2y+x^3y)^*
=
1+x+x^2+x^2y+x^3y
=
P_{0010}(x,y).
\]

These properties imply a polynomial formulation of
Lemmas~\ref{lemma:pxpoox}
and~\ref{lemma:q_generating_oox} and Theorem~\ref{thm:generating_Es}.
The proofs are almost identical to the proofs of their counterparts
shown earlier and are hence omitted.
\begin{lem}
\label{lemma:pxprx}
Strings $\strs$ and $\strt$ are equicomposable iff
\[
P_\strs \rP_\strs = P_\strt \rP_\strt.
\]
\end{lem}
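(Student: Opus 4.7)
The plan is to reduce this lemma directly to Lemma~\ref{lemma:pxpoox} by bookkeeping of $x$- and $y$-degrees. Writing the reciprocal out,
\[
P_\strs(x,y)\rP_\strs(x,y) = x^{\deg_x P_\strs}y^{\deg_y P_\strs}\,P_\strs(x,y)P_\strs\Paren{\tfrac1x,\tfrac1y},
\]
so up to a monomial factor the quantity $P_\strs\rP_\strs$ is exactly the left-hand side of~\eqref{equation:pxpoox}. Thus the content of the lemma is essentially that this monomial factor agrees on the two sides whenever the strings are equicomposable, and that it can be canceled unambiguously going in the other direction.

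For the forward direction, assume $\strs\sim\strt$. I would first observe that equicomposable strings must have the same total number of $0$'s and the same total number of $1$'s: the full string composition appears in $\cS_\strs$ (it is the unique element of maximal total degree), and by equicomposability must also appear in $\cS_\strt$. Let $a$ and $b$ denote these common counts. Property~\textbf{G2} together with the definition of $P_\strs$ gives $\deg_x P_\strs=\deg_x P_\strt=a$ and $\deg_y P_\strs=\deg_y P_\strt=b$. Multiplying Lemma~\ref{lemma:pxpoox}'s identity by $x^a y^b$ on both sides and using the displayed formula for the reciprocal yields $P_\strs\rP_\strs=P_\strt\rP_\strt$.

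For the converse, assume $P_\strs\rP_\strs=P_\strt\rP_\strt$. The main observation is that the $x$- and $y$-degrees of $P_\strs\rP_\strs$ recover the individual degrees of $P_\strs$: since $P_\strs(1/x,1/y)$ contains the constant term $1$ (from property~\textbf{G2}), the product $P_\strs(x,y)P_\strs(1/x,1/y)$ attains $x$-degree $\deg_x P_\strs=a$, so after multiplying by $x^a y^b$ the polynomial $P_\strs\rP_\strs$ has $x$-degree $2a$ and $y$-degree $2b$. Equality of the two products therefore forces $a_\strs=a_\strt$ and $b_\strs=b_\strt$, and we may legitimately divide both sides by the common monomial $x^a y^b$ to recover~\eqref{equation:pxpoox}. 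Lemma~\ref{lemma:pxpoox} then gives $\strs\sim\strt$.

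The only subtle step is the degree-matching argument at the start of each direction; once one verifies that equicomposability fixes the total counts of $0$'s and $1$'s (and symmetrically, that equality of $P\rP$ fixes these counts through degree inspection), the lemma follows by straightforward monomial cancellation. I would expect the write-up to be essentially a two-line computation once the degree bookkeeping has been laid out.
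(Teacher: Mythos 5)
Your proposal is correct and follows essentially the route the paper intends: the paper explicitly states that the reciprocal-form lemmas follow from their $P(1/x,1/y)$ counterparts by "almost identical" arguments, and your reduction of Lemma~\ref{lemma:pxprx} to Lemma~\ref{lemma:pxpoox} via the monomial factor $x^{\deg_x P}y^{\deg_y P}$, with the degree bookkeeping that equicomposability (forward) or product equality (converse) pins down the common $x$- and $y$-degrees, is exactly that argument made explicit.
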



The second lemma has actually a slightly cleaner formulation than its $1/x$
counterpart.

\begin{lem}
\label{lemma:q_is_gen2}
If $P$ is a generating polynomial and for $Q\in\ZZ[x,y]$
\[
P\rP = Q \rQ,
\]
then $Q$ or its negation is also a generating polynomial.
\end{lem}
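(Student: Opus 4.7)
The plan is to reduce this lemma to Lemma~\ref{lemma:q_generating_oox} by converting the reciprocal equation $P\rP = Q\rQ$ into the Laurent-polynomial equation $P(x,y)P(1/x,1/y) = Q(x,y)Q(1/x,1/y)$ (after possibly replacing $Q$ by $-Q$) and then invoking that earlier lemma.

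The first step is to show $\deg_x Q = \deg_x P$ and $\deg_y Q = \deg_y P$. Since $P$ is generating, both the constant term of $P$ and the coefficient of $x^{\deg_x P}y^{\deg_y P}$ in $P$ are $1$, so $(P\rP)(0,0) = 1$ and hence $Q\rQ$ has nonzero constant term. Substituting $(x,y) \mapsto (1/x,1/y)$ in the identity $\rP(x,y) = x^{\deg_x P}y^{\deg_y P}P(1/x,1/y)$ yields $\rP(1/x,1/y) = x^{-\deg_x P}y^{-\deg_y P}P(x,y)$, and analogously for $\rQ$. Substituting $(x,y) \mapsto (1/x,1/y)$ in $P\rP = Q\rQ$ and using these identities gives $x^{-2\deg_x P}y^{-2\deg_y P}\,P\rP = x^{-2\deg_x Q}y^{-2\deg_y Q}\,Q\rQ$; combined with $P\rP = Q\rQ \neq 0$ this forces the claimed degree equalities.

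With the degrees matched, substituting $\rP$ and $\rQ$ back into the hypothesis via their definitions and cancelling the common monomial factor $x^{\deg_x P}y^{\deg_y P} = x^{\deg_x Q}y^{\deg_y Q}$ yields $P(x,y)P(1/x,1/y) = Q(x,y)Q(1/x,1/y)$. To invoke Lemma~\ref{lemma:q_generating_oox} I need $Q(0,0) > 0$: from $(Q\rQ)(0,0) = 1$ the constant term $Q(0,0)$ and the leading coefficient $\rQ(0,0)$ of $Q$ are integers whose product is $1$, so each is $\pm 1$ and they share a sign. If both are negative, replace $Q$ by $-Q$; since $(-Q)^* = -\rQ$ we have $(-Q)(-Q)^* = Q\rQ = P\rP$, so the hypothesis is preserved and now $Q(0,0) = 1$. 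Applying Lemma~\ref{lemma:q_generating_oox} concludes that this (possibly sign-adjusted) $Q$ is a generating polynomial, hence $Q$ or $-Q$ is a generating polynomial.

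The main obstacle is the degree-matching step: we cannot a priori assume the $x$- or $y$-degree of $Q$ equals that of $P$, and only by exploiting the $(x,y)\mapsto(1/x,1/y)$ symmetry intrinsic to the reciprocal identity (together with $P\rP \neq 0$) can we pin down $\deg_x Q$ and $\deg_y Q$ and thereby pass cleanly to the Laurent-polynomial equation that Lemma~\ref{lemma:q_generating_oox} requires.
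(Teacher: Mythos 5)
Your proof is correct. The paper omits a proof of this lemma, saying only that it is ``almost identical'' to the proof of Lemma~\ref{lemma:q_generating_oox}, which directly verifies properties G1--G3 from the hypothesis; you instead give a genuine reduction to Lemma~\ref{lemma:q_generating_oox}, which is a cleaner route. The two nontrivial ingredients you supply --- the degree-matching step (substitute $(x,y)\mapsto(1/x,1/y)$ into $P\rP=Q\rQ$ and cancel the nonzero polynomial $P\rP$ to pin down $\deg_x Q$ and $\deg_y Q$) and the sign-adjustment (use $Q(0,0)\rQ(0,0)=1$ and $(-Q)^*=-\rQ$) --- are exactly what is needed to pass from the reciprocal-polynomial identity to the Laurent-polynomial identity, and both are carried out carefully. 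Your approach buys you reuse of Lemma~\ref{lemma:q_generating_oox} as a black box instead of repeating the evaluate-at-$(1,1)$ / compare-constant-terms argument; it also makes explicit the ``or its negation'' qualifier in the statement, which the counterpart proof hides behind the hypothesis $Q(0,0)>0$. The only slightly informal point is calling $\rQ(0,0)$ the ``leading coefficient of $Q$'' --- for a bivariate polynomial this deserves the clarification that you mean the coefficient of $x^{\deg_x Q}y^{\deg_y Q}$ --- but since $\rQ\in\ZZ[x,y]$ and $(Q\rQ)(0,0)=Q(0,0)\rQ(0,0)=1$, the conclusion $Q(0,0)=\pm1$ is immediate regardless of terminology.
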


\begin{thm}
\label{thm:generating_Es_R}
\[
\mathcal{P}_{\cnfset_\strs}=\{P(x,y)\in\ZZ[x,y]:P(x,y)>0, P(x,y)
P^*(x,y)=P_s(x,y)P_s^*(x,y)\}
\]
\end{thm}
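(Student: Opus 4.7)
The plan is to derive Theorem~\ref{thm:generating_Es_R} as a direct consequence of Lemmas~\ref{lemma:pxprx} and~\ref{lemma:q_is_gen2}, much as Theorem~\ref{thm:generating_Es} follows from Lemmas~\ref{lemma:pxpoox} and~\ref{lemma:q_generating_oox}. The positivity condition ``$P(x,y)>0$'' should be read (consistently with its $1/x$ counterpart) as selecting the sign so that $P$ has positive constant term, i.e.\ $P(0,0)>0$. This is needed because Lemma~\ref{lemma:q_is_gen2} only gives that $Q$ \emph{or its negation} is generating, and we must pin down which one.

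First I would prove the forward inclusion. Take any $\strt\in\cnfset_\strs$, so $\strt\sim\strs$. Its generating polynomial $P_\strt$ by property~\textbf{G2} has a constant term (the $i=0$ term $x^0y^0=1$), so $P_\strt(0,0)=1>0$. By Lemma~\ref{lemma:pxprx}, $P_\strt\rP_\strt=P_\strs\rP_\strs$, so $P_\strt$ lies in the right-hand set.

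Second I would prove the reverse inclusion. Suppose $P\in\ZZ[x,y]$ has positive constant term and satisfies $P\rP=P_\strs\rP_\strs$. Since $P_\strs$ is generating, Lemma~\ref{lemma:q_is_gen2} applies and yields that $P$ or $-P$ is generating. Any generating polynomial has constant term $1$, so the sign is pinned down by our hypothesis $P(0,0)>0$: $P$ itself is generating. Hence $P=P_\strt$ for a unique binary string $\strt$, and the equation $P_\strt\rP_\strt=P_\strs\rP_\strs$ together with Lemma~\ref{lemma:pxprx} gives $\strt\sim\strs$, so $P\in\mathcal{P}_{\cnfset_\strs}$.

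There is essentially no hard step once the two lemmas are in hand; the proof is a packaging exercise. The only point requiring care is the translation between the $P(1/x,1/y)$ formulation and the reciprocal formulation: by the definition $\rP(x,y)=x^{\deg_xP}y^{\deg_yP}P(1/x,1/y)$, the equation $P(x,y)P(1/x,1/y)=P_\strs(x,y)P_\strs(1/x,1/y)$ is equivalent to $P\rP=P_\strs\rP_\strs$ only when the $x$- and $y$-degrees of $P$ and $P_\strs$ coincide, and this matching of degrees can be read off from property~\textbf{G2} applied to both generating polynomials. Once this bookkeeping is noted, the two formulations are interchangeable and the theorem is immediate.
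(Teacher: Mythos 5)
Your proof is correct and takes exactly the route the paper has in mind: Theorem~\ref{thm:generating_Es_R} is a direct repackaging of Lemmas~\ref{lemma:pxprx} and~\ref{lemma:q_is_gen2}, which is why the paper omits its proof as ``almost identical'' to the $1/x$ counterpart. Your reading of the misprinted condition as $P(0,0)>0$, and the use of the ``or its negation'' clause in Lemma~\ref{lemma:q_is_gen2} together with that positivity to fix the sign, are both precisely the intended moves.
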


So far, we studied products of generating polynomials and their reciprocals.
The next lemma, addresses the factors of generating polynomials.

\begin{lem}
\label{lemma:q_is_arb}
 $\strs\sim\strt$ if and only if for some $A,B\in\ZZ[x,y]$,
\[
P_\strs = A B
\qquad\text{and}\qquad
P_\strt = A \rB .
\]
\end{lem}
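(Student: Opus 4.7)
The plan is to apply Lemma~\ref{lemma:pxprx}, which equates $\strs\sim\strt$ with the polynomial identity $P_\strs\rP_\strs=P_\strt\rP_\strt$, and then prove the two directions separately.

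For the ``if'' direction, I would suppose $P_\strs=AB$ and $P_\strt=A\rB$ and compute directly. Since $P_\strs$ is a generating polynomial, its constant term is nonzero, so neither $A$ nor $B$ is divisible by $x$ or $y$. Property R1 then gives $(\rB)^{*}=B$, and R2 gives $\rP_\strs=(AB)^{*}=\rA\rB$ and $\rP_\strt=(A\rB)^{*}=\rA(\rB)^{*}=\rA B$, so
\[
P_\strt\rP_\strt=A\rB\cdot\rA B=AB\cdot\rA\rB=P_\strs\rP_\strs,
\]
and Lemma~\ref{lemma:pxprx} yields $\strs\sim\strt$.

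For the ``only if'' direction, I would assume $\strs\sim\strt$, so $P_\strs\rP_\strs=P_\strt\rP_\strt$ in the UFD $\ZZ[x,y]$. The idea is to set $A=\gcd(P_\strs,P_\strt)$, normalized so that $A(1,1)>0$, and take $B=P_\strs/A$ and $B'=P_\strt/A$, which are coprime. Dividing the equicomposability identity by $A\rA$ yields $B\rB=B'\rB'$, and coprimality forces $B'\mid\rB$ and $B\mid\rB'$, so $\rB=uB'$ and $\rB'=vB$ for some $u,v\in\ZZ[x,y]$. Taking the reciprocal of the first equation, using $(\rB)^{*}=B$ (since $B$ inherits the absence of $x,y$ factors from $P_\strs$), and substituting the second equation yields $\ru\,v=1$, so $\ru$ and $v$ are units in $\ZZ[x,y]$, that is $\ru,v\in\{\pm1\}$. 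Unwinding the definition of reciprocal, $\ru=\pm1$ forces $u=\pm x^ay^b$ for some $a,b\ge 0$; but $\rB$ and $B'$ both have no $x,y$ factors, so $a=b=0$ and $u\in\{\pm1\}$. To rule out $u=-1$, I evaluate $\rB=uB'$ at $(1,1)$: both $B(1,1)$ and $B'(1,1)$ equal $(n+1)/A(1,1)>0$, where $n+1$ is the common length of $\strs$ and $\strt$ plus one, forcing $u=1$. Hence $\rB=B'$ and $P_\strt=A\rB$.

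The main obstacle will be the sign and monomial bookkeeping in the ``only if'' direction, in particular eliminating the a priori possibility that $u$ contains nontrivial $x,y$ monomial factors or a minus sign. These are resolved by combining three facts already available: generating polynomials have nonzero constant term (so neither they nor their divisors contain $x$ or $y$ factors), reciprocation preserves this property, and the only units in the UFD $\ZZ[x,y]$ are $\pm1$; together with evaluation at $(1,1)$, they pin down the correct sign.
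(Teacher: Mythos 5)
Your proof takes essentially the same route as the paper: both directions are reduced to Lemma~\ref{lemma:pxprx}, and the ``only if'' direction sets $A=\gcd{P_\strs}{P_\strt}$, divides $P_\strs\rP_\strs=P_\strt\rP_\strt$ by $A\rA$ to obtain $B\rB=B'\rB'$, and exploits coprimality of $B$ and $B'$. The endgame differs only in flavor: the paper pins down $\rB=B'$ by matching total degrees and constant terms of the four cofactors, whereas you run a unit/sign calculation and then evaluate at $(1,1)$; these are equivalent in spirit and in difficulty.

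One step is asserted without justification: to eliminate the monomial factor $x^ay^b$ from $u$ you say ``$\rB$ and $B'$ both have no $x,y$ factors.'' For $B'$ this is clear since $B'\mid P_\strt$ and $P_\strt(0,0)=1$, but ``$B$ inherits the absence of $x,y$ factors from $P_\strs$'' is a fact about $B$, not about $\rB$: the constant term of $\rB$ is the coefficient of $x^{\deg_x B}\,y^{\deg_y B}$ in $B$, which is not automatically nonzero just because $B(0,0)\ne 0$. It \emph{is} nonzero here, because $\rP_\strs=\rA\,\rB=P_{\strs^*}$ is itself a generating polynomial with constant term $1$, but that observation needs to be made. (The paper's proof is comparably terse at the analogous point.) Alternatively, the monomial detour can be skipped: substituting $\rB=uB'$ and $\rB'=vB$ into $B\rB=B'\rB'$ and cancelling $BB'$ gives $u=v$, so $\ru u=1$ shows directly that $u$ is a unit, after which your $(1,1)$ evaluation fixes the sign.
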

\begin{proof}
Let $A = \gcd{P_\strs}{P_\strt}$, then $P_\strs = A B$ and $P_\strt = A C$ for relatively
prime $B$ and $C$. We show that $C=\rB$.

By Lemma~\ref{lemma:pxprx}, $P_\strs \rP_\strs = P_\strt \rP_\strt$.
By R2, $\rP_\strs = \rA \rB$ and $\rP_\strt=\rA\rC$,
hence,
$
B \rB = C \rC.
$
Since $B$ and $C$ are relatively prime, $C$ divides $\rB$.
But since $P_\strs$ and $P_\strt$ have the same total degree,
$B, \rB, C$, and $\rC$, all have the same total degree.
And since $P_\strs$ and $P_\strt$ are generating polynomials, the four
polynomials have constant term 1.
Hence $C=\rB$.

Conversely if $P_s=AB$ and $P_t=A\rB$ then it is easy to see
that $P_sP_s^*=(AB)(AB)^*=AA^*BB^*=AB^*(AB^*)^*=P_tP_t^*$.
\end{proof}

For example, we saw that the 8-bit strings $\strs=01001101$
and $\strt=01101001$ are confusable.
Indeed
$P_{\strs}(x,y)=(1+x+xy)(1+x^2y+x^3y^3)$
and 
$P_{\strt}(x,y)=(1+x+xy)(1+xy^2+x^3y^3)=
(1+x+xy)(1+x^2y+x^3y^3)^*$.
Here, $A$ is a generating polynomial, but this is not always the case.
The 23-bit strings 01000101010000100011001 and 01010100010000110010001
are also confusable,
but here, $A=1+y+xy+xy^2+x^4y^8+x^4y^9+x^4y^{10}+x^5y^{10}$
and $B=1+xy^3+x^3y^5$.
In the above examples, $A$ and $B$ are 0-1 polynomials, but that too
is not always the case. The string 01001001001 can be decomposed
as $(1+x+x^3y-x^4y^2+x^3y^3+x^5y^3+x^5y^4)(1+xy)$
where the first factor has a negative coefficient.
We will return to the structure of these factors in Section~\ref{S_lwr_bnd}.

The previous results imply a very simple expression for
$\cnfset_\strs$,
the set of strings equicomposable with $\strs$.
Consider the factorization of $P_s$ into irreducible factors (prime factorization)
\[
P_\strs=
P_1 P_2\cdots P_k
\]
The next theorem says that every product of the $P_i$'s or their reciprocals
is a generating polynomial and these generating polynomials
exactly correspond to all strings equicomposable with~$\strs$.
\begin{thm}
\label{thm:plycmp}
For every $\strs$,
\[
\mathcal{P}_{\cnfset_\strs}
=
\sets{\tilde P_1 \tilde P_2\cdots \tilde P_k:\textit{each $\tilde P_i$ \textit{is} $P_i$ \textit{or} $\rP_i$}}.
\]
\end{thm}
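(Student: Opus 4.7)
The plan is to prove both inclusions, with the easy one going via direct computation and the harder one relying on unique factorization in $\ZZ[x,y]$. Throughout, I will use that $P_\strs(0,0)=1$ by property \textbf{G2}, which forces each irreducible factor $P_i$ to satisfy $P_i(0,0)=\pm 1$ and in particular $x\nmid P_i$ and $y\nmid P_i$; by property \textbf{R1} this yields $P_i^{**}=P_i$ for every $i$.

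For the inclusion $\supseteq$, fix any choice $\tilde P_i\in\{P_i,\rP_i\}$ and set $Q=\tilde P_1\tilde P_2\cdots\tilde P_k$. By \textbf{R2}, $\rQ=\rtilde P_1\cdots \rtilde P_k$. The crucial observation is that $\tilde P_i\,\rtilde P_i=P_i\rP_i$ in both cases: if $\tilde P_i=P_i$ this is immediate, and if $\tilde P_i=\rP_i$ it follows from $P_i^{**}=P_i$. Hence $Q\rQ=\prod_i P_i\rP_i=P_\strs\rP_\strs$, and Lemma~\ref{lemma:q_is_gen2} yields that $Q$ (or $-Q$) is a generating polynomial $P_\strt$. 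By Lemma~\ref{lemma:pxprx} we then have $\strt\sim\strs$, so $Q\in\mathcal P_{\cnfset_\strs}$.

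For the harder inclusion $\subseteq$, let $P\in\mathcal P_{\cnfset_\strs}$, so $P\rP=P_\strs\rP_\strs=P_1\rP_1\cdots P_k\rP_k$. Factor $P=Q_1\cdots Q_m$ into irreducibles in the UFD $\ZZ[x,y]$; then $\rP=\rQ_1\cdots \rQ_m$ by \textbf{R2}, and unique factorization gives that the multisets $\{Q_1,\rQ_1,\dotsc,Q_m,\rQ_m\}$ and $\{P_1,\rP_1,\dotsc,P_k,\rP_k\}$ coincide up to units. For any irreducible $R$, let $\mu_P(R)$ denote its multiplicity in $P$; since $R\mid \rP$ iff $\rR\mid P$, we have $\mu_{\rP}(R)=\mu_P(\rR)$, so the matching of multisets reads
\[
\mu_P(R)+\mu_P(\rR)=\mu_{P_\strs}(R)+\mu_{P_\strs}(\rR)
\]
for every irreducible $R$. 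Pairing each $R$ with its reciprocal $\rR$, this identity says exactly that for each prime factor $P_i$ of $P_\strs$ (with its multiplicity), the polynomial $P$ contains either $P_i$ or $\rP_i$ at that slot, which is the desired form $P=\tilde P_1\cdots\tilde P_k$. One then verifies that units can be absorbed consistently because each $P_i$ has constant term $\pm 1$, and $P$ is fixed to have $P(0,0)>0$.

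The main obstacle is the bookkeeping in the $\subseteq$ direction when several of the $P_i$ coincide, or when some $P_i$ is self-reciprocal (i.e.\ $P_i=\rP_i$ up to a unit). The multiplicity identity above is what handles both cases uniformly: it makes no distinction between a symmetric factor, where both choices $\tilde P_i=P_i$ and $\tilde P_i=\rP_i$ collapse to the same polynomial, and a repeated factor, where the choices can be split in any proportion among the copies. A small amount of care is also needed to pin down the normalization of the $P_i$ (choosing representatives within each associate class) so that the statement $\tilde P_i\in\{P_i,\rP_i\}$ is literally an equality rather than up to $\pm 1$; this is harmless since the sign of the full product is forced by the condition $P(0,0)=1$.
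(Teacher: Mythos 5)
Your proof is correct, and for the $\subseteq$ direction it takes a genuinely different route from the paper's. The paper first invokes Lemma~\ref{lemma:q_is_arb} (itself proved via a gcd and degree-matching argument) to extract a decomposition $P_\strs=AB$, $P_\strt=A\rB$, and only afterwards splits $A$ and $B$ into prime factors; by contrast, you apply unique factorization in $\ZZ[x,y]$ directly to $P\rP=P_\strs\rP_\strs$ and carry out the explicit multiplicity bookkeeping $\mu_P(R)+\mu_P(R^*)=\mu_{P_\strs}(R)+\mu_{P_\strs}(R^*)$ over irreducibles $R$. This is more self-contained --- you never need the gcd lemma --- at the cost of a slightly more delicate pairing argument to absorb the self-reciprocal and repeated factors, which you handle correctly (the observation that each of the $a+b$ slots in a reciprocal pair can be filled by either $R$ or $R^*$ is exactly the point). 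For $\supseteq$ the two arguments are essentially equivalent: the paper collects the reciprocated factors into a single $B$ and writes $P=A\rB$, while you verify $\tilde P_i(\tilde P_i)^*=P_i\rP_i$ factor by factor, which requires $P_i^{**}=P_i$; this holds because each $P_i$ has constant term $\pm1$, hence $x\nmid P_i$ and $y\nmid P_i$, and your proof rightly makes this explicit. One remark: both proofs quietly rely on the associate/units normalization (the constant-term sign), which the paper glosses over; your closing paragraph addresses it more carefully than the paper does, by noting that the sign is pinned down by the $P(0,0)>0$ condition from Theorem~\ref{thm:generating_Es_R}.
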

\begin{proof}
If $P=\tilde P_1 \tilde P_2\cdots \tilde P_k$, where each
$\tilde P_i$ is either $P_i$ or $\rP_i$, let $A$ be the product of the
unmodified $P_i$'s and $B$ be the product of the $P_i$'s that are reciprocated.
Then $P_\strs=AB$ and $P=A\rB$.
Hence $P_\strs\rP_\strs=P\rP$, and by Lemma~\ref{lemma:q_is_gen2},
$P=P_\strt$ for some string $\strt$.
But then $P_\strs\rP_\strs=P_\strt\rP_\strt$, and by Lemma~\ref{lemma:pxprx}, $\strt\in\cnfset_\strs$.
 
Conversely, if $t\in\cnfset_\strs$, then by Lemma~\ref{lemma:q_is_arb},
$P_\strs=AB$ while $P_\strt=A\rB$.
The prime factorizations $A=P_1P_2\cdots P_j$ and $B=P_{j+1}\cdots P_k$ yield
the prime factorizations $P_\strs=P_1P_2\cdots P_k$ and
$P_\strt=P_1P_2\cdots P_j\rP_{j+1}\cdots \rP_k$,
hence $P_\strt$ can be written as $\tilde P_1\tilde P_2\cdots\tilde P_k$.
\end{proof}

The theorem provides a simple formula for the number of
strings equicomposable with any given string $\strs$.
A polynomial is \emph{palindromic} if it equals its reciprocal.
For example, $1+x+xy+xy^2+x^2y^2$, but not $1+x+xy$ $\left((1+x+xy)^*=1+y+xy\right)$.
Let $\nu_\strs$ be the number of non-palindromic terms in the prime
factorization of $P_\strs(x,y)$.
\begin{cor}
\label{cor:plycmp}
For every string $\strs$,
\[
|\cnfset_s|
\le
2^{\nu_\strs}.
\]
In particular, if $\nu_s=1$, then $\strs$ is reconstructable.\qedhere
\end{cor}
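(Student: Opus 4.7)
The proof plan is a direct accounting of the cardinality of the set described in Theorem~\ref{thm:plycmp}. Starting from
\[
\mathcal{P}_{\cnfset_\strs}=\sets{\tilde P_1\tilde P_2\cdots\tilde P_k:\text{each }\tilde P_i\text{ is }P_i\text{ or }\rP_i},
\]
I would first observe that the map $\strt\mapsto P_\strt$ from $\cnfset_\strs$ to $\mathcal{P}_{\cnfset_\strs}$ is a bijection, so $|\cnfset_\strs|=|\mathcal{P}_{\cnfset_\strs}|$. Then I would bound the right-hand side by the number of sequences $(\tilde P_1,\ldots,\tilde P_k)$ that give pairwise distinct products.

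The key observation is that if $P_i$ is palindromic, i.e.\ $P_i=\rP_i$, then the two ``choices'' at position $i$ produce the same factor, so they do not contribute to the multiplicity of products. Formally, the product $\tilde P_1\cdots\tilde P_k$ depends only on the choices made at the $\nu_\strs$ non-palindromic positions. Hence there are at most $2^{\nu_\strs}$ distinct products, and therefore
\[
|\cnfset_\strs|=|\mathcal{P}_{\cnfset_\strs}|\le 2^{\nu_\strs},
\]
establishing the first assertion.

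For the second claim, assume $\nu_\strs=1$. Then $|\cnfset_\strs|\le 2$. Since $\strs\in\cnfset_\strs$ and $\strs^*\in\cnfset_\strs$ (because a string and its reversal are always equicomposable), the set $\cnfset_\strs$ consists of either just $\strs$ (which equals $\strs^*$ when $\strs$ is a palindrome) or exactly the two strings $\strs$ and $\strs^*$. In either case, every element of $\cnfset_\strs$ is $\strs$ or $\strs^*$, which by definition means $\strs$ is reconstructable.

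There is no substantial obstacle here; the content of the corollary is really already packaged inside Theorem~\ref{thm:plycmp}. The only subtlety worth flagging is that the bound $2^{\nu_\strs}$ need not be tight, since two distinct choice sequences at non-palindromic positions could in principle yield the same product; but this only helps the inequality, and the upper bound follows from the trivial counting argument above.
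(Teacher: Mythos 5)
Your argument for the inequality $|\cnfset_\strs|\le 2^{\nu_\strs}$ is fine and matches the paper's (terse) reasoning: palindromic factors give the same product under either choice, so only the $\nu_\strs$ non-palindromic positions can affect the product, giving at most $2^{\nu_\strs}$ distinct products and hence at most that many strings in $\cnfset_\strs$ via the bijection $\strt\mapsto P_\strt$.

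The second part, however, has a gap. You argue that $|\cnfset_\strs|\le 2$ and $\{\strs,\strs^*\}\subseteq\cnfset_\strs$ together force $\cnfset_\strs$ to be $\{\strs\}$ or $\{\strs,\strs^*\}$. That inference is valid only when $\strs\ne\strs^*$: in that case the containment already gives two distinct elements and the size bound closes the set. But when $\strs=\strs^*$, the containment only yields one element and the bound $|\cnfset_\strs|\le 2$ leaves room for a second string $\strt$ with $\strt\ne\strs=\strs^*$, which would make $\strs$ non-reconstructable; nothing in your counting excludes this. The cleaner route --- which is what the paper sketches --- is to read the set off Theorem~\ref{thm:plycmp} directly: with a single non-palindromic irreducible factor, say $P_1$, the only two products are $P_1P_2\cdots P_k=P_\strs$ and $\rP_1 P_2\cdots P_k=\rP_1\rP_2\cdots\rP_k=\rP_\strs=P_{\strs^*}$, so $\cnfset_\strs=\{\strs,\strs^*\}$ exactly. (This also shows, as a byproduct, that the problematic case $\strs=\strs^*$ with $\nu_\strs=1$ cannot occur, since it would force $\rP_1$ to equal one of the palindromic $P_j$ or $P_1$ itself, both impossible.) I'd suggest replacing the size-plus-containment argument with this direct identification.
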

\begin{proof}
The first part follows trivially from the theorem.
The second part follows as $\nu_s=0$ implies $\cnfset_s=\sets{\strs}$,
and $\nu_s=1$ implies $\cnfset_s=\sets{\strs,\strs^*}$.
\end{proof}
In Section~\ref{sz_cnf_set} we will return to this result and show that the
 inequality holds with equality.
\section{Cyclotomic polynomials}
\label{section:ccl_pln}

Theorem~\ref{thm:plycmp} characterizes equicomposability in terms of
prime factorization of generating polynomials.
Factoring bivariate polynomials is hard in general, but
the current analysis is simplified by the fact that any factorization of a generating polynomial $P(x,y)$
implies a factorization of the  univariate polynomial $P(x,x)$ obtained
by evaluating $P(x,y)$ at $y=x$.
From properties G1 and G2 of generating polynomials, if $P(x,y)$
generates an $n$-bit string, then
\[
P(x,x)
=
1+x+x^2+\cdots+x^n
=
\frac{x^{n+1}-1}{x-1}.
\]
Factorizations of 
$x^n-1$, and hence of $x^{n+1}-1$,
have been studied extensively and for completeness we present a 
small part of the literature that we use to analyze confusability.


The $n$ complex numbers $e^{\frac{2\pi i}n},e^{2\frac{2\pi i}n}\upto
e^{n\frac{2\pi i}n}=1$, whose $n$th powers are 1, are the $n$th
\emph{roots of unity}.
They are exactly the roots of $x^n-1$, and therefore 
\[
x^n-1
=
\prod_{\omega} (x-\omega),
\]
where the product ranges over all $n$th roots of unity.
The prime factorization of $x^n-1$ over $\ZZ[x]$ therefore
partitions these $n$ terms
into groups, each multiplying to an irreducible polynomial over $\ZZ[x]$.

For a positive integer $d$, a $d$th root of unity is \emph{primitive} if none of its smaller
positive powers is 1.
For example:
\[
\begin{array}{|c|c|c|c|c|c|c|}
\hline
d& 1 & 2 & 3 & 4 & 5 & 6\\
\hline
\text{primitive $d$th roots of unity} & 1 & -1 & e^{\pm2\pi i/3}& \pm
i & e^{2\pi ij/5}, j=1\upto 4 & e^{\pm2\pi i/6}\\
\hline
\end{array}.
\]
The $d$th \emph{cyclotomic polynomial} is 
\[
\Phi_d(x)
\ed
\prod_\omega (x-\omega)
\]
where the product ranges over all primitive $d$th roots of unity.
For example,
\begin{align*}
\Phi_1(x)&=x-1\\
\Phi_2(x)&=x+1\\
\Phi_3(x)&=(x-e^{2\pi i/3}) (x-e^{-2\pi i/3})=x^2+x+1\\
\Phi_4(x)&=(x-i)(x+i)=x^2+1\\
\Phi_5(x)&=\prod_{j=1}^4(x-e^{\frac{2\pi i j}5})=x^4+x^3+x^2+x+1\\
\Phi_6(x)&=(x-e^{2\pi i/6}) (x-e^{-2\pi i/6})=x^2-x+1.
\end{align*}
While we will not use this fact, note that the degree of $\Phi_d(x)$ is 
the number of integers between 1 and $d$ that are relatively prime with $d$.

Every $n$th root of unity is a primitive $d$th root of unity for some
$1\le d\le n$, and it is easy to see that $d\mid n$.
Hence, for every $n$,
\begin{equation}
\label{eqn:xtnmo_ccl_fct}
x^n-1=\prod\limits_{d|n}\Phi_d(x).
\end{equation}
For example,
\begin{align*}
x^2-1&=(x-1)(x+1)\\
x^3-1&=(x-1)(x^2+x+1)\\
x^4-1&=(x-1)(x+1)(x^2+1)\\
x^5-1&=(x-1)(x^4+x^3+x^2+1)\\
x^6-1&=(x-1)(x+1)(x^2+x+1)(x^2-x+1).
\end{align*}

It is easy to see that Cyclotomic polynomials have integer
coefficients.
Comparing coefficients shows that if
$a(x),b(x)\in\QQ[x]$ and $a(x)b(x)\in\ZZ[x]$,
then $a(x), b(x)\in\ZZ[x]$.
By induction on $n$, assume that for all $d< n$ with $d\mid n$,
$\Phi_d(x)\in\ZZ[x]$.
Then~\eqref{eqn:xtnmo_ccl_fct} implies that
\[
\Phi_n(x)=\frac{x^n-1}{\prod\limits_{d<n, d|n}\Phi_d(x),}
\]
and since different cyclotomic polynomials are relatively prime
(no two cyclotomic polynomials share a root),
$\Phi_n(x)\in\QQ[x]$ and by the fact mentioned above, is $\in\ZZ[x]$.
Gauss showed, \eg~\cite{L02}, that the cyclotomic polynomials are irreducible.
The irreducibility proof of general cyclotomic polynomials is somewhat involved.
However, we will mostly need to factor $x^p-1$ and $x^{2p}-1$ for
a prime $p$. Their factorization require the irreducibility of
$\Phi_p(x)$ and $\Phi_{2p}(x)$, which follow easily from~\emph{Eisenstein's Criterion}.

\begin{lem}[Eisenstein's Criterion]
Let $P(x)=a_nx^n+a_{n-1}x^{n-1}+\dotsb+a_0$.
If some prime $p\mid a_0,a_1\upto a_{n-1}$ but $p\nmid a_n$
and $p^2\nmid a_0$, then $P(x)$ is irreducible over $\ZZ$.
\end{lem}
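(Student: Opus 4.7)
The plan is to argue by contradiction: suppose $P(x)=Q(x)R(x)$ for some $Q,R\in\ZZ[x]$ with $\deg Q=s\ge 1$ and $\deg R=t\ge 1$, and derive the contradiction $p^2\mid a_0$. Write $Q(x)=b_sx^s+\cdots+b_0$ and $R(x)=c_tx^t+\cdots+c_0$ with $s+t=n$ and $b_sc_t=a_n$.

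The key step is to reduce the equation $P=QR$ modulo $p$ and work in the polynomial ring $\mathbb{F}_p[x]$. By the hypothesis that $p$ divides $a_0,a_1,\ldots,a_{n-1}$ but not $a_n$, the reduction $\bar P(x)\in\mathbb{F}_p[x]$ equals $\bar a_n\,x^n$. Hence
\[
\bar Q(x)\,\bar R(x)=\bar a_n\,x^n \quad\text{in }\mathbb{F}_p[x].
\]
Since $p\nmid a_n=b_sc_t$, neither $b_s$ nor $c_t$ is divisible by $p$, so the degrees do not drop under reduction: $\deg\bar Q=s$ and $\deg\bar R=t$. Now $\mathbb{F}_p[x]$ is a unique factorization domain in which $x$ is prime, so the only factorizations of $\bar a_n x^n$ into a degree-$s$ and a degree-$t$ factor are of the form $\bar Q(x)=\bar b_s x^s$ and $\bar R(x)=\bar c_t x^t$.

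Translating back, every coefficient of $Q$ below the leading one is divisible by $p$, and the same holds for $R$; in particular $p\mid b_0$ and $p\mid c_0$. Therefore $p^2\mid b_0c_0=a_0$, contradicting the assumption $p^2\nmid a_0$. The only delicate point I anticipate is justifying that degrees are preserved under the reduction map, which follows immediately from $p\nmid a_n$; once that is in place, the unique-factorization argument in $\mathbb{F}_p[x]$ forces the constant terms of both factors to vanish mod $p$, and the contradiction is essentially automatic.
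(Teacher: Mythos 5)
Your proof is correct, but it follows a genuinely different route from the paper's. The paper gives the classical elementary argument: it picks the smallest index $i$ with $p\nmid b_i$, notes $p$ divides exactly one of $b_0,c_0$ (using $p^2\nmid a_0$), and derives a contradiction by inspecting the single convolution identity $a_i=b_ic_0+b_{i-1}c_1+\dotsb+b_0c_i$. You instead reduce the whole factorization modulo $p$, observe $\bar Q\bar R=\bar a_n x^n$ in $\mathbb{F}_p[x]$, and invoke unique factorization (with $x$ prime and degrees preserved because $p\nmid b_sc_t$) to conclude both $\bar Q$ and $\bar R$ are monomials, so $p\mid b_0$ and $p\mid c_0$, giving $p^2\mid a_0$. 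The paper's version is self-contained, using nothing beyond divisibility of integers; yours is shorter and more conceptual but leans on the (standard) fact that $\mathbb{F}_p[x]$ is a UFD. Both establish the same statement; the choice is a matter of taste and available background.
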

\begin{proof}
Suppose towards a contradiction that 
$P(x)=(b_kx^k+\dotsb+b_0)(c_\ell x^\ell+\dotsb+c_0)$
where $k,\ell\ge1$ and $b_k,c_\ell\ne0$.
Since $p$ divides $a_0=b_0c_0$ but $p^2$ does not,
$p$ divides exactly one of $b_0$ and $c_0$, without loss of generality, $b_0$.
Since $p$ does not divide $a_n=b_kc_\ell$, there is a smallest index
$i<k<n$ such that $p$ does not divide $b_i$.
It follows that $p\mid a_i,b_0,b_1\upto b_{i-1}$ and $p\nmid c_0,b_i$,
contradicting $a_i=b_ic_0+b_{i-1}c_1+\dotsb+b_0c_i$.
\end{proof}
Eisenstein Criterion's best-known application is to the
irreducibility of the cyclotomic polynomial $\Phi_p(x)$ for a prime $p$.
\begin{cor}
\label{cor:pmo_ird}
For every prime $p$, 
\[
\Phi_p(x)
=1+x+x^2+\dotsb+x^{p-1}
\]
is irreducible over $\ZZ$. 
\end{cor}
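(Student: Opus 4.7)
The plan is to apply Eisenstein's Criterion, but not directly to $\Phi_p(x)$ itself, since its coefficients are all $1$ and no prime divides them. Instead, I would exploit the fact that irreducibility over $\ZZ[x]$ is preserved under the substitution $x \mapsto x+1$ (which is a ring automorphism of $\ZZ[x]$), and verify Eisenstein's hypotheses for the shifted polynomial $\Phi_p(x+1)$.

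First, using $\Phi_p(x) = (x^p-1)/(x-1)$, I would write
\[
\Phi_p(x+1) = \frac{(x+1)^p - 1}{x}.
\]
Expanding by the binomial theorem and dividing by $x$ gives
\[
\Phi_p(x+1) = x^{p-1} + \binom{p}{p-1} x^{p-2} + \binom{p}{p-2} x^{p-3} + \dotsb + \binom{p}{2} x + \binom{p}{1}.
\]
Next I would observe that the leading coefficient is $1$ (so $p \nmid a_{p-1}$), that the constant term is $\binom{p}{1} = p$ (so $p \mid a_0$ but $p^2 \nmid a_0$), and that for every $1 \le k \le p-1$ the coefficient $\binom{p}{k}$ is divisible by $p$. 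This last fact follows from the identity $k \binom{p}{k} = p \binom{p-1}{k-1}$ together with $\gcd(k, p) = 1$ for $1 \le k \le p-1$ since $p$ is prime.

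With all Eisenstein hypotheses verified for the prime $p$, the lemma gives that $\Phi_p(x+1)$ is irreducible over $\ZZ$. Finally, I would note that if $\Phi_p(x) = A(x) B(x)$ were a nontrivial factorization in $\ZZ[x]$, then $\Phi_p(x+1) = A(x+1) B(x+1)$ would be a nontrivial factorization in $\ZZ[x]$, contradicting the Eisenstein conclusion; hence $\Phi_p(x)$ itself is irreducible. There is no real obstacle here — the only conceptual move is spotting that Eisenstein applies after the shift, and the only computational point that needs care is the divisibility $p \mid \binom{p}{k}$ for $1 \le k \le p-1$.
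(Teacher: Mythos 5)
Your proof is correct and follows essentially the same route as the paper: shift $x \mapsto x+1$, expand $\Phi_p(x+1) = \bigl((x+1)^p - 1\bigr)/x$, and apply Eisenstein's Criterion at the prime $p$. The only difference is that you spell out the divisibility $p \mid \binom{p}{k}$ for $1 \le k \le p-1$ via the identity $k\binom{p}{k} = p\binom{p-1}{k-1}$, which the paper simply asserts.
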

\begin{proof}
Substituting $x$ by $x+1$,
\[
\Phi_p(x+1)
=\frac{(x+1)^p-1}{x}
=
\binom p1+\binom p2x+\dotsb+\binom pp x^{p-1}.
\]
Since $p$ is prime, it divides $\binom p1, \binom p2\upto\binom p{p-1}$, but not $\binom pp$.
Therefore $p$ divides all coefficients except the leading.
Furthermore, $p^2$ does not divide the constant coefficient $\binom p1$.
By Eisenstein's Criterion, $\Phi_p(x+1)$ and therefore $\Phi_p(x)$ are irreducible over $\ZZ$.
\end{proof}
Note that $\Phi_{2p}(x)=\Phi_p(-x)$ for odd primes $p$, 
hence the irreducibility of $\Phi_{p}(x)$ suffices for our purposes.


\begin{lem}
\label{lemma:ccl_pln}
$\Phi_d(x)$ is palindromic for all $d\ge 2$.
\end{lem}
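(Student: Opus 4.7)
My plan is to compare $\Phi_d(x)$ to its reciprocal $x^{\phi(d)}\Phi_d(1/x)$ directly from the root-product definition $\Phi_d(x)=\prod_\omega(x-\omega)$, using two observations: that the set of primitive $d$th roots of unity is closed under the inversion map $\omega\mapsto\omega^{-1}$, and that $\Phi_d(0)=1$ whenever $d\ge 2$.

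First I would establish the inversion symmetry. If $\omega$ has multiplicative order exactly $d$, then so does $\omega^{-1}=\bar\omega$, so $\omega\mapsto\omega^{-1}$ is a bijection on the set of primitive $d$th roots of unity. From this I would compute
\[
x^{\phi(d)}\Phi_d(1/x)
=\prod_\omega(1-\omega x)
=\prod_\omega(-\omega)\cdot\prod_\omega(x-\omega^{-1})
=\Phi_d(0)\cdot\Phi_d(x),
\]
where the first equality multiplies $x$ into each factor, the second factors $(-\omega)$ from each linear factor, and the third applies the inversion bijection together with the identity $\Phi_d(0)=\prod_\omega(-\omega)$.

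Next I would verify that $\Phi_d(0)=1$ for every $d\ge 2$ by induction on $d$ using the product formula $x^n-1=\prod_{e\mid n}\Phi_e(x)$ from equation~\eqref{eqn:xtnmo_ccl_fct}. Setting $x=0$ gives $-1=\prod_{e\mid n}\Phi_e(0)$. The base case $\Phi_1(0)=-1$ is immediate, and for $d\ge 2$ I would apply the induction hypothesis to all proper divisors $e$ of $d$ with $e\ge 2$ (each contributes $1$) together with $\Phi_1(0)=-1$, so the identity at $n=d$ forces $\Phi_d(0)=1$.

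Combining the two pieces yields $x^{\phi(d)}\Phi_d(1/x)=\Phi_d(x)$, which is exactly the statement that $\Phi_d$ equals its own reciprocal, i.e., $\Phi_d$ is palindromic. I do not anticipate a real obstacle here; the only subtlety is keeping the sign bookkeeping straight in the factorization $\prod_\omega(1-\omega x)=\Phi_d(0)\prod_\omega(x-\omega^{-1})$, and the induction for $\Phi_d(0)=1$ is clean because $\Phi_1$ is the unique cyclotomic factor contributing a nontrivial sign.
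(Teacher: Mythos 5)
Your proof is correct and takes essentially the same approach as the paper: both hinge on the fact that the set of primitive $d$th roots of unity is closed under $\omega\mapsto\omega^{-1}$, combined with $\Phi_d(0)=1$ for $d\ge 2$, to conclude $\Phi_d^*=\Phi_d$. The only minor differences are that you derive the inversion closure directly from preservation of multiplicative order (rather than via the unit circle and complex conjugation as the paper does), and you actually prove $\Phi_d(0)=1$ by induction on $x^n-1=\prod_{e\mid n}\Phi_e(x)$, a detail the paper asserts without proof.
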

\begin{proof}
For any polynomial $P(x)$, the roots of 
$P^*(x)$ over $\CC$ are exactly the reciprocals of the roots of $P(x)$, and
the constant coefficient of $P^*(x)$ is the leading coefficient of
$P(x)$.
Since the reciprocal polynomial always exists, and the roots and constant coefficient specify a polynomial,
these two conditions are also sufficient for a polynomial to be
the reciprocal of $P(x)$.

For every $d$, the roots of $\Phi_d(x)$ are on the unit circle, hence
the reciprocal of a root is also its complex conjugate.
Furthermore, since $\Phi_d(x)$ has real coefficients, every root appears
with its complex conjugate, hence the reciprocals
of the roots are roots too.
Finally, for $d\ge 2$, $\Phi_d(x)$ has the same leading and constant
coefficient, namely 1,
hence it is its own reciprocal.
\end{proof}

\section{The size of confusable sets}
\label{sz_cnf_set}
The next lemma shows that in any factorization of a generating polynomial,
no two terms are reciprocal of each other. Our proof is different
and simpler than the one used in~\cite{SSL90} for the general turnpike problem.
\begin{lem}
Generating polynomials do not have two mutually reciprocal factors.
\end{lem}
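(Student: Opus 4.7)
The plan is a contradiction argument exploiting the specialization $y = x$. Assume that the prime factorization of $P_s$ contains two distinct factors $B$ and $\rB$ that are bivariate reciprocals of each other; write $P_s = B\,\rB\,U$ for some $U \in \ZZ[x,y]$. The goal is to show that $B$ is forced to be constant, contradicting its status as a non-trivial prime factor. Substituting $y = x$, $P_s$ becomes $1 + x + \cdots + x^n = \prod_{d \mid n+1,\, d > 1}\Phi_d(x)$, which is squarefree in $\ZZ[x]$ since the cyclotomic polynomials are distinct irreducibles.

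A preliminary step would establish that every factor $F$ of a generating polynomial satisfies $\deg_{\rm total} F = \deg_x F + \deg_y F$. This holds because $\deg_{\rm total} P_s = n = \deg_x P_s + \deg_y P_s$ forces the coordinate-wise inequalities $\deg_{\rm total} F_i \le \deg_x F_i + \deg_y F_i$ to be equalities in any factorization. Consequently each factor $F$ contains a ``corner'' monomial $x^{\deg_x F} y^{\deg_y F}$ with nonzero coefficient, and $\deg F(x,x) = \deg_x F + \deg_y F$. Applied to $B$, this pins down $\rB(x,x) = x^{\deg B(x,x)}\,B(1/x,1/x)$, which is exactly the univariate reversal of $B(x,x)$.

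Since $B(x,x)$ divides $\prod_{d \ge 2}\Phi_d(x)$ in $\ZZ[x]$, by unique factorization it is, up to sign, a product of distinct $\Phi_d$ with $d \ge 2$. Each such $\Phi_d$ is palindromic by Lemma~\ref{lemma:ccl_pln}, so $B(x,x)$ is also palindromic, giving $\rB(x,x) = B(x,x)$. But then $B(x,x)^2$ divides the squarefree polynomial $P_s(x,x)$, forcing $B(x,x)$ to be a unit $\pm 1$. Combined with $\deg B(x,x) = \deg_x B + \deg_y B$, this yields $\deg_x B = \deg_y B = 0$, so $B$ is a constant, contradicting that $B$ is a non-constant prime factor.

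The main obstacle I anticipate is the careful bookkeeping linking the bivariate reciprocal $\rB$ (normalized by $\deg_x B$ and $\deg_y B$) with the univariate reversal of $B(x,x)$ (normalized by $\deg B(x,x)$). These two operations coincide only because a factor of a generating polynomial must carry a corner monomial with unit coefficient, so the ``factor-of-generating'' structure is doing real work in the reduction to the univariate setting. Once that identification is clean, the palindromicity of the cyclotomic polynomials and the squarefreeness of $1 + x + \cdots + x^n$ collapse $B$ to a constant essentially automatically.
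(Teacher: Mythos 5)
Your argument is correct and takes essentially the same route as the paper's proof: specialize to $y=x$, identify the bivariate reciprocal of a factor with the univariate reversal of its specialization, and derive a contradiction from the squarefreeness of $1+x+\cdots+x^n$. The paper compresses the key step into the bare assertion $A(x,x)A^*(x,x)=A^2(x,x)$; your write-up supplies the justification that this really requires — the corner-monomial/degree-additivity property of factors of generating polynomials, which makes $A^*(x,x)$ the univariate reversal of $A(x,x)$, and the palindromicity of the cyclotomic factors (Lemma~\ref{lemma:ccl_pln}) — and it is worth noting that the relevant single-top-monomial fact (Lemma~\ref{lemma:sng_max_fct}) is only stated later in the paper, so spelling this out is genuinely useful rather than redundant.
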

\begin{proof}
Let $P(x,y)$ generate an $n$-bit string.
If $A(x,y)A^*(x,y)$ divides $P(x,y)$, then
$A(x,x)A^*(x,x)$ $=A^2(x,x)$ divides $P(x,x)=1+x+x^2+\dotsb+x^n$.
Clearly, $A^2(x,x)$ has double roots over $\CC$. The roots
of $1+x+\cdots+x^n$ are all distinct,
hence it has no multiple roots.
\end{proof}
If follows that Corollary~\ref{cor:plycmp} holds with equality.
\begin{cor}
\label{cor:cnf_non_pln_eql}
For every string $\strs$,
\[
|\cnfset_s|
=
2^{\nu_\strs},
\]
and $\strs$ is reconstructable iff $\nu_s\le 1$.$\hfill$$\qedsymbol$
\end{cor}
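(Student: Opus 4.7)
The plan is to upgrade Corollary~\ref{cor:plycmp}'s inequality to an equality by showing that the $2^{\nu_s}$ selections in Theorem~\ref{thm:plycmp} yield pairwise distinct generating polynomials. Once this is in place, the reconstructability claim is immediate: $\strs^*\in\cnfset_\strs$ always, so $\strs$ is reconstructable iff $|\cnfset_\strs|\le 2$ iff $\nu_s\le 1$.

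The main preliminary step is to show that $P_s$ is squarefree in $\ZZ[x,y]$. If $A^2\mid P_s$, then $A(x,x)^2\mid P_s(x,x)=1+x+\cdots+x^n=\prod_{d\mid n+1,\,d\ge 2}\Phi_d(x)$, which is squarefree since distinct cyclotomics are coprime irreducibles, so $A(x,x)$ must be a unit. A quick degree tally --- writing $P_s=AB$ and using $\deg_{\mathrm{tot}}P_s=\deg_x P_s+\deg_y P_s=n$ to force $\deg_{\mathrm{tot}}A=\deg_x A+\deg_y A$ and hence $\deg A(x,x)=\deg_{\mathrm{tot}}A$ --- then forces $A$ itself to be constant. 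Combined with the preceding lemma, this shows the $2k$ polynomials $P_1,\rP_1,\ldots,P_k,\rP_k$ are distinct irreducibles except at palindromic indices, where $P_i=\rP_i$.

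With this setup, suppose two selections $(\tilde P_i^{(1)})$ and $(\tilde P_i^{(2)})$ produce the same polynomial. By unique factorization in $\ZZ[x,y]$, the multisets $\{\tilde P_i^{(1)}\}$ and $\{\tilde P_i^{(2)}\}$ coincide. At each non-palindromic index $j$, neither $P_j$ nor $\rP_j$ occurs anywhere else in either list, so the two selections must agree at $j$; hence they coincide everywhere, and $|\cnfset_\strs|=2^{\nu_s}$. The delicate step is the squarefreeness: the preceding lemma directly excludes only mutually reciprocal factors $A\rA\mid P_s$, not $A^2\mid P_s$ for non-palindromic $A$, and bridging this gap requires both the cyclotomic factorization of $P_s(x,x)$ and the degree-matching argument to lift the conclusion back to $\ZZ[x,y]$.
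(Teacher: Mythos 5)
Your proof is correct, and it follows essentially the paper's route: both reduce to the squarefreeness of $P_\strs(x,x)=1+x+\cdots+x^n$ (a product of distinct cyclotomics) and lift back to $\ZZ[x,y]$ via the single-maximal-term property of factors of generating polynomials (Lemma~\ref{lemma:sng_max_fct}), which forces $\deg A(x,x)=\deg_{\mathrm{tot}}A$ and hence kills any nonconstant factor that evaluates to a unit at $y=x$. The one place where your write-up is more explicit than the paper's is worth noting: the paper's unnamed lemma in Section~\ref{sz_cnf_set} asserts only that no two \emph{mutually reciprocal} factors divide $P_\strs$, which rules out a coincidence $P_i=\rP_j$ but does not, as literally stated, rule out a repeated non-palindromic factor $P_i=P_j$ --- and that case would also collapse the count below $2^{\nu_\strs}$. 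You close it by proving outright that $P_\strs$ is squarefree in $\ZZ[x,y]$. No new machinery is needed (the paper's lemma proof already passes through squarefreeness of $P_\strs(x,x)$), but isolating the squarefreeness of $P_\strs$ itself makes the count argument via unique factorization airtight. Your degree tally ($\deg_{\mathrm{tot}}P_\strs=\deg_xP_\strs+\deg_yP_\strs=n$ forcing $\deg_{\mathrm{tot}}A=\deg_xA+\deg_yA$) is a valid alternative path to the same conclusion that Lemma~\ref{lemma:sng_max_fct} gives directly; either works. The reconstructability equivalence at the end ($\strs$ reconstructable iff $|\cnfset_\strs|\le2$ iff $\nu_\strs\le1$, using $\strs,\strs^*\in\cnfset_\strs$) is immediate and stated correctly.
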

For example, for $\strs=01001101$
the factorization of its generating polynomial into irreducible factors
is $P_{\strs}(x,y)=(1+x+xy)(1+x^2y+x^3y^3)$, 
 hence $\nu_s=2$. The four confusable strings
are $E_{01001101}=\{01001101, 01101001, 10110010, 10010110\}$.
\section{Unique reconstruction for prime-related lengths}
\label{S_unq_rcn}
In Section~\ref{S_smp_cnf}, we saw that whenever $n+1$ is a product of
two integers $\ge3$, there are confusable $n$-bit strings.
The remaining lengths are those where $n+1$ is a prime, twice a prime, or 8.
We use the polynomial representation to show that for all these
lengths reconstruction is always unique.

While factorization and irreducibility of $P(x,y)$ is related to that
of $P(x,x)$, the two are not in complete correspondence.
Irreducibility of $P(x,y)$ does not imply irreducibility of $P(x,x)$.
For example, $P(x,y)=2x^2-y^2=(\sqrt2 x+y)(\sqrt 2x-y)$
hence is irreducible over $\ZZ$, yet $P(x,x)=x^2$ is reducible.


Similarly, in the direction we need,
irreducibility of $P(x,x)$ does not generally imply irreducibility of
$P(x,y)$. For example, $P(x,y)=(x-y+1)(x+1)$ is reducible even
though $P(x,x)=x+1$ is irreducible.
Yet the next two lemmas show that for generating polynomials,
and more generally when $\deg P(x,y)=\deg P(x,x)$,
 this implication holds.


\begin{lem}
\label{lemma:sng_max_fct}
Any factor of a generating polynomial has a single term of highest total degree.
\end{lem}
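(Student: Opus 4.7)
The plan is to exploit the fact, guaranteed by property \textbf{G2}, that a generating polynomial $P$ has a \emph{single} monomial of each total degree $0,1,\ldots,n$; in particular, its top-degree part $P_{\text{top}}$ is a single monomial $x^{a_n}y^{b_n}$. I will then show that for any factorization $P=QR$ in $\ZZ[x,y]$, this forces the top-degree homogeneous component of $Q$ to be a monomial, which is exactly the statement of the lemma.

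First, define $Q_{\text{top}}$ (resp.\ $R_{\text{top}}$) to be the sum of all terms of $Q$ (resp.\ $R$) of maximum total degree $d_Q$ (resp.\ $d_R$). Then $Q_{\text{top}}$ and $R_{\text{top}}$ are homogeneous polynomials, and in the product $QR$ the top-degree part is exactly $Q_{\text{top}}R_{\text{top}}$, since any other product of a term from $Q$ with a term from $R$ has strictly smaller total degree. Therefore $P_{\text{top}} = Q_{\text{top}}R_{\text{top}}$, and since $P_{\text{top}}$ is the single monomial $x^{a_n}y^{b_n}$, I get the identity $Q_{\text{top}} R_{\text{top}} = x^{a_n}y^{b_n}$ in $\ZZ[x,y]$.

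Next, I would use the fact that $x$ and $y$ are prime elements of the unique factorization domain $\ZZ[x,y]$, and the only units are $\pm 1$, so any divisor of the monomial $x^{a_n}y^{b_n}$ is itself (up to sign) a monomial $x^iy^j$. Applied to $Q_{\text{top}}$, this gives $Q_{\text{top}}=\pm x^iy^j$ for some $i+j=d_Q$, i.e., $Q$ has exactly one term of maximum total degree. Alternatively, one can argue directly by homogeneity: any nonzero homogeneous bivariate polynomial splits over $\CC$ as a product of linear forms $\alpha x+\beta y$, and for $Q_{\text{top}}R_{\text{top}}$ to be a monomial each such form must vanish only at $x=0$ or $y=0$, forcing every factor to be a scalar multiple of $x$ or $y$.

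I do not anticipate a serious obstacle: the only thing to be careful about is ensuring that nothing cancels between $Q_{\text{top}}R_{\text{top}}$ and lower-degree cross products, which is ruled out by the definition of total degree. The essential idea is just that ``top-degree parts multiply,'' combined with the triviality that monomial factorizations have monomial factors.
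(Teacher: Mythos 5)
Your proposal is correct and takes essentially the same approach as the paper: both observe that the top-degree homogeneous parts multiply, so a factor's top-degree part divides the single monomial $x^{a_n}y^{b_n}$ and must itself be a monomial. You spell out the "divisors of monomials are monomials" step more explicitly than the paper does, but the argument is the same.
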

\begin{proof}
Let $\hat P(x,y)$ be the sum of the highest-total-degree terms in a polynomial $P(x,y)$.
For example, if $P(x,y)=x^3+x^2y^2+y^4$ then $\hat P(x,y)=x^2y^2+y^4$.
If $A(x,y)$ is a factor of $P(x,y)$, then $\hat A(x,y)$ divides $\hat P(x,y)$,
and if $P(x,y)$ is also generating, then $\hat P(x,y)$ consists of a
single term, hence so does its factor $\hat A(x,y)$.
\end{proof}

\begin{lem}
\label{lemma:fct_xy_fct_xx}
The number of terms in the prime factorization of a generating
polynomial $P(x,y)$ is at most the corresponding number for $P(x,x)$.
Hence if $P(x,x)$ is irreducible, so is $P(x,y)$.
\end{lem}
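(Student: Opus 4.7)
The plan is to take a prime factorization of $P(x,y)$, substitute $y=x$ to get a factorization of $P(x,x)$ of the same length, and then show that none of the resulting factors are units. The key input will be Lemma~\ref{lemma:sng_max_fct}, which guarantees that each factor of a generating polynomial has a unique term of maximum total degree; this term survives the substitution $y=x$ as a non-constant monomial, preventing the factor from collapsing to a constant.

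First I write $P(x,y)=\pm P_1(x,y)P_2(x,y)\cdots P_k(x,y)$, where each $P_i$ is an irreducible, non-unit factor over $\ZZ[x,y]$. Substituting $y=x$ gives
\[
P(x,x)=\pm P_1(x,x)P_2(x,x)\cdots P_k(x,x),
\]
which is a factorization of $P(x,x)$ in $\ZZ[x]$ into $k$ factors. If I can show that each $P_i(x,x)$ is non-constant, then the prime factorization of $P(x,x)$ contains at least $k$ terms, which is exactly the inequality claimed.

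To show $P_i(x,x)$ is non-constant, I invoke Lemma~\ref{lemma:sng_max_fct}: since $P_i$ divides the generating polynomial $P$, the top-total-degree part $\hat P_i(x,y)$ is a single monomial $c_i x^{a_i} y^{b_i}$. Because $P_i$ is not a unit in $\ZZ[x,y]$, its total degree $a_i+b_i$ is at least $1$. Under $y=x$, this monomial becomes $c_i x^{a_i+b_i}$, a term of positive degree in $x$. No lower-total-degree term of $P_i(x,y)$ can cancel it under $y=x$, because substituting $y=x$ cannot raise the $x$-degree of any monomial beyond its original total degree. Hence $\deg_x P_i(x,x)=a_i+b_i\ge 1$, so $P_i(x,x)$ is non-constant.

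For the second sentence of the lemma, if $P(x,x)$ is irreducible then its prime factorization has exactly one term, so by the bound just proved $k\le 1$. Since $P(x,y)$ is itself non-constant (it is a generating polynomial of an $n$-bit string with $n\ge 1$), we must have $k=1$, and therefore $P(x,y)$ is irreducible. The only delicate point in the whole argument is ruling out that some $P_i(x,x)$ degenerates to a constant under the specialization; Lemma~\ref{lemma:sng_max_fct} handles this cleanly because it forces each factor to inherit a single unambiguous top monomial from $P$.
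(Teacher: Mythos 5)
Your proof is correct and follows the same route as the paper: substitute $y=x$ in the prime factorization and invoke Lemma~\ref{lemma:sng_max_fct} to conclude each specialized factor is non-constant. The paper states this last step without elaboration, whereas you spell out why the unique top monomial survives the substitution (it alone attains the maximal $x$-degree after $y\mapsto x$); this is a useful clarification but not a different argument.
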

\begin{proof}
Let $A_1(x,y)A_2(x,y)\dotsb A_k(x,y)$ be the prime factorization of
$P(x,y)$.
Then $A_1(x,x)\allowbreak A_2(x,x)\allowbreak\dotsb\- A_k(x,x)$ is a factorization of $P(x,x)$
into polynomials that by Lemma~\ref{lemma:sng_max_fct} are non-constant.
\end{proof}

\begin{thm}
\label{thm:lng7}
All strings of length 7 are reconstructable.
\end{thm}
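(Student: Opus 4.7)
The plan is to show $\nu_\strs\le 1$ for every length-$7$ string $\strs$, which by Corollary~\ref{cor:cnf_non_pln_eql} is equivalent to reconstructability. The starting point is the cyclotomic factorization
\[
P_\strs(x,x)=\frac{x^8-1}{x-1}=\Phi_2(x)\Phi_4(x)\Phi_8(x)=(1+x)(1+x^2)(1+x^4),
\]
a product of exactly three irreducible polynomials over $\ZZ[x]$. Applying Lemma~\ref{lemma:fct_xy_fct_xx}, the bivariate prime factorization $P_\strs(x,y)=A_1 A_2 A_3$ has at most three factors (some possibly absent), and by Lemma~\ref{lemma:sng_max_fct} each $A_i$ has a unique top-degree monomial, so $\deg A_i=\deg A_i(x,x)$. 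Matching total degrees against the unique univariate factorization over $\ZZ[x]$, we may reorder so that $A_1(x,x)=1+x$, $A_2(x,x)=1+x^2$, $A_3(x,x)=1+x^4$.

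The degree-$1$ factor is then disposed of immediately. Writing $A_1(x,y)=1+\alpha x+\beta y$ with $\alpha+\beta=1$ (forced by $A_1(x,x)=1+x$), the single-top-term condition rules out the case $\alpha,\beta$ both nonzero, so $A_1\in\{1+x,\,1+y\}$. Both are palindromic, so $A_1$ never contributes to $\nu_\strs$.

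It remains to show that $A_2$ and $A_3$ cannot simultaneously be non-palindromic. I would attack this by parametrizing the bivariate lifts of $1+x^2$ and $1+x^4$: for $A_2$ the constraints (constant term $1$, total degree $2$, unique top-degree monomial, evaluation at $y=x$ yielding $1+x^2$) leave only finitely many structural shapes in each of the three choices $x^2,xy,y^2$ for the top term, modulo an integer parameter in the linear part; the same holds for $A_3$ with four shape choices and an analogous parameter family. One then verifies that imposing the 0-1 coefficient condition G1 on the product $A_1A_2A_3$ kills every pair where both $A_2$ and $A_3$ are non-palindromic, by tracking the parameters through the cross terms of the product.

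The main obstacle will be organizing this case analysis compactly, since the parameter family of non-palindromic lifts is a priori infinite; the key leverage is that G1 together with G2 (one monomial per total degree) leaves essentially no room for coefficient cancellation in the mid-degree terms of the product, which pins the linear-part parameters to $0$ whenever the top-term choice is not the self-reciprocal one $1+xy$, $1+x^2y^2$, $1+x^4$, etc. Should a fully polynomial-theoretic write-up prove unwieldy, a safety net is the finite computer verification already noted at the start of Section~\ref{S_smp_cnf}, which confirms reconstructability for all $2^7=128$ strings of length $7$ directly.
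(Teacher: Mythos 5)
Your setup --- reducing to $\nu_\strs\le 1$ via Corollary~\ref{cor:cnf_non_pln_eql}, factoring $P_\strs(x,x)=(1+x)(1+x^2)(1+x^4)$, invoking Lemmas~\ref{lemma:fct_xy_fct_xx} and~\ref{lemma:sng_max_fct}, and disposing of the degree-one factor --- mirrors the paper's Appendix~\ref{sec:appendixB} proof. The divergence is in the degree-2 and degree-4 pieces. The paper argues that a factor $f$ with $f(x,x)=1+x^2$ can have no linear terms (otherwise both $x$ and $y$ appear, which forces pure-$x$ and pure-$y$ monomials violating \textbf{G3}), so $f\in\{1+x^2,1+xy,1+y^2\}$, all palindromic; and for the degree-4 split it reuses the derivative argument of Theorem~\ref{thm2p}, Case~2 to conclude $\strs=\strt u\strt$ and extract the palindromic factor $1+x^ay^b$. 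You instead propose a direct parametric enumeration of bivariate lifts with a 0-1 bookkeeping pass on the product. That is a plausible route, but, as you acknowledge, the linear-part parameter family is a priori unbounded and you do not carry the computation through --- the degree-4 case in particular remains a sketch rather than a proof.

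There is also a structural gap in the reduction itself. You match degrees to conclude $A_1(x,x)=1+x$, $A_2(x,x)=1+x^2$, $A_3(x,x)=1+x^4$, but this presumes the bivariate prime factorization has exactly three factors. Lemma~\ref{lemma:fct_xy_fct_xx} only gives an upper bound; factors can merge. For instance, nothing yet rules out an irreducible $B$ of total degree $3$ with $B(x,x)=(1+x)(1+x^2)$ paired with a degree-4 factor --- in that event your parametrized $A_2$ never materializes, and the target statement ``$A_2$ and $A_3$ cannot both be non-palindromic'' says nothing about whether $B$ or its cofactor is palindromic, so $\nu_\strs\le1$ does not follow. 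The paper sidesteps this by arguing over every two-factor split $P_\strs=fg$ (not just the prime factorization) and then invoking Lemma~\ref{lemma:cnf_non_pln_fct}, which handles collapsed factorizations uniformly; you would need to do the same or treat the merged degree patterns explicitly. Your computer-check fallback is of course valid --- the paper cites exactly that search as its primary proof --- but the analytic portion of the proposal as written is incomplete.
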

\begin{proof}
A simple computer search proves the result, but we also provide a 
theoretical proof in Appendix~\ref{sec:appendixB}.
\end{proof}
\begin{thm}
\label{thm:prm}
All strings of length one less than a prime are reconstructable.
\end{thm}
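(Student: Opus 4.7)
The plan is to invoke, in sequence, the factorization correspondence between $P_\strs(x,y)$ and $P_\strs(x,x)$, the fact that $P_\strs(x,x)$ is a prime cyclotomic polynomial, and the confusability count from Corollary~\ref{cor:cnf_non_pln_eql}. Concretely, let $\strs$ be a binary string of length $n = p-1$ with $p$ prime. By properties \textbf{G1}--\textbf{G3} of generating polynomials,
\[
P_\strs(x,x) = 1 + x + x^2 + \cdots + x^{p-1} = \Phi_p(x),
\]
which is irreducible over $\ZZ[x]$ by Corollary~\ref{cor:pmo_ird}.

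Next I would apply Lemma~\ref{lemma:fct_xy_fct_xx}: since the number of factors in the prime factorization of $P_\strs(x,y)$ is at most the number of factors in the prime factorization of $P_\strs(x,x)$, and the latter is irreducible, the bivariate polynomial $P_\strs(x,y)$ must itself be irreducible. Thus the prime factorization $P_\strs = P_1 P_2 \cdots P_k$ from Theorem~\ref{thm:plycmp} has $k=1$, and therefore $\nu_\strs \le 1$ (it equals 0 if $P_\strs$ is palindromic and 1 otherwise).

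Finally, Corollary~\ref{cor:cnf_non_pln_eql} states that $\strs$ is reconstructable iff $\nu_\strs \le 1$, so $\strs$ is reconstructable. This argument applies uniformly to every length-$(p-1)$ binary string, completing the proof.

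There is essentially no obstacle, since all the heavy lifting has been done: the cyclotomic argument establishes the irreducibility of $P_\strs(x,x)$, and the lift from univariate to bivariate irreducibility via Lemma~\ref{lemma:fct_xy_fct_xx} is the key bridge. The only subtle point worth double-checking is that the hypotheses of Lemma~\ref{lemma:fct_xy_fct_xx} really apply here (i.e.\ that $P_\strs$ is a generating polynomial in the required sense and no factor collapses under $y\mapsto x$), but this is guaranteed by \textbf{G1}--\textbf{G3} together with Lemma~\ref{lemma:sng_max_fct}, which rules out any factor of $P_\strs(x,y)$ becoming constant upon setting $y=x$.
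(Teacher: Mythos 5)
Your proposal is correct and matches the paper's own proof step for step: identify $P_\strs(x,x)$ with the cyclotomic polynomial $\Phi_p$, use Corollary~\ref{cor:pmo_ird} for its irreducibility, lift to bivariate irreducibility via Lemma~\ref{lemma:fct_xy_fct_xx}, and conclude $\nu_\strs\le 1$. The only cosmetic difference is that you cite Corollary~\ref{cor:cnf_non_pln_eql} (the equality form) while the paper cites Corollary~\ref{cor:plycmp}; both give reconstructability from $\nu_\strs\le 1$, so this is immaterial.
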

\begin{proof}
If $\strs$ is $(p-1)$-bits long then by the properties of generating polynomials,
\[
P_\strs(x,x)=1+x+x^2+\ldots +x^{p-1}.
\]
By Corollary~\ref{cor:pmo_ird}, $P_\strs(x,x)$ is irreducible.
By Lemma~\ref{lemma:fct_xy_fct_xx}, $P_\strs(x,y)$ is irreducible.
By Corollary~\ref{cor:plycmp}, $s$ is reconstructable.
\end{proof}
For example, for $n=2$, both $P_{00}(x,y)=1+x+x^2$ and $P_{01}(x,y)=1+x+xy$, 
and their symmetric versions, $P_{10}(x,y)$ and $P_{11}(x,y)$, are irreducible.

To prove the result for twice a prime, we'll need the following simple lemma.
\begin{lem}
\label{lemma:cnf_non_pln_fct}
If $\strs$ is confusable, then $P_\strs=AB$ for some
non-palindromic polynomials $A$ and $B$.
\end{lem}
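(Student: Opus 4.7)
The plan is to leverage the exact count $|\cnfset_\strs| = 2^{\nu_\strs}$ from Corollary \ref{cor:cnf_non_pln_eql} together with the unique factorization of $P_\strs$ and the ``no two mutually reciprocal factors'' lemma from Section \ref{sz_cnf_set}.

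First, I would observe that confusability of $\strs$ means $\cnfset_\strs$ contains at least three distinct strings (namely $\strs$, $\strs^*$, and some $\strt \neq \strs,\strs^*$), so $|\cnfset_\strs| \geq 3$. By Corollary \ref{cor:cnf_non_pln_eql}, $2^{\nu_\strs} \geq 3$, forcing $\nu_\strs \geq 2$. Thus the prime factorization $P_\strs = P_1 P_2 \cdots P_k$ contains at least two non-palindromic irreducible factors; pick two distinct such factors $P_i$ and $P_j$.

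Second, I would set $A = P_i$ and $B = P_\strs/P_i = \prod_{\ell \neq i} P_\ell$. Then $A$ is non-palindromic by the choice of $P_i$. It remains to verify that $B$ is also non-palindromic. Suppose for contradiction that $B = \rB$. Since $P_j$ is an irreducible factor of $B$, the polynomial $\rP_j$ must be an irreducible factor of $\rB = B$, and hence of $P_\strs$. Because $P_j$ is non-palindromic, $\rP_j \neq P_j$, so $P_\strs$ would have the two distinct mutually reciprocal factors $P_j$ and $\rP_j$, contradicting the lemma in Section \ref{sz_cnf_set}. Therefore $B$ is non-palindromic.

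There is no serious obstacle in this argument; the work has already been done in Corollary \ref{cor:cnf_non_pln_eql} and the ``no two mutually reciprocal factors'' lemma. The only point that needs a moment of care is ruling out that $B$ is palindromic despite containing a non-palindromic prime factor, which is handled cleanly by unique factorization over $\ZZ[x,y]$ together with the Section \ref{sz_cnf_set} lemma.
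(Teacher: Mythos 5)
Your proof is correct (modulo one small caveat below) but takes a genuinely different route from the paper. The paper works directly from Lemma~\ref{lemma:q_is_arb}: since $\strs\sim\strt$ with $\strt\ne\strs,\strs^*$, write $P_\strs=AB$ and $P_\strt=A\rB$; if $B$ were palindromic then $P_\strt=A\rB=AB=P_\strs$, forcing $\strt=\strs$, and if $A$ were palindromic then $P_\strt=A\rB=\rA\rB=\rP_\strs$, forcing $\strt=\strs^*$---both impossible. You instead invoke the global counting result $|\cnfset_\strs|=2^{\nu_\strs}$ together with the ``no mutually reciprocal factors'' lemma of Section~\ref{sz_cnf_set} to extract two distinct non-palindromic prime factors and split them between $A$ and $B$. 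Both routes are non-circular (Section~\ref{sz_cnf_set} precedes this lemma), but the paper's is lighter, needing only Lemma~\ref{lemma:q_is_arb} and property R2, whereas yours pulls in the heavier machinery behind the exact count; in exchange yours produces the slightly stronger conclusion that $A$ can be taken irreducible. The one caveat: your opening claim that $\cnfset_\strs$ contains the three \emph{distinct} strings $\strs,\strs^*,\strt$ silently assumes $\strs\ne\strs^*$. This is in fact true---a palindromic $P_\strs$ would have all palindromic prime factors by the ``no mutually reciprocal factors'' lemma, giving $\nu_\strs=0$ and hence $|\cnfset_\strs|=1$---but the cleanest fix is to bypass the cardinality argument and cite the ``$\strs$ is reconstructable iff $\nu_\strs\le1$'' clause of Corollary~\ref{cor:cnf_non_pln_eql} directly, which yields $\nu_\strs\ge2$ immediately.
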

\begin{proof}
Let $\strs$ be confusable with $\strt$.
By Lemma~\ref{lemma:q_is_arb}, 
$P_\strs=AB$ and $P_\strt=A\rB$ for some $A$ and $B$.
If $B$ is palindromic, then $P_\strs=AB=A\rB=P_\strt$,
namely $\strs=\strt$, 
while if $A$ is palindromic, then $\rP_\strs=\rA\rB=A\rB=P_\strt$,
namely $\strs=\strt^*$.
\end{proof}
\begin{thm}
\label{thm2p}
All strings of length one less than twice a prime are reconstructable.
\end{thm}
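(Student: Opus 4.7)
The plan is to proceed by contradiction using the cyclotomic factorization of $P_\strs(x,x)$, in the spirit of Theorem~\ref{thm:prm}. By property G2, $P_\strs(x,x) = (x^{2p}-1)/(x-1) = (x+1)\Phi_p(x)\Phi_{2p}(x)$ for odd prime $p$ (the case $p=2$, $n=3$, is already covered by the computer search underlying Theorem~\ref{thm:lng7}). Both $\Phi_p$ and $\Phi_{2p}$ are irreducible in $\ZZ[x]$ — the first by Corollary~\ref{cor:pmo_ird}, the second via the identity $\Phi_{2p}(x) = \Phi_p(-x)$ — so $P_\strs(x,x)$ has three distinct irreducible factors, and Lemma~\ref{lemma:fct_xy_fct_xx} bounds the number of bivariate prime factors of $P_\strs(x,y)$ by three.

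Assuming $\strs$ is confusable, Lemma~\ref{lemma:cnf_non_pln_fct} gives $P_\strs = AB$ with both $A, B \in \ZZ[x,y]$ non-palindromic. Specialization at $y = x$ yields $A(x,x)B(x,x) = (x+1)\Phi_p(x)\Phi_{2p}(x)$, so the three irreducible factors partition between $A(x,x)$ and $B(x,x)$. If $A(x,x) = x+1$, Lemma~\ref{lemma:sng_max_fct} forces $A$ to have total degree $1$, constant term $1$, and a unique leading monomial, so $A \in \{1+x,\,1+y\}$, both of which are palindromic — contradicting non-palindromicity of $A$. Ruling out the symmetric case $B(x,x) = x+1$ likewise and using the $\Phi_p \leftrightarrow \Phi_{2p}$ symmetry, I may assume WLOG $A(x,x) = (x+1)\Phi_p(x)$ and $B(x,x) = \Phi_{2p}(x)$. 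Since $\Phi_{2p}$ is irreducible in $\ZZ[x]$, $B$ is itself a bivariate prime of total degree $p-1$, while $A$ is either a bivariate prime of total degree $p$, or factors as $(1+x)A'$ or $(1+y)A'$ with $A'$ a non-palindromic bivariate prime satisfying $A'(x,x) = \Phi_p(x)$.

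The hard step, and main obstacle, is eliminating this final configuration, in which $P_\strs$ has a pair of non-palindromic bivariate primes whose $(x,x)$-specializations draw from opposite sides of the $\Phi_p / \Phi_{2p}$ split of $P_\strs(x,x)$. My approach is to exploit the specific relation $\Phi_{2p}(x) = \Phi_p(-x)$ together with the divisibility $(y-x) \mid B - B^*$ — valid because $B(x,x) = B^*(x,x) = \Phi_{2p}(x)$ is palindromic by Lemma~\ref{lemma:ccl_pln} — to analyze the coefficients of the equicomposable generating polynomial $AB^* \in \mathcal{P}_{\cnfset_\strs}$ (or $(1+x)A'B^*$ in the reducible subcase), which by Theorem~\ref{thm:plycmp} must be $0$-$1$ with exactly one monomial per total degree. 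The aim is to contrast the all-$+1$ coefficients of $\Phi_p(x)$ with the strictly alternating signs of $\Phi_{2p}(x) = 1 - x + x^2 - \cdots + x^{p-1}$ sharply enough that any nontrivial $B - B^*$ must produce a monomial of $AB^*$ whose coefficient violates the $0$-$1$ constraint, forcing $B = B^*$ and contradicting non-palindromicity. Making this coefficient-tracking argument rigorous across the subcases — irreducible $A$ versus $A = (1+x)A'$ or $(1+y)A'$, together with the $\Phi_p \leftrightarrow \Phi_{2p}$ symmetry — is where I expect the most care to be required.
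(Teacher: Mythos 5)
Your setup is correct and matches the paper's through the easy case: you reduce via Lemma~\ref{lemma:cnf_non_pln_fct} to showing that in every factorization $P_\strs = fg$ at least one factor is palindromic, you correctly obtain the three-way split of $P_\strs(x,x) = (x+1)\Phi_p(x)\Phi_{2p}(x)$, and your disposal of the case $f(x,x)=1+x$ via Lemma~\ref{lemma:sng_max_fct} is exactly the paper's Case 1. But for the remaining two cases your proposal diverges from the paper's argument and, crucially, stops short of a proof: you say explicitly that making the coefficient-tracking argument on $AB^*$ rigorous ``is where I expect the most care to be required,'' so the crux is left open. That is the gap.

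Beyond being incomplete, the route you sketch looks unpromising as stated. You aim to show that a nontrivial $B - B^*$ forces a coefficient of $AB^*$ outside $\{0,1\}$. But $AB^*$ (up to sign) \emph{is} a generating polynomial whenever $\strs$ is confusable via this split -- that is precisely the direction of Lemmas~\ref{lemma:q_is_arb} and~\ref{lemma:q_is_gen2} -- so every coefficient really is in $\{0,1\}$, and there is no a priori reason a sign clash between the all-positive $\Phi_p$ and the alternating $\Phi_{2p}$ should surface as a bad coefficient rather than silently cancelling. The paper instead differentiates: setting $P_x' = \partial_x P_\strs|_{y=x}$, multiplying the identity $P_x' = f g_x' + g f_x'$ by $(1-x)$, and separating the low-degree and $x^p$-shifted blocks yields $2g_x' = \sum_{i=0}^{p-2}(\strs_{i+1}-\strs_{i+1+p})x^i + (\strs_p + a_{2p-1})x^{p-1}$. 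Integrality of $g$ together with $\strs_{i}\in\{0,1\}$ then forces $\strs_i = \strs_{i+p}$ for $1 \le i \le p-1$, i.e.\ $\strs = \strt \circ u$ with $|\strt| = p-1$. This structural conclusion is what lets you factor $P_\strs = (1+x^a y^b) P_\strt$, invoke Theorem~\ref{thm:prm} to bound the non-palindromic factors of $P_\strt$, and show $1+x^a y^b$ (with $a+b=p$ prime) contributes none. The remaining case is handled by the same manipulation read modulo~$2$, not by the $\Phi_p \leftrightarrow \Phi_{2p}$ ``symmetry'' you invoke (the two cyclotomic factors are not symmetric in the relevant sense; they merely coincide mod $2$). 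To repair your proposal you would need either to carry out the coefficient analysis in full -- which I suspect amounts to rediscovering $\strs_i = \strs_{i+p}$ by harder means -- or to adopt the derivative argument directly.
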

\begin{proof}
Let $\strs$ have length one less than twice a prime.
We show that in any factorization 
\[
P_\strs(x,y)=f(x,y)g(x,y),
\]
at least one of $f(x,y)$ and $g(x,y)$ is palindromic.
By Lemma~\ref{lemma:cnf_non_pln_fct}, $\strs$ is reconstructable. 

Eisenstein's criterion again implies the prime factorization
\begin{align*}
P_\strs(x,x)
&=
1+x+x^2+\dotsb+x^{2p-1}\\
&=
(1+x)(1-x+x^2-\dotsb +x^{p-1})(1+x+\dotsb +x^{p-1}).
\end{align*}
Hence there are only three factorizations of $P_\strs(x,y)$ into two factors. \\[3mm]
\textit{Case 1:}
$f(x,x)=1+x$ and $g(x,x)=(1-x+x^2-\dotsb +x^{p-1})(1+x+\dotsb +x^{p-1})$,\\[3mm]
\textit{Case 2:}
$f(x,x)=1+x+\dotsb +x^{p-1}$ and $g(x,x)=(1+x)(1-x+x^2-\dotsb +x^{p-1})=1+x^p$,\\[3mm]
\textit{Case 3:}
$f(x,x)=1-x+\dotsb +x^{p-1}$ and $g(x,x)=(1+x)(1+x+\dotsb+x^{p-1})$. \\[3mm]
We show that in all three cases, at least one of $f(x,y)$ and $g(x,y)$ is palindromic.\\[3mm]
\textit{Case 1:} 
By Lemma~\ref{lemma:sng_max_fct},
$f(x,y)$ must be either $1+x$ or $1+y$,
and both are palindromic.\\
\textit{Case 2:}
Let $\strs=\strs_1\strs_2\ldots\strs_n$.
Then
$P(x,y)\ed P_\strs(x,y)=\sum_{i=0}^{2p-1}x^{a_i}y^{i-a_i}$.
Denote 
$\pd[P]{x}(x,y)|_{y=x}$ by $ P_x'$
to obtain 
\[
P_x'=\sum_{i=1}^{2p-1}a_ix^{i-1}.
\]
Therefore,
\[
\pd[P]{x}(x,y)=
f(x,y)\pd[g]{x}(x,y)
+
g(x,y)\pd[f]{x}(x,y)
\]
implies
\[
P_x'=(1+x+\dotsb +x^{p-1})g_x'
+(1+x^p)f_x',
\]
and multiplying both sides by $1-x$,
\begin{align*}
(1-x)P_x'
&=(1-x^p)g_x'+[x^p(1-x)+(1-x)]f_x'.
\end{align*}
The left-hand side can be written as
\[
\sum_{i=1}^{2p-1}a_ix^{i-1}
-
\sum_{i=1}^{2p-1}a_ix^{i}
=
a_1+\sum_{i=1}^{2p-2}(a_{i+1}-a_i)x^{i}-a_{2p-1}x^{2p-1}
=
\sum_{i=0}^{2p-2}s_{i+1}x^{i}-a_{2p-1}x^{2p-1}
\]
and the right-hand side as
\[
\underbrace{\left[(1-x)f_x'+g_x'\right]}
_{(I)}
+\underbrace{x^p\cdot
\left[(1-x)f_x'-g_x'\right]}
_{(II)}.
\]
Note that
$\deg(I)\le \max\{\deg f,\deg g-1\}=p-1$,
while all terms in $(II)$ have degree $\ge p$.
Hence,
\begin{align*}
(1-x)f_x'+g_x'
&
=\sum_{i=0}^{p-1}s_{i+1}x^i,\\
(1-x)f_x'-g_x'
&=\sum_{i=p}^{2p-2}s_{i+1}x^{i-p}-a_{2p-1}x^{p-1}
=\sum_{i=0}^{p-2}s_{i+1+p}x^{i}-a_{2p-1}x^{p-1}.
\end{align*}
Subtracting the two equations,
\[
2g_x'
=
\sum_{i=0}^{p-2}(\strs_{i+1}-\strs_{i+1+p})x^{i}+(\strs_{p}+a_{2p-1})x^{p-1}.
\]
Since $g$ has integer coefficients and the $\strs_i$'s are bits, for $i=1,2\dotsc,p-1$,
\[
\strs_{i}=\strs_{i+p},
\]
and hence $\strs=\strt u\strt = \strt\circ u$, where $\strt$ consists of $p-1$ bits
and $u$ is a single bit.
Letting $a$ and $b$ denote the number of zeros and ones in $tu$, 
$P_\strs(x,y)=(1+x^ay^b)P_\strt(x,y)$. By theorem~\ref{thm:prm}
$P_\strt(x,y)$ has at most one non-palindromic factor. It suffices
to show that $1+x^ay^b$
which itself is palindromic has no non-palindromic factors.
If $a=0$ or $b=0$, $1+x^ay^b$ has no
non-palindromic factor. If both $a$ and $b$ 
are non-zero, since $a+b=p$, it has at most two factors one of which
by Lemma~\ref{lemma:sng_max_fct} must be $1+x$ or $1+y$ which is not possible
since $x=-1$ or $y=-1$ is not a root of $1+x^ay^b$.
\\[3mm]
\textit{Case 3:}
The proof follows as in Case 2 with modulo 2 operations,
\[
  g(x,x)=(1+x)(1+x+\dotsb+x^{p-1})=1+x^p \mbox{ (mod 2)}.\qedhere
\]
\end{proof}

\section{Lower bound on $\cnfmax_n$}
\label{S_lwr_bnd}
Recall that $\cnfmax_n$ is the largest number
of mutually-equicomposable $n$-bit strings.
In Section~\ref{S_smp_cnf} we saw that for all strings $\strs, \strt$,
the interleaved strings $\strs\circ\strt$ and $\strs\circ\strt^*$
are equicomposable. Choosing non-palindromic $\strs$ and $\strt$,
we derived confusable strings for all lengths $n$ such that
$n+1$ is a product of two integers $\ge 3$.
The interleaved strings are non-palindromic, hence $\cnfmax_n\ge 4$.
We now generalize the interleaved construction in two ways
to construct equicomposable sets that for some $n$ are as
large as $(n+1)^{\log_3 2}$.

We first show that if 
$\strs\sim\strs'$ and $\strt\sim\strt'$, then
\[
\strs\circ\strt\sim\strs'\circ\strt'.
\]

Any substring of $\strs\circ\strt$ is of the form
$
\textcolor{blue}{\strs_{\rm tail}}\,\strt_i\, \strs\, \strt_{i+1}\, \strs\,\dotsb\,
\strs\,\strt_{j}\,\textcolor{green}{\strs_{\rm head}},
$
where the \emph{tail} and \emph{head } substrings $s_{\rm
  tail}=s_{\ell}s_{\ell+1}\dotsb s_n$ and $s_{\rm head}=s_1s_2\dotsb
s_{h}$ can also be empty or all of $s$.
We show that it can be bijectively mapped to a substring 
$
\textcolor{blue}{\strs'_{\rm tail}}\,\strt'_{i'}\, \strs'\, \strt'_{i'+1}\, \strs'\,\dotsb\,
\strs'\,\strt'_{j'}\,\textcolor{green}{\strs'_{\rm head}}
$
of $s'\circ t'$, 
where $s'_{\rm   tail}=s'_{\ell'}s'_{\ell'+1}\dotsb s'_n$ and $s'_{\rm
  head}=s'_1s'_2\dotsb s'_{h'}$.

\begin{center}
\def\triangle{[shading=axis,shading angle=90]
      ++(0,0.2)-- ++(1.2,-0.2)-- ++(-1.2,-0.2)--cycle}
\def\L{2.25}

\begin{tabular}{r@{:}l}
$\strs\circ \strt$ &
\begin{tikzpicture}[baseline=-2.4pt,scale=0.65]
\node at (-0.5,0) {$\dotsb$};
\foreach \x/\y in {0/i,1/i+1,2/j,3}
{
\ifthenelse{\equal{\x}{2}}
{\node at (\L*\x+0.6,0) {$\dotsb$};}
{\draw (\L*\x,0) \triangle;}
\ifthenelse{\x<3}
{\node at (\L*\x+\L/2+0.6,0) {$t_{\y}$};}
{}
}
\node at (\L*3+2,0) {$\dotsb$};
\draw[<->] (0.6,0.2)--(0.6,0.5) --(\L*3+0.2,0.5)--(\L*3+0.2,0.2);
\node[blue] at (0.7,-0.5) {$s_{\rm tail}$};
\node[green] at (6.8,-0.5) {$s_{\rm head}$};
\begin{scope}
  \clip\triangle;
  \fill[blue] (0.6,-1) rectangle (2,1);
\end{scope}
\clip (\L*3,0) \triangle;
\fill[green!100] (\L*3,-1) rectangle (\L*3+0.2,1);
\end{tikzpicture}\\
$\strs'\circ \strt'$ &\qquad
\begin{tikzpicture}[baseline=-2.4pt,scale=0.65]
\node at (-0.5,0) {$\dotsb$};
\foreach \x/\y in {0/i',1/i'+1,2/j',3}
{
\ifthenelse{\equal{\x}{2}}
{\node at (\L*\x+0.6,0) {$\dotsb$};}
{\draw (\L*\x,0) \triangle;}
\ifthenelse{\x<3}
{\node at (\L*\x+\L/2+0.6,0) {$t_{\y}'$};}
{}
}
\node at (\L*4,0) {$\dotsb$};
\draw[<->] (0.8,0.2)--(0.8,0.5) --(\L*3+0.4,0.5)--(\L*3+0.4,0.2);
\node[blue] at (0.7,-0.6) {$s_{\rm tail}'$};
\node[green] at (6.8,-0.6) {$s_{\rm head}'$};
\begin{scope}
  \clip\triangle;
  \fill[blue] (0.8,-1) rectangle (2,1);
\end{scope}
\clip (\L*3,0) \triangle;
\fill[green!100] (\L*3,-1) rectangle (\L*3+0.4,1);
\end{tikzpicture}
\end{tabular}


\end{center}

Since $s\sim s'$, there is a bijection $f_{s,s'}$ that
maps every substring of $s$ to a substring of $s'$ with the same
composition, and a similar bijection $f_{t,t'}$.
Let $f_{t,t'}$ map the substring $t_i\dotsb t_j$ 
to a substring $t'_{i'}\dotsb t'_{j'}$ of the same composition, and
hence length.
Since $s\sim s'$, the strings $\strt_i\, \strs\, \strt_{i+1}\,
\strs\,\dotsb\,\strs\,\strt_{j}$
and
$\strt'_{i'}\, \strs'\, \strt'_{i'+1}\,
\strs'\,\dotsb\,\strs'\,\strt'_{j'}$
have the same composition.

\begin{center}
\def\triangle{[shading=axis,shading angle=90]
      ++(0,0.2)-- ++(1.2,-0.2)-- ++(-1.2,-0.2)--cycle}
\def\L{2.25}

\begin{tabular}{r@{:}l}
$\strs\circ \strt$ &
\begin{tikzpicture}[baseline=-2.4pt,scale=0.65]
\node at (-\L+1,0) {$\dotsb$};
\node at (-\L/2+0.6,0) {$t_i$};
\foreach \x/\y in {0/i+1,1/j-1,2/j}
{
\ifthenelse{\equal{\x}{1}}
{\node at (\L*\x+0.6,0) {$\dotsb$};}
{\draw (\L*\x,0) \triangle;
 \node at (\L*\x+0.6,0) {$s$};
} 
\ifthenelse{\x<4}
{\node at (\L*\x+\L/2+0.6,0) {$t_{\y}$};}
{}
}
\node at (\L*3+0.5,0) {$\dotsb$};
\draw[<->] (-0.6,0.3)--(-0.6,0.6) --(\L*3-0.5,0.6)--(\L*3-0.5,0.3);
\end{tikzpicture}\\[1ex]
$\strs'\circ \strt'$ &\qquad
\begin{tikzpicture}[baseline=-2.4pt,scale=0.65]
\node at (-\L+1,0) {$\dotsb$};
\node at (-\L/2+0.6,0) {$t_{i'}'$};
\foreach \x/\y in {0/i'+1,1/j'-1,2/j'}
{
\ifthenelse{\equal{\x}{1}}
{\node at (\L*\x+0.6,0) {$\dotsb$};}
{\draw (\L*\x,0) \triangle;
 \node at (\L*\x+0.6,0) {$s'$};
}
\ifthenelse{\x<4}
{\node at (\L*\x+\L/2+0.6,0) {$t_{\y}'$};}
{}
}
\node at (\L*3+0.5,0) {$\dotsb$};
\draw[<->] (-0.6,0.4)--(-0.6,0.7) --(\L*3-0.5,0.7)--(\L*3-0.5,0.4);
\end{tikzpicture}
\end{tabular}


\end{center}

Using the tail and head notation above, it remains to show that $\ell$ and $h$ can be bijectively mapped to
$\ell'$ and $h'$ such that $\textcolor{blue}{\strs_{\rm  tail}}\textcolor{green}{\strs_{\rm head}}$
and $\textcolor{blue}{\strs'_{\rm    tail}}\textcolor{green}{\strs'_{\rm head}}$
have the same composition.
There are two cases. 

If the length of $s_{\rm tail}s_{\rm head}$ is at most $n$,
removing $s_{\rm tail}$ and $s_{\rm head}$ from $s$ yields
$s_{\rm middle}=s_{h+1}\dotsb s_{\ell-1}$, which $f_{s,s'}$ maps to a substring
$s'_{\rm middle}=s'_{h'+1}\dotsb s'_{\ell'-1}$ of $s'$.
Removing $s'_{\rm middle}$ from $s'$ yields a tail
$s'_{\rm tail}=s'_{\ell'}\dotsb s'_n$ and a head $s'_{\rm
  head}=s'_1\dotsb s'_{h'}$ of $s'$.
Their combined composition equals that of
the original head and tail.

\begin{center}
\def\triangle{[shading=axis,shading angle=90]
      ++(0,0.2)-- ++(1.2,-0.2)-- ++(-1.2,-0.2)--cycle}
\def\L{2.25}

\begin{tabular}{r@{\hspace{-2ex}}c}
$\strs\circ \strt$: &
\begin{tikzpicture}[baseline=-2.4pt,scale=0.65]
\node at (-0.5,0) {$\dotsb$};
\foreach \x/\y in {0/i,1/i+1,2/j,3}
{
\ifthenelse{\equal{\x}{2}}
{\node at (\L*\x+0.6,0) {$\dotsb$};}
{\draw (\L*\x,0) \triangle;}
\ifthenelse{\x<3}
{\node at (\L*\x+\L/2+0.6,0) {$t_{\y}$};}
{}
}
\node at (\L*4,0) {$\dotsb$};
\draw[<->] (0.6,0.2)--(0.6,0.5) --(\L*3+0.2,0.5)--(\L*3+0.2,0.2);
\node[blue] at (0.7,-0.5) {$s_{\rm tail}$};
\node[green] at (6.8,-0.5) {$s_{\rm head}$};
\begin{scope}
  \clip\triangle;
  \fill[blue] (0.6,-1) rectangle (2,1);
\end{scope}
\clip (\L*3,0) \triangle;
\fill[green!100] (\L*3,-1) rectangle (\L*3+0.2,1);
\end{tikzpicture}\\
&
\begin{tikzpicture}[baseline=-2.4pt,scale=.65, inner sep=0]
\draw (0,0) \triangle;
\begin{scope}
\clip\triangle;
\fill[blue] (0.6,-1) rectangle (2,1);
\fill[green!100] (0,-1) rectangle (0.2,1);
\end{scope}
\node at (0.45,0.7) {$s_{\rm middle}$} edge[->] (0.45,0);
\node at (-1,0) {$s_{\rm head}$} edge[->] (0.1,0);
\node at (2,0) {$s_{\rm tail}$} edge[->,bend right] (0.8,0);
\end{tikzpicture}\\[2ex]
&\begin{tikzpicture}[baseline=-2.4pt,scale=.65, inner sep=0]
\draw (0,0) \triangle;
\begin{scope}
\clip (0,0) \triangle;
\fill[blue] (0.8,-1) rectangle (1.2,1);
\fill[green!100] (0,-1) rectangle (0.4,1);
\end{scope}
\node at (0.5,-0.7) {$s_{\rm middle}'$} edge[->](0.5,0);
\node at (-1,0) {$s_{\rm head}'$} edge[->] (0.2,0);
\node at (2,0) {$s_{\rm tail}'$} edge[->,bend left] (0.9,0);
\end{tikzpicture}\\[5ex]
$\strs'\circ \strt'$: &\qquad
\begin{tikzpicture}[baseline=-2.4pt,scale=0.65]
\node at (-0.5,0) {$\dotsb$};
\foreach \x/\y in {0/i',1/i'+1,2/j',3}
{
\ifthenelse{\equal{\x}{2}}
{\node at (\L*\x+0.6,0) {$\dotsb$};}
{\draw (\L*\x,0) \triangle;}
\ifthenelse{\x<3}
{\node at (\L*\x+\L/2+0.6,0) {$t_{\y}'$};}
{}
}
\node at (\L*4,0) {$\dotsb$};
\draw[<->] (0.8,0.2)--(0.8,0.5) --(\L*3+0.4,0.5)--(\L*3+0.4,0.2);
\node[blue] at (0.7,-0.6) {$s_{\rm tail}'$};
\node[green] at (6.8,-0.6) {$s_{\rm head}'$};
\begin{scope}
  \clip\triangle;
  \fill[blue] (0.8,-1) rectangle (2,1);
\end{scope}
\clip (\L*3,0) \triangle;
\fill[green!100] (\L*3,-1) rectangle (\L*3+0.4,1);
\end{tikzpicture}
\end{tabular}


\end{center}

Similarly, if the length of $s_{\rm tail}s_{\rm head}$ is greater than $n$,
$s_{\rm tail}$ and $s_{\rm head}$ have a common substring
$s_{\rm middle}=
s_{\ell}\dotsb s_{h}$,
which $f_{s,s'}$ maps to a substring $s_{\rm middle}'=s'_{\ell'}\dotsb s_{h'}$,
in $s'$.
The head $s'_{\rm head}=s'_1\dotsb s'_{h'}$ and tail $s'_{\ell'}\dotsb
s'_{n}$ of $s'$ 
have the same combined composition as the original head and tail.

The second generalization of the interleaved construction is to interleaving more than two strings.
The following are some properties of interleaving strings.\\[3mm]
\textit{Associativity:}
$(\strs_1 \circ \strs_2) \circ \strs_3
=\strs_1 \circ (\strs_2\circ \strs_3).
$\\[3mm]
\textit{Unique factorization:}
Every string has a unique maximal factorization into interleaved strings.
\begin{proof} 

We first show that if $s\circ s'=t\circ t'$ where $s$ and $t$
are irreducible under the interleaving operation, then both equal
$u\circ u'$ for some string $u$ of length $(|s+1|,|t+1|)-1$.
If $|s|=|t|$ then we easily get $s=t$, else from lengths we have
$|s+1|\cdot|s'+1|=|t+1|\cdot|t'+1|$ and hence  $(|s+1|,|t+1|)=|u+1|>1.$
We can expand $s
=u_1 a_1 u_2 a_2\cdots u_M $, where $M=|s+1|/|u+1|$ and
$t
=v_1 b_1 v_2 b_2\cdots v_N $, where $N=|t+1|/|u+1|$
where $|u_i|=|v_i|=|u|$ and each $a_i$ and $b_i$ are bits. Then 
since $(M, N)=1$ for any $(i, j) \in ([M],[N])$ there are substrings
$s_{1}$ and $t_1$
of $s\circ s'$ and $t\circ t'$ starting and ending at same index
such that $s_1=u_i$ and $t_1=v_j$. This implies that $u_i=v_j$
for all $i$ and $j$
and thus each of $s$ and $t$ are reducible.
\end{proof}
\begin{thm}
\label{theorem:cnf_of_intr}
\[
\cnfset_{\strs_1\circ\strs_2\circ\dotsb\circ\strs_k}
=
\sets{\strt_1\circ\strt_2\circ\dotsb\circ\strt_k:
\strt_i\sim\strs_i\textit{ for all }i
}.
\]
\end{thm}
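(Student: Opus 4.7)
The plan is to prove both inclusions.

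For $(\supseteq)$: Apply the preceding result --- $u\sim u'$ and $v\sim v'$ imply $u\circ v\sim u'\circ v'$ --- iteratively, together with associativity of $\circ$. A simple induction on $k$ gives $t_1\circ\cdots\circ t_k\sim s_1\circ\cdots\circ s_k$ whenever $t_i\sim s_i$ for every $i$.

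For $(\subseteq)$: I would induct on $k$, reducing to the base case $k=2$: every $r\sim u\circ v$ decomposes as $r=u'\circ v'$ with $u'\sim u$ and $v'\sim v$. For $k\ge 3$, writing $s_1\circ\cdots\circ s_k=s_1\circ(s_2\circ\cdots\circ s_k)$, applying the base case to peel off $s_1$, and invoking the inductive hypothesis on the remainder finishes the step.

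For the $k=2$ case the approach is algebraic. First derive the polynomial identity
\[
P_{u\circ v}(x,y)\;=\;P_u(x,y)\,P_v\!\bigl(x^{a+1}y^b,\;x^{a}y^{b+1}\bigr),
\]
where $a,b$ are the $0$- and $1$-counts of $u$, by enumerating prefixes of $u\circ v$: every such prefix uniquely splits as some number of ``$u$ followed by a bit of $v$'' blocks plus a trailing prefix of $u$, so the generating polynomial factorizes multiplicatively. Then verify the substitution--reciprocation commutation
\[
\bigl(P_v(x^{a+1}y^b,\,x^{a}y^{b+1})\bigr)^{\!*}\;=\;P_{v^{*}}\!\bigl(x^{a+1}y^b,\,x^{a}y^{b+1}\bigr)
\]
by tracking exponents against the prefix counts of $v^{*}$. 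The key structural observation is that $P_u(x,y)$ and $P_v(x^{a+1}y^b,x^{a}y^{b+1})$ are coprime in $\ZZ[x,y]$: any common divisor $f$ must by Lemma~\ref{lemma:sng_max_fct} have a unique top-total-degree term, but $f(x,x)$ would have to divide the $\ZZ[x]$-gcd of $P_u(x,x)$ and $P_v(x^{|u|+1},x^{|u|+1})$, which equals $1$ since the associated sets of roots of unity are disjoint; hence $f$ is forced to be constant.

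With coprimality in hand, the prime factorization of $P_{u\circ v}$ over $\ZZ[x,y]$ splits disjointly between primes of $P_u$ and primes of $P_v(x^{a+1}y^b,x^{a}y^{b+1})$. Given $r\sim u\circ v$, Theorem~\ref{thm:plycmp} writes $P_r$ as the same factorization with some subset of primes reciprocated, and by the disjoint split this choice partitions cleanly across the two halves. The first half, by Theorem~\ref{thm:plycmp} applied to $u$, yields $P_{u'}$ for some $u'\sim u$; the second half should yield $P_{v'}(x^{a+1}y^b,x^{a}y^{b+1})$ for some $v'\sim v$, whereupon $P_r=P_{u'\circ v'}$ forces $r=u'\circ v'$. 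The main obstacle is this second-half identification: individual irreducibles in $\ZZ[x,y]$ dividing the substituted $P_v$ need not correspond to irreducibles of $P_v$ in $\ZZ[u,v]$, since substitution can refine irreducibility. I would resolve this by carrying out the second-half analysis inside the subring $\ZZ[x^{a+1}y^b,x^{a}y^{b+1}]\subset\ZZ[x,y]$, where the substitution is a ring isomorphism ($x^{a+1}y^b$ and $x^{a}y^{b+1}$ are algebraically independent because $(a+1)(b+1)-ab>0$) and therefore preserves irreducibility; combined with the commutation identity above, the subset-reciprocation descends consistently to reciprocations of $P_v$'s primes in $\ZZ[u,v]$, yielding the required $v'\sim v$.
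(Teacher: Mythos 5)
Your $\supseteq$ direction matches the paper exactly, and your setup for $\subseteq$ is also the same: reduce to $k=2$, use Lemma~\ref{lem:polycomp} to write $P_{u\circ v}=P_u(x,y)P_v(\alpha,\beta)$ with $\alpha=x^{a+1}y^b$, $\beta=x^ay^{b+1}$, apply Theorem~\ref{thm:plycmp} to get $P_r$ as the same prime factorization with some subset reciprocated, and split that subset across the two halves to define $Q_1$ (the $A$-part) and $Q_2$ (the $B$-part). Your commutation identity $(P_v(\alpha,\beta))^*=P_{v^*}(\alpha,\beta)$ is correct. The coprimality observation, though true, is not actually needed: the prime factorization of a product is the concatenation of the prime factorizations of its factors regardless of coprimality, so Theorem~\ref{thm:plycmp} always produces a clean split into $Q_1$ and $Q_2$.

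The genuine gap is in your resolution of the subring issue. You correctly flag that the irreducible factors of $P_v(\alpha,\beta)$ over $\ZZ[x,y]$ need not be images of irreducibles of $P_v$ over $\ZZ[u,v]$, since substitution can refine irreducibility. But your proposed fix --- ``carry out the analysis inside $\ZZ[\alpha,\beta]$ where substitution is an isomorphism'' --- does not address this. Theorem~\ref{thm:plycmp} gives reciprocation freedom over the $\ZZ[x,y]$-primes, some of which may not lie in $\ZZ[\alpha,\beta]$ at all, and once you reciprocate a strict subset of the factors of some $C_i(\alpha,\beta)$ that splits further in $\ZZ[x,y]$, it is not a priori clear that the resulting product $Q_2$ stays inside the subring $\ZZ[\alpha,\beta]$. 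You assert that ``the subset-reciprocation descends consistently'' but give no argument. This is precisely the content of the paper's Lemma~\ref{lem:isgen}: starting only from the equation $Q_2\rQ_2 = P_{\strs_2}(\alpha,\beta)\,\rP_{\strs_2}(\alpha,\beta)$ and $Q_2(0,0)=1$, the paper first shows $Q_2$ is $0$--$1$ (as in Lemma~\ref{lemma:q_generating_oox}), then compares exponents: every monomial on the left has the form $\alpha^{s}\beta^{t}\alpha^{-s'}\beta^{-t'}$, so every monomial $x^iy^j$ in $Q_2$ must satisfy $i=(a+1)u_{ij}+av_{ij}$, $j=bu_{ij}+(b+1)v_{ij}$ for integers $u_{ij},v_{ij}$, yielding $Q_2=R(\alpha,\beta)$ for some Laurent $R$; the constant-term condition and Lemma~\ref{lemma:q_is_gen2} then force $R$ to be an actual generating polynomial. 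Without this lemma or an equivalent argument, your proof does not close.
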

\begin{proof}
Proving $\supseteq$ is simple.
By associativity and induction,
if $\strs_i\sim\strt_i$  for $i=1\upto m$, then
\[
\strs_1 \circ \strs_2 \circ \dotsb \circ \strs_m
\sim
\strt_1 \circ \strt_2 \circ \dotsb \circ \strt_m.
\]
The proof of $\subseteq$ is more complex and 
postponed to Appendix~\ref{sec:appendixA}.
\end{proof}

The simple ($\supseteq$) part of the theorem suggests a simple construction of large confusable sets.
Every string is equicomposable with its reversal, hence
\[
\cnfset_{\strs_1\circ\strs_2\circ\dotsb\circ\strs_k}
\supseteq
\sets{\tilde\strs_1\circ\tilde\strs_2\circ\dotsb\circ\tilde\strs_k:
\tilde\strs_i\textit{ is either }\strs_i\textit{ or }\strs_i^*}.
\]
If all these strings are non-palindromic, then by unique factorization,
each of the resulting interleaved products is different, hence
\[
|\cnfset_{\strs_1\circ\strs_2\circ\dotsb\circ\strs_k}|
\ge
2^k.
\]
Let $|\strs_i|=m_i$. We saw that
$|\strs_1\circ\strs_2|=(m_1+1)(m_2+1)-1$, and by induction
\[
|\strs_1\circ\strs_2\circ\dotsb\circ\strs_k|=\prod_{i=1}^k(m_i+1)-1.
\]
For example, taking $s_1=\ldots=s_k= 01$ and $n=3^k-1$,
\[
\cnfset_n
=
\cnfset_{3^k-1}
\ge
|\cnfset_{\underbrace{\scriptstyle{01\circ01\circ\dotsb\circ01}}_k}|
\ge
2^k
=
(n+1)^{\log_3 2}.
\]
For general lengths $n$, consider the distinct-prime factorization
\[
n+1=2^{e_0}p_1^{e_1}p_2^{e_2}\ldots p_k^{e_k}.
\]
For every prime $p\ge 3$, take a non-palindromic string of
length $p-1$,
and for every pair of $2$'s, take a non-palindromic string of
length $2\cdot2-1=3$.
Interleaving all these strings and their reversals lower bounds
the largest number of mutually equicomposable strings.
\begin{thm}
\label{theorem:lwr_bnd_cnf}
\[
\cnfmax_n
\ge 
2^{\Floor{\frac{e_0}{2}}+e_1+e_2+\cdots+e_k}. 
\]
\end{thm}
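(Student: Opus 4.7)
The plan is to generalize the $n+1=3^k$ construction sketched in the text immediately preceding the theorem. Given the factorization $n+1=2^{e_0}p_1^{e_1}\cdots p_k^{e_k}$, I will build an interleaved string $w$ of length exactly $n$ from irreducible non-palindromic building blocks, and then combine the $\supseteq$ direction of Theorem~\ref{theorem:cnf_of_intr} with unique factorization of interleavings to exhibit $2^{\lfloor e_0/2\rfloor + e_1 + \cdots + e_k}$ distinct strings in $\cnfset_w$.

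First I would choose the blocks. For each odd prime $p_i$ and each $j=1,\ldots,e_i$, pick a non-palindromic binary string $\strs_{i,j}$ of length $p_i-1$ (always possible since $p_i-1\ge 2$; e.g.\ $01$ for $p_i=3$). For the factor $2^{e_0}$, group the twos into $\lfloor e_0/2\rfloor$ pairs and pick a non-palindromic length-$3$ block $\vctu_\ell$ for each pair (e.g.\ $001$); if $e_0$ is odd, append one additional length-$1$ string $v$ (necessarily palindromic) to absorb the remaining factor of $2$. Interleave all blocks in a fixed order to form $w$. Iterating $|\strs\circ\strt|+1=(|\strs|+1)(|\strt|+1)$, the length of $w$ is $\prod_i p_i^{e_i}\cdot 4^{\lfloor e_0/2\rfloor}\cdot 2^{e_0\bmod 2}-1=2^{e_0}\prod_i p_i^{e_i}-1=n$, so $w\in\setzon$.

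Next I would count $|\cnfset_w|$. Applying the $\supseteq$ direction of Theorem~\ref{theorem:cnf_of_intr} iteratively, any string obtained from $w$ by replacing each non-palindromic block by itself or its reversal, independently, lies in $\cnfset_w$; the palindromic block $v$ contributes no variation. This produces $2^{\lfloor e_0/2\rfloor + e_1+\cdots+e_k}$ candidate strings. To show these candidates are pairwise distinct I would appeal to the unique maximal interleaved factorization property. Every block we chose is irreducible under interleaving: the length-$(p_i-1)$ and length-$1$ blocks are irreducible because any non-trivial split would need $(|\strs|+1)(|\strt|+1)$ to equal the prime $p_i$ (respectively $2$), forcing a trivial factor; the length-$3$ blocks are irreducible because the only non-trivial split $(|\strs|+1)(|\strt|+1)=4$ forces $|\strs|=|\strt|=1$, whose interleaving has the palindromic form $aba$, contradicting non-palindromicity. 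Since reversals of irreducible strings are irreducible, each tuple of choices is the unique maximal factorization of the resulting interleaving, so distinct choices yield distinct strings.

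The main obstacle is this distinctness step: one must verify that our building blocks are genuinely irreducible (so that the interleaved expression is its own maximal factorization) and that unique factorization then prevents any ``collision'' in which two distinct tuples of reversals collapse to the same string. Both parts reduce to the primality of the relevant length-plus-one factors ($p_i$, $4$, or $2$), after which the bound $\cnfmax_n\ge 2^{\lfloor e_0/2\rfloor+e_1+\cdots+e_k}$ drops out immediately from the counting in the previous paragraph.
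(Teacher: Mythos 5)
Your proof takes essentially the same approach as the paper: interleave non-palindromic blocks of length $p_i-1$ for each odd prime factor (counted with multiplicity) and non-palindromic length-$3$ blocks for each pair of $2$'s, then combine the easy ($\supseteq$) direction of Theorem~\ref{theorem:cnf_of_intr} with unique interleaved factorization to count the resulting $2^{\lfloor e_0/2\rfloor+e_1+\cdots+e_k}$ equicomposable strings. You additionally spell out two details the paper leaves implicit --- absorbing an odd leftover factor of $2$ via a length-$1$ palindromic block, and verifying that every building block (including the length-$3$ ones, since $\strs\circ\strt$ with $|\strs|=|\strt|=1$ is always a palindrome $aba$) is irreducible under interleaving --- both of which are genuinely needed for the distinctness step to follow from unique factorization.
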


As mentioned above, this is the strongest lower bound we have. In the rest
of the section we discuss the structure of possible extensions.

The confusable strings we have seen so far were interleaved as in
Theorem~\ref{theorem:cnf_of_intr}.
Exhaustive search shows that so are all confusable strings of length
$\le 22$.
Yet Theorem~\ref{theorem:cnf_of_intr} shows that equicomposable sets larger than implied by
Theorem~\ref{theorem:lwr_bnd_cnf} must be based on non-interleaved
strings. 
We next show examples of non-interleaved confusable strings.
However, while showing that non-interleaved confusions may arise,
these examples do not necessarily result in larger confusable sets.

We begin with a polynomial interpretation of interleaving.
\begin{lem}
\label{lem:polycomp}
For any $\strs$ with $a$ 0's and $b$ 1's and any $\strt$,
\[
P_{\strs\,\circ\,\strt}(x,y)=P_\strs(x,y)P_\strt(x^{a+1}y^b,x^ay^{b+1}).
\]
\end{lem}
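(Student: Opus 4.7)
The plan is to prove the identity by a direct combinatorial expansion of $P_{s\circ t}$, reorganizing the sum so that contributions from the copies of $s$ and the bits of $t$ separate. Set $n=|s|$ and $m=|t|$, so $s\circ t = s\,t_1\,s\,t_2\,\cdots\,t_m\,s$ has length $(m+1)(n+1)-1$, and let $a'_k,b'_k$ denote the numbers of $0$'s and $1$'s in the prefix $t_1t_2\cdots t_k$ (so $a'_k+b'_k=k$).

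The key step is to index the $(m+1)(n+1)$ positions of $s\circ t$ (i.e., the prefixes whose lengths contribute terms to $P_{s\circ t}$) by pairs $(k,i)$ with $0\le k\le m$ and $0\le i\le n$: the pair $(k,i)$ corresponds to the prefix obtained by taking $k$ full copies of $s$, the first $k$ bits of $t$, and then the first $i$ bits of the next copy of $s$. The counts of $0$'s and $1$'s in this prefix are $ka+a'_k+a_i$ and $kb+b'_k+b_i$, respectively. Therefore
\[
P_{s\circ t}(x,y)=\sum_{k=0}^{m}\sum_{i=0}^{n}x^{ka+a'_k+a_i}\,y^{kb+b'_k+b_i}
=\Paren{\sum_{i=0}^{n}x^{a_i}y^{b_i}}\Paren{\sum_{k=0}^{m}x^{ka+a'_k}y^{kb+b'_k}},
\]
and the first factor is precisely $P_s(x,y)$.

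It remains to identify the second factor as $P_t(x^{a+1}y^b,\,x^a y^{b+1})$. Using $a'_k+b'_k=k$, a direct substitution gives
\[
(x^{a+1}y^b)^{a'_k}(x^a y^{b+1})^{b'_k}=x^{(a+1)a'_k+a b'_k}\,y^{b a'_k+(b+1)b'_k}=x^{ak+a'_k}\,y^{bk+b'_k},
\]
so $\sum_{k=0}^{m}x^{ak+a'_k}y^{bk+b'_k}=\sum_{k=0}^{m}(x^{a+1}y^b)^{a'_k}(x^a y^{b+1})^{b'_k}=P_t(x^{a+1}y^b,\,x^a y^{b+1})$, and the identity follows.

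There is no deep obstacle: the proof is purely bookkeeping once the correct indexing is chosen. The only mildly subtle point is spotting the right substitution, which is forced by requiring that each bit of $t$ inserted between two copies of $s$ increment the exponents exactly as if one more $s$-block plus the appropriate single bit were appended; this dictates $X\mapsto x^{a+1}y^b$ (for inserting a $0$) and $Y\mapsto x^a y^{b+1}$ (for inserting a $1$).
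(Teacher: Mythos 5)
Your proof is correct and follows essentially the same route as the paper's: expand $P_{s\circ t}$ by grouping the $(m+1)(n+1)$ prefixes according to how many complete $s$-blocks they contain, factor out $P_s(x,y)$, and recognize the remaining sum as $P_t$ evaluated at the substituted arguments. The only cosmetic difference is that you make the double-indexing by pairs $(k,i)$ fully explicit, whereas the paper writes the same decomposition as a single sum of shifted copies of $P_s$ before factoring.
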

\begin{proof}
Let $\strt\ed t_1t_2\dotsb t_m$, so that $\strs\circ \strt=\strs t_1\strs
t_2\dotsb t_m \strs$,
and define 
$a_i=\sum_{j=1}^it_j$ and $b_i=\sum_{j=1}^i(1-t_j)$
to be the numbers of 0's and 1's in $t_1t_2\dotsb t_i$.
Then,
\begin{align*}
P_{\strs\,\circ\,\strt}(x,y)
&=P_\strs(x,y)+ x^{a+t_1}y^{b+(1-t_1)}P_{\strs}(x,y)
+\dotsb
+ x^{ma+t_1+\dotsb +t_m}y^{mb+(1-t_1)+\dotsb+(1-t_m)}P_{\strs}(x,y)\\
&=P_\strs(x,y)\sum_{i=1}^m x^{ia+a_i}y^{ib+b_i}.
\end{align*}
The lemma follows as
\[
\sum_{i=1}^m x^{ia+a_i}y^{ib+b_i}
\!=\!\sum_{i=1}^m x^{(a_i+b_i)a+a_i}y^{(a_i+b_i)b+b_i}
\!=\!\sum_{i=1}^m (x^{a+1}y^b)^{a_i}(x^ay^{b+1})^{b_i}
\!=\!P_{\strt}(x^{a+1}y^b,x^ay^{b+1}).\qedhere
\]
\end{proof}
Following are more general classes of equicomposable strings.

The following result follows easily from Theorem~\ref{thm:plycmp}. This 
is the most general class of strings which are equicomposable with each other
we have come up with.
\begin{thm}
\label{thm:maxweknow}
Let $\strs_1,\strs_2,\dotsc,\strs_k$ be strings whose generating polynomials
have a common factor $D(x,y)$.
Derive $\strt_1,\strt_2,\dotsc,\strt_k$ from $\strs_1,\strs_2,\dotsc,\strs_k$ by replacing $D(x,y)$ by its
reciprocal.
Then for any $x_1x_2\dotsc x_{k-1}$,
\begin{align}
\strs_1\,x_1\,\strs_2\,x_2\,\dotsb\, x_{k-1}\,\strs_k
\sim
\strt_1\,x_1\,\strt_2\,x_2\,\dotsb\, x_{k-1}\,\strt_k\label{str:T}.
\end{align}
\end{thm}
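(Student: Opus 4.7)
The plan is to reduce to Lemma~\ref{lemma:pxprx}: setting $S := \strs_1 x_1 \strs_2 x_2 \cdots x_{k-1} \strs_k$ and $T := \strt_1 x_1 \strt_2 x_2 \cdots x_{k-1} \strt_k$, it suffices to show $P_S P_S^* = P_T P_T^*$. I will get there by writing both $P_S$ and $P_T$ as a common polynomial $Q$ multiplied on the left by $D$ and $D^*$ respectively.

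First I would derive a formula for the generating polynomial of a concatenation with single bits inserted. For any strings $u,v$ and bit $b$, a direct prefix count gives
\[
P_{ubv}(x,y) = P_u(x,y) + w_u\, z_b\, P_v(x,y),
\]
where $w_u := x^{\#_0 u} y^{\#_1 u}$ is the monomial for the composition of $u$ and $z_b := x$ if $b=0$, $z_b := y$ if $b=1$. Iterating this identity,
\[
P_S = \sum_{j=1}^{k} W_j\, P_{\strs_j}, \qquad W_j := \prod_{i<j} w_{\strs_i}\, z_{x_i}, \qquad W_1 := 1,
\]
and analogously $P_T = \sum_j W_j'\, P_{\strt_j}$ with $W_j'$ built from the monomials $w_{\strt_i}$.

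Next I would factor out $D$. Writing $P_{\strs_j} = D\, R_j$ with $R_j \in \ZZ[x,y]$, the construction of $\strt_j$ gives $P_{\strt_j} = D^*\, R_j$. Lemma~\ref{lemma:q_is_arb} shows $\strt_j \sim \strs_j$, so $\strt_j$ and $\strs_j$ share the same composition; hence $w_{\strt_j} = w_{\strs_j}$, which forces $W_j' = W_j$ for every $j$. Setting $Q := \sum_{j=1}^{k} W_j\, R_j$, we obtain the key identities
\[
P_S = D\, Q \qquad\text{and}\qquad P_T = D^*\, Q.
\]

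The proof then closes with a short manipulation using R1 and R2. Since $D$ divides a generating polynomial with $P_{\strs_1}(0,0) = 1$, we have $D(0,0) \neq 0$, so neither $x$ nor $y$ divides $D$; by R1 this forces $\deg_x D^* = \deg_x D$, $\deg_y D^* = \deg_y D$, and hence $D^{**} = D$. Combining with R2,
\[
P_S P_S^* = (DQ)(DQ)^* = D\, D^*\, Q\, Q^* = (D^* Q)(D^* Q)^* = P_T P_T^*,
\]
and Lemma~\ref{lemma:pxprx} yields $S \sim T$. The only real obstacle is the bookkeeping for the concatenation formula together with verifying that the monomial weights $W_j$ are unchanged when moving from $S$ to $T$; once those are in hand the result follows immediately from the algebraic machinery already developed.
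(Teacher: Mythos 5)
Your proof is correct and, as far as one can reconstruct the argument the paper only gestures at, essentially identical in substance: the real content is the concatenation identity $P_{ubv}=P_u+w_u z_b P_v$, which yields $P_S=D\sum_j W_j R_j$ and $P_T=D^*\sum_j W_j R_j$ with the same $Q=\sum_j W_j R_j$ (using that $\strs_j\sim\strt_j$ forces $w_{\strs_j}=w_{\strt_j}$), and from there the conclusion is mechanical. The paper points to Theorem~\ref{thm:plycmp} (factor $D$ and $Q$ into primes and flip the prime factors of $D$), whereas you close via Lemma~\ref{lemma:pxprx} after verifying $D^{**}=D$ from R1; you could even shortcut that last bit by applying Lemma~\ref{lemma:q_is_arb} directly with $A=Q$, $B=D$. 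Either way it is the same decomposition and the same conclusion.
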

The equicomposable strings mentioned earlier are a
special case of the theorem.
The following corollaries yield classes of equicomposable strings with
some structure.
\begin{cor}
\label{crlcls}
Let $A\sim B$ and for $i=1,2,\dotsc,k$ let $(\strs_i,\strt_i)\in\{(A,B'),(B,A')\}$.
Then, for any string $\bar x=x_1x_2\dotsb x_{k-1}$,
the two strings 
in \eqref{str:T} are equicomposable.\hfill\qedsymbol
\end{cor}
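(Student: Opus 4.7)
The plan is to deduce this as a direct specialization of Theorem \ref{thm:maxweknow}, once the common-factor structure is extracted from the hypothesis $A \sim B$. First, I would invoke Lemma \ref{lemma:q_is_arb}: since $A \sim B$, there exist polynomials $X, Y \in \ZZ[x,y]$ with $P_A = XY$ and $P_B = X Y^*$. Property R2 then gives $P_{A^*} = X^* Y^*$ and $P_{B^*} = X^* Y$.

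Next, I would inspect the two allowed pairings, interpreting $A'$ and $B'$ as the reversals $A^*$ and $B^*$. If $(\strs_i, \strt_i) = (A, B^*)$, then $P_{\strs_i} = XY$ and $P_{\strt_i} = X^* Y$. If $(\strs_i, \strt_i) = (B, A^*)$, then $P_{\strs_i} = X Y^*$ and $P_{\strt_i} = X^* Y^*$. In either case, $P_{\strs_i}$ contains $X$ as a factor and $P_{\strt_i}$ is exactly $P_{\strs_i}$ with the factor $X$ replaced by its reciprocal $X^*$.

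Consequently, whichever of the two options is chosen at each index $i$, all the $\strs_i$ share the common polynomial factor $D(x,y) = X$, and each $\strt_i$ is obtained from $\strs_i$ by swapping $D$ for $D^*$. The conclusion then follows by applying Theorem \ref{thm:maxweknow} with this choice of $D$ and the given interleaving bits $x_1 \dotsb x_{k-1}$, yielding the equicomposability asserted in \eqref{str:T}.

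I do not expect a genuine obstacle here; the only step requiring attention is verifying that the same polynomial $X$ plays the role of the common factor across both possible pairings, which is immediate from the symmetric form of the factorizations $P_A = XY$ and $P_B = XY^*$. The corollary is thus essentially a packaging of Theorem \ref{thm:maxweknow} together with Lemma \ref{lemma:q_is_arb}, made transparent by first naming the two factors $X$ and $Y$ guaranteed by equicomposability of $A$ and $B$.
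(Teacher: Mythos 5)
Your proof is correct and is essentially the argument the paper has in mind: the corollary is stated without explicit proof (just a \qedsymbol), treated as an immediate specialization of Theorem~\ref{thm:maxweknow}, and your argument fleshes out exactly that specialization by invoking Lemma~\ref{lemma:q_is_arb} to get the factorizations $P_A = XY$, $P_B = XY^*$ and then checking that $X$ serves as the common factor $D$ across all four of the allowed polynomials $XY$, $XY^*$, $X^*Y$, $X^*Y^*$. One small point worth making explicit (and which your proof implicitly relies on) is that $(Y^*)^* = Y$, which holds here because generating polynomials have constant term $1$, so neither $x$ nor $y$ divides $X$ or $Y$, and thus property R1 guarantees reciprocation is an involution on these factors; with that noted, your identification $P_{B^*} = X^* Y$ via R2 goes through.
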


\begin{cor}
\label{crlcnf}
Let $\strs_1$, $\ldots$, $\strs_k$ be strings of the same composition.
Then for any string $\strs_0$ and bits $x_1x_2\ldots x_{k-1}$,
\[
(\strs_1\circ \strs_0 )x_1 (\strs_2\circ \strs_0)x_2 \ldots x_{k-1}( \strs_k\circ \strs_0)\sim
(\strs_1\circ \strs_0^*) x_1 (\strs_2\circ \strs_0^*)x_2\ldots x_{k-1}
(\strs_k\circ \strs_0^*).\eqno\qedsymbol
\]
\end{cor}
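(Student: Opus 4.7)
The plan is to reduce Corollary~\ref{crlcnf} to Theorem~\ref{thm:maxweknow} by exhibiting an appropriate common polynomial factor. Let $a,b$ denote the common numbers of $0$s and $1$s in $\strs_1,\ldots,\strs_k$, which by hypothesis are the same for every $i$. By Lemma~\ref{lem:polycomp},
\[
P_{\strs_i\circ\strs_0}(x,y)=P_{\strs_i}(x,y)\cdot D(x,y),\qquad D(x,y)\ed P_{\strs_0}(x^{a+1}y^b,\,x^ay^{b+1}),
\]
and crucially $D(x,y)$ does not depend on $i$. Thus $D$ is a common factor of $P_{\strs_1\circ\strs_0},\ldots,P_{\strs_k\circ\strs_0}$, which is exactly the setup required by Theorem~\ref{thm:maxweknow}.

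Next I would verify the identity $D^*(x,y)=P_{\strs_0^*}(x^{a+1}y^b,\,x^ay^{b+1})$. Writing $P_{\strs_0}(u,v)=\sum_j u^{\alpha_j}v^{\beta_j}$ and expanding $D$, the $x$- and $y$-exponents of each resulting monomial are nondecreasing functions of $(\alpha_j,\beta_j)$, and are therefore simultaneously maximized at $j=|\strs_0|$. This yields explicit formulas for $\deg_x D$ and $\deg_y D$ in terms of the total counts of $0$s and $1$s in $\strs_0$, and a short calculation of $x^{\deg_x D}y^{\deg_y D}D(1/x,1/y)$ then produces $P_{\strs_0}^*(x^{a+1}y^b,\,x^ay^{b+1})$, which equals $P_{\strs_0^*}(x^{a+1}y^b,\,x^ay^{b+1})$ by property R3. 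A second application of Lemma~\ref{lem:polycomp} then gives $P_{\strs_i\circ\strs_0^*}(x,y)=P_{\strs_i}(x,y)\cdot D^*(x,y)$.

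With both factorizations in place, the strings on the right-hand side of the corollary are obtained from the strings on the left-hand side by replacing the common factor $D$ with its reciprocal $D^*$. Applying Theorem~\ref{thm:maxweknow} with connecting bits $x_1,\ldots,x_{k-1}$ then immediately yields the claimed equicomposability. The main obstacle is the degree bookkeeping required to show that the reciprocal commutes with the substitution $(u,v)\mapsto(x^{a+1}y^b,\,x^ay^{b+1})$; this is precisely where the hypothesis that all $\strs_i$ share a common composition is essential, since without it different interleavings would contribute different substitution exponents and no uniform common factor of the $P_{\strs_i\circ\strs_0}$ would emerge from Lemma~\ref{lem:polycomp}.
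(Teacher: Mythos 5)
Your proof is correct and follows the approach the paper intends: Corollary~\ref{crlcnf} is stated with no explicit proof because it is meant to be read off from Theorem~\ref{thm:maxweknow} together with Lemma~\ref{lem:polycomp}, and that is exactly what you do. The one nontrivial point you fill in — that the reciprocal operation commutes with the substitution $(u,v)\mapsto(x^{a+1}y^b,\,x^ay^{b+1})$, so that $D^*=P_{\strs_0}^*(x^{a+1}y^b,\,x^ay^{b+1})=P_{\strs_0^*}(x^{a+1}y^b,\,x^ay^{b+1})$ — is verified correctly: both post-substitution exponents are nondecreasing in the monomial index, so they are maximized at the top-degree term of $P_{\strs_0}$, giving $\deg_x D=a|\strs_0|+\alpha_0$ and $\deg_y D=b|\strs_0|+\beta_0$ (with $\alpha_0,\beta_0$ the counts of $0$s and $1$s in $\strs_0$), which match the prefactor exponents in $P_{\strs_0}^*(x^{a+1}y^b,\,x^ay^{b+1})$. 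Your closing remark slightly misattributes where the equal-composition hypothesis is used: it is not needed for the commutation-with-substitution calculation (which is a statement about a single $i$), only to make the substitution exponents $(a,b)$, and hence $D$, the same across all $i$; but this is a wording issue, not a gap.
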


For example, taking $\strs_1=010$, $\strs_2=001$, $\strs_0=01$, and $x_1=0$, we
obtain the 23-bit confusion
\begin{align*}
\textbf{010}0\textbf{010}1\textbf{010}\textit{0}\,\textbf{001}0\textbf{001}1\textbf{001}
&=
(\textbf{010}\circ01)\textit{0}\,(\textbf{001}\circ01)\\
&\sim
(\textbf{010}\circ10)\textit{0}\,(\textbf{001}\circ10)\\
&=
\textbf{010}1\textbf{010}0\textbf{010}\textit{0}\,\textbf{001}1\textbf{001}0\textbf{001}.
\end{align*}
These strings are clearly non-interleaved, hence are the shortest
non-interleaved confusable strings.

This example shows in particular that string equicomposability differs
from partition coarsening where~\cite{BTW06} used Ribbon Schur Functions
to show that all confusions arise from interleaving.

\section{Upper bounds on $\cnfmax_n$}
\label{S_upr_bnd}
To upper bound the largest number of mutually-equicomposable strings,
we use well-known results on the factorization of $x^n-1$.

\begin{thm}
For all $n$,
\label{theorem:upr_bnd_cnf}
\[
\cnfmax_n
\le \min\left\{2^{d(n+1)-1}, (n+1)^{1.23}\right\}.
\]
\end{thm}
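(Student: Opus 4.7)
My plan is to apply Corollary~\ref{cor:cnf_non_pln_eql}, which identifies $|\cnfset_s|$ with $2^{\nu_s}$ where $\nu_s$ is the number of non-palindromic irreducible factors of $P_s(x,y)$. The problem thus reduces to upper-bounding $\nu_s$ in two complementary ways, one for each half of the claimed $\min$.

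For the first bound $\cnfmax_n \le 2^{d(n+1)-1}$, I would proceed by univariate specialization. Since $\nu_s$ is at most the total number of prime factors of $P_s(x,y)$, Lemma~\ref{lemma:fct_xy_fct_xx} lets me pass to $P_s(x,x)=1+x+\dotsb+x^n=(x^{n+1}-1)/(x-1)$ and bound the factor count there. The cyclotomic identity~\eqref{eqn:xtnmo_ccl_fct}, together with Gauss's theorem on the irreducibility of each $\Phi_d$, gives the prime factorization $P_s(x,x)=\prod_{d\mid(n+1),\,d\ge 2}\Phi_d(x)$, whose number of factors is exactly $d(n+1)-1$. Hence $\nu_s\le d(n+1)-1$ and the first bound follows.

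For the second bound $\cnfmax_n \le (n+1)^{1.23}$, note that this does \emph{not} follow by weakening the first: already at $n+1=12$, one has $2^{d(12)-1}=32$ while $12^{1.23}\approx 21.8$, so genuinely new information is needed. My plan is to exploit restrictions on which cyclotomic factors can actually lift to a \emph{non-palindromic} bivariate factor of a generating polynomial. For instance, $\Phi_2=1+x$ can only lift (in view of Lemma~\ref{lemma:sng_max_fct}) to $1+x$ or $1+y$, both palindromic, so $\Phi_2$ never contributes to $\nu_s$; analogous rigidity holds for other low-degree cyclotomic factors. Combined with an explicit Wigert-style estimate on $d(m)$, valid for $m$ above a fixed threshold with the remaining small cases verified numerically, this should yield $\nu_s \le 1.23\log_2(n+1)$ for every~$n$.

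The main obstacle is the second half. The first bound is essentially a clean consequence of the factorization framework developed in Sections~\ref{section:rcp_ply}--\ref{section:ccl_pln}, whereas the second requires both (i)~a structural lemma constraining which bivariate lifts of subsets of cyclotomic factors can be non-palindromic, and (ii)~genuine quantitative control of the divisor function. Pinning down the constant $1.23$ is then an optimization of these two inputs, and I expect the bulk of the work to lie in ruling out enough ``potential'' non-palindromic lifts to beat the naive count $d(n+1)-1$ in the highly composite regime.
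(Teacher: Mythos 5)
Your handling of the first bound $2^{d(n+1)-1}$ is exactly the paper's argument: pass to the univariate specialization $P_\strs(x,x)=\prod_{d\mid n+1,\,d\ge 2}\Phi_d(x)$, count $d(n+1)-1$ irreducible factors, cap the number of bivariate irreducible factors via Lemma~\ref{lemma:fct_xy_fct_xx}, and conclude with Corollary~\ref{cor:plycmp}. Your observation at $n+1=12$ that the two halves of the $\min$ are genuinely independent is also correct.

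The second bound is where you diverge from the paper, and where there is a real gap. The paper does not attempt a self-contained cyclotomic argument: it obtains $(n+1)^{1.23}$ by substituting $y\mapsto x^{n+1}$ in $P(x,y)$ to produce a univariate polynomial and then invoking and adapting the turnpike bounds of~\cite{LW88} for that polynomial. Your proposal --- ruling out palindromic-forced cyclotomic factors one at a time (starting with $\Phi_2$) and then controlling $d(n+1)$ via a Wigert-type estimate --- is structurally too weak. The $\Phi_2$ rigidity you derive from Lemma~\ref{lemma:sng_max_fct} is correct but saves only a single factor of $2$, whereas for highly composite $n+1$ the gap between $2^{d(n+1)-1}$ and $(n+1)^{1.23}$ is superpolynomial: already at $n+1=2\cdot3\cdot5\cdot7\cdot11\cdot13=30030$ one has $2^{d(n+1)-1}=2^{63}$ against $(n+1)^{1.23}\approx 3\cdot 10^{5}$. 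To reach $\nu_\strs\le 1.23\log_2(n+1)$ you would need to show that all but $O(\log n)$ of the up to $d(n+1)-1$ cyclotomic factors are forced to lift palindromically, which is a vastly stronger structural statement than anything you sketch; and Wigert's theorem cannot rescue this, since $d(m)$ is itself superpolylogarithmic in $m$. You correctly flag item~(i) of your plan as the bottleneck, but as stated it is a research program rather than a proof, and it is in any case not the route the paper takes.
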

\begin{proof}
The $(n+1)^{1.23}$ term follows from results in~\cite{LW88}
by replacing $y$ in $P(x,y)$ by $x^{n+1}$ to obtain a univariate
polynomial, and reproving theorems therein for this
polynomial, showing that $\cnfmax_n\le (n+1)^{1.23}$.

To prove the first upper bound, let $\strs$ have length $n$, 
then 
\[
P_\strs(x,x)=1+x+x^2+x^3+\ldots+x^n
=
\frac{x^{n+1}-1}{x-1}.
\]
Hence the prime factorization of $P_\strs(x,x)$ contains $d(n+1)-1$ terms.
From Lemma~\ref{lemma:fct_xy_fct_xx}
the prime factorization of $P_\strs(x,y)$ contains at most that many
terms,
and the bound follows from Corollary~\ref{cor:plycmp}.
\end{proof}
To also relate the form of this upper bound to that of the lower bound in Theorem~\ref{theorem:lwr_bnd_cnf},
again let $n+1=2^{e_0}p^{e_1}_1p_2^{e_2}\ldots p_k^{e_k}$ be a
factorization into distinct primes.
Then $d(n+1)=(e_0+1)(e_1+1)\cdots(e_k+1)$, hence
\[
\cnfmax_n
\le \min\left\{2^{(e_0+1)(e_1+1)\cdots(e_k+1)-1}, (n+1)^{1.23}\right\}.
\]

The lower bound in Theorem~\ref{theorem:lwr_bnd_cnf}
is tight when $n+1$ is a prime power or twice a prime power.
\begin{thm}
\label{thmprmpwr}
For any $k$,
\[
\cnfmax_{2^k-1}= 2^{\Floor{\frac{k}{2}}},
\]
and for prime $p\ge3$,
\[
\cnfmax_{p^k-1}=
2^k.
\]
\end{thm}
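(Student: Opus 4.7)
The matching lower bounds follow immediately from Theorem~\ref{theorem:lwr_bnd_cnf} applied to the distinct-prime factorisation of $n+1$: setting $n+1=p^k$ with $p\ge 3$ (so $e_0=0$, $e_1=k$) gives $\cnfmax_n\ge 2^k$, and setting $n+1=2^k$ (so $e_0=k$ and no odd primes) gives $\cnfmax_n\ge 2^{\Floor{k/2}}$. So my focus is on proving the matching upper bounds.

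For $n=p^k-1$ with $p\ge 3$, I would specialise $y=x$ and invoke Lemma~\ref{lemma:fct_xy_fct_xx}. Property \textbf{G2} gives
\[
P_\strs(x,x)=\frac{x^{p^k}-1}{x-1}=\prod_{j=1}^{k}\Phi_{p^j}(x),
\]
and each $\Phi_{p^j}$ is irreducible over $\ZZ$ by an Eisenstein argument applied to $\Phi_{p^j}(x+1)$, exactly parallel to the proof of Corollary~\ref{cor:pmo_ird}. Hence $P_\strs(x,x)$ has exactly $k$ prime factors, Lemma~\ref{lemma:fct_xy_fct_xx} bounds the number of irreducible bivariate factors of $P_\strs(x,y)$ by $k$, and Corollary~\ref{cor:cnf_non_pln_eql} yields $|\cnfset_\strs|\le 2^{\nu_\strs}\le 2^k$.

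For $n=2^k-1$ the same specialisation yields $P_\strs(x,x)=\prod_{j=1}^{k}(1+x^{2^{j-1}})$, still a product of $k$ irreducibles, so Lemma~\ref{lemma:fct_xy_fct_xx} on its own only supplies $|\cnfset_\strs|\le 2^k$. To sharpen this to $2^{\Floor{k/2}}$, my plan is to establish the following rigidity claim, which is special to $p=2$: \emph{every non-palindromic irreducible bivariate factor $Q$ of a generating polynomial $P_\strs$ with $|\strs|=2^k-1$ has $Q(x,x)$ divisible by at least two of the cyclotomics $\Phi_{2^j}$}. Granting this, in any factorisation $P_\strs=\prod Q_i$ each non-palindromic $Q_i$ accounts for at least two of the $k$ cyclotomic factors of $P_\strs(x,x)$, whence $2\nu_\strs\le k$ and $\nu_\strs\le\Floor{k/2}$, and Corollary~\ref{cor:cnf_non_pln_eql} concludes.

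The rigidity claim is the main obstacle. Its base case $Q(x,x)=1+x$ is easy: Lemma~\ref{lemma:sng_max_fct} forces the unique top-degree monomial of $Q$ to be $x$ or $y$, and since the constant term equals one, $Q\in\{1+x,1+y\}$, both palindromic. The inductive step $Q(x,x)=1+x^{2^{j-1}}$ for $j\ge 2$ is substantially more delicate: I would combine Lemma~\ref{lemma:sng_max_fct} (which pins the top term of $Q$ to the monomial $x^\alpha y^{2^{j-1}-\alpha}$), the forced cancellation at $y=x$ of every intermediate-total-degree contribution of $Q$, and the $0$-$1$ constraint \textbf{G1} on the product $P_\strs=QR$. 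Abstract non-palindromic lifts such as $1+a(x-y)+xy$ for $j=2$ do exist in $\ZZ[x,y]$, but multiplying them by any $R$ compatible with $P_\strs$ being $0$-$1$ forces some coefficient of $QR$ to exceed one. Formalising this coefficient rigidity is the technical heart of the argument and is what converts the naive bound $2^k$ into $2^{\Floor{k/2}}$.
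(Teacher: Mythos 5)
Your treatment of the lower bounds and of the upper bound for $n=p^k-1$ with $p\ge 3$ is correct and matches the paper: the univariate factorization $P_\strs(x,x)=\prod_{j=1}^{k}\Phi_{p^j}(x)$ has exactly $k$ irreducible factors (each $\Phi_{p^j}$ passes Eisenstein after the shift $x\mapsto x+1$), so Lemma~\ref{lemma:fct_xy_fct_xx} gives at most $k$ prime bivariate factors, Corollary~\ref{cor:cnf_non_pln_eql} gives $|\cnfset_\strs|=2^{\nu_\strs}\le 2^k$, and Theorem~\ref{theorem:lwr_bnd_cnf} closes the gap. You also correctly identify what is really needed for $n=2^k-1$: a rigidity statement saying that any factor $g$ of $P_\strs$ with $g(x,x)$ equal to a single cyclotomic $1+x^{2^i}$ must be palindromic; from this, each non-palindromic prime factor consumes at least two of the $k$ cyclotomics, giving $\nu_\strs\le\Floor{k/2}$ exactly as you say.

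The gap is that you state this rigidity claim, settle only its trivial base case $1+x$, and then explicitly defer the rest (``Formalising this coefficient rigidity is the technical heart of the argument''). That heart \emph{is} the theorem. Moreover, the route you gesture at---arguing that multiplying a non-palindromic lift $Q$ of $1+x^{2^{j-1}}$ by a cofactor $R$ forces some coefficient of $QR$ above one---is not the paper's method and it is far from clear it can be pushed through, since there is an infinite family of non-palindromic lifts of each $\Phi_{2^j}$ (your own $1+a(x-y)+xy$, for every $a\ne 0$). The paper instead argues analytically: setting $P_x'=\pd[P]{x}\big|_{y=x}$ and expanding $(1-x)P_x'=(1-x)f\,g_x'+(1-x)g\,f_x'$ in blocks of length $2^i$, it uses the fact that the coefficients $s_\ell$ are $0$--$1$ to force $g_x'$ to be a single monomial $a\,x^{2^i-1}$; a short claim (if $f(x,x)$ is constant and $f_x'=0$ then $(x-y)^2\mid f-\text{const}$) upgrades this to $g(x,y)=1+(x-y)^2g_1(x,y)+x^ay^{2^i-a}$; and finally substituting $y=-x$ into $f\cdot(g-g^*)$ yields a contradiction modulo~$4$ unless $g=g^*$. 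None of these steps appears in your sketch, so the upper bound for $n=2^k-1$ remains unproved in your proposal.
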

\begin{proof}
Eisenstein's criterion again implies the prime factorization
\begin{align*}
P_\strs(x,x)
&=
1+x+x^2+\dotsb+x^{2^k-1}= \prod\limits_{j=1}^{k-1}(1+x^{2^j}).
\end{align*}
Once again, to prove the bound it suffices to show that any factorization
of $P_s(x,y)$ into 
$f(x,y)g(x,y)$ with $g(x,x)=1+x^{2^i}$ implies that
$g(x,y)$ is palindromic. 

Let $\strs=\strs_1\strs_2\ldots\strs_{2^k-1}$ and as before, let $a_i$ be the number of
zeros among $s_1\upto s_i$.
Then
$P(x,y)\ed P_\strs(x,y)=\sum_{i=0}^{2^k-1}x^{a_i}y^{i-a_i}$.
Denote 
$\pd[P]{x}(x,y)|_{y=x}$ by $ P_x'$, then 
\begin{align*}
(1-x)P_x'&=
(1-x)f(x,x)g_x'+(1-x)g(x,x)f_x'\\
\implies\sum_{i=0}^{2^k-2}s_{i+1}x^{i}-a_{2^k-1}x^{2^{k}-1}&
=\left(1-x^{2^i}+x^{2\cdot2^i}-x^{3\cdot2^i}+
\dotsb-x^{(2^{k-i}-1)\cdot2^i}\right)g_x'
+(1+x^{2^i})(1-x)f_x'.
\end{align*}
The degree of $g_x'$ is $2^i-1$, and that of 
$(1-x)f_x'$ is $2^k-1-2^i$, so we can write
\[
(1-x)f_x'=\sum_{j=0}^{2^{k-i}-2}x^{j\cdot2^i} f_j,
\]
where each $f_j$ has degree $\le 2^i-1$.
Defining $f_{-1}=f_{2^k}=0$ and $s_{2^k}=-a_{2^k-1}$, 
\begin{align*}
\sum_{i=0}^{2^k-2}s_{i+1}x^{i}-a_{2^k-1}x^{2^{k}-1}
&=\sum_{j=0}^{2^{k-i}-1}(-1)^jx^{j\cdot2^i}g_x'+\sum_{j=0}^{2^{k-i}-2}x^{j\cdot2^i}
f_j+\sum_{j=1}^{2^{k-i}-1}x^{j\cdot2^i} f_j\\
\implies \sum_{j=0}^{2^{k-i}-1}x^{j\cdot2^i}
\left(\sum_{l=0}^{2^{i}-1}s_{j\cdot2^i+l+1}x^{l}\right)
&= \sum_{j=0}^{2^{k-i}-1}x^{j\cdot2^i}\left((-1)^jg_x'+
f_{j-1}+f_j\right)
\end{align*}

In the above equation, if we sum all the terms with even
values of $j$ and subtract from the sum of odd values
we get, 
\begin{align*}
\sum_{j=0}^{2^{k-i}-1}(-1)^{j\cdot2^i}\left(\sum_{l=0}^{2^{i}-1}s_{j\cdot2^i+l+1}x^{l}\right) =2^{k-i}g_x'.
\end{align*}
Since all $s_l$'s except $s_{2^k}$ are either 0 or 1, we must have
\[
g_x'= a\cdot x^{2^i-1},
\]
and thus, 
\[
g(x,y)=1+g_1(x,y)+x^ay^{2^i-a}.
\]
\begin{clm}
If $f(x,x)=K$, a constant and $f_x'=0$, then $(x-y)^2|f(x,y)-K$.
\end{clm}
\begin{proof}
Let $f(x,y)=(x-y)f_1(x,y)+f_2(x)$. Then, clearly $f_2(x)=K$.
Also, $f_x'=0$ implies $f_1(x,x)=0$
and thus $f_1(x,y)$ has $x-y$ as a factor. 
\end{proof} 
Using the claim we conclude that $g(x,y) = 1+(x-y)^2g_1(x,y)+x^ay^{2^i-a}$.
Consider, 
\[
  f(x,y)\left(g-g^*\right).
\]
It is the difference of two generating polynomials, but substituting
$y=-x$, we see that each coefficient is divisible by 4 which cannot
be true unless $g-g^*=0$.

It is easy to see that the bound is achievable by taking
strings of the form $t_0\circ t_1\circ t_2\circ \cdots\circ
t_\Floor{\frac{k}{2}}$
where $t_i=001$ for $i\ge1$, and $t_0$ is the empty string when $k$ is
even, else it is the single bit 0. 

When $n=p^k-1$ for a prime $p\ge 3$, Theorems~\ref{theorem:lwr_bnd_cnf}
and~\ref{theorem:upr_bnd_cnf} coincide.
\end{proof}

\section{Reconstruction algorithm}
We present an algorithm for reconstructing strings from their
composition multiset.
The algorithm takes as input the composition multiset $\cS_{\strs}$
of a string $\strs$ over an alphabet $\ab$, and elements of 
$\cnfset_{\strs}$,  the set of all strings confusable with $\strs$.
We show that for alphabet size $\ge4$, the string $\strs$ that 
generated $\cS_\strs$ is added in quadratic time. 
The algorithm successively reconstructs $\strs$ from both ends
and backtracks when it errs. It can be viewed as a modification
of a similar algorithm for the turnpike problem.

We first establish two properties of $\cS_{\strs}$
that help reduce the algorithm's search space.
The next lemma shows that the composition multiset 
determines the set $\{s_i,s_{n+1-i}\}$
of symbols at the symmetric positions $i$ and $(n+1-i)$
for $i = 1,2,\ldots,\lceil\frac{n}2\rceil$.
\begin{lem}
\label{lemmirror}
 $\cS_\strs$
determines $\{s_i, s_{n+1-i}\}$ for $i = 1,2,\ldots,\lceil\frac{n}2\rceil$.
\end{lem}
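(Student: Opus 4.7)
The plan is to show that $\cS_\strs$ determines, for every $k$ with $1\le k\le \lceil n/2\rceil$, the composition of the central substring $s_k s_{k+1}\cdots s_{n+1-k}$, from which each pair $\{s_i,s_{n+1-i}\}$ drops out by a single subtraction.

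First I would stratify: since every element of $\cS_\strs$ carries its own total size, for each $k$ we can isolate the sub-multiset $M_k$ of all length-$k$ substring compositions, and then form the multiset sum $\Sigma_k$ of the elements of $M_k$. The number of times a symbol $s_\ell$ appears in $\Sigma_k$ is simply the number of length-$k$ windows containing index $\ell$, namely $\min(\ell,\,k,\,n+1-\ell,\,n-k+1)$.

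The heart of the argument is the identity
\[
\Sigma_k-\Sigma_{k-1}\;=\;B_k,\qquad 1\le k\le \lceil n/2\rceil,
\]
where $\Sigma_0$ is the empty composition and $B_k$ denotes the composition of the central substring $s_k s_{k+1}\cdots s_{n+1-k}$. To establish it I would use the assumption $k\le n-k+1$ to simplify the $\min$ above, then compare the multiplicities at $k$ and $k-1$: a short case-split on whether $\ell$ is strictly below $k$, in the interval $[k,n+1-k]$, or strictly above $n+1-k$ shows that the contribution of $s_\ell$ to $\Sigma_k-\Sigma_{k-1}$ is $1$ on the central interval and $0$ elsewhere. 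The step that needs care is this multiplicity bookkeeping at the boundary indices, in particular $\ell\in\{k-1,k,n+1-k,n+2-k\}$, though the check is essentially mechanical.

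With the central compositions $B_k$ in hand, $B_i-B_{i+1}=\{s_i\}+\{s_{n+1-i}\}$ is exactly the pair the lemma asks for, and is determined by $\cS_\strs$. For the extremal index $i=\lceil n/2\rceil$ the residue $B_{i+1}$ degenerates to the empty composition since its nominal length $n-2i$ is $0$ or $-1$, so $B_i$ itself already equals the desired pair: the length-$2$ multiset $\{s_{n/2},s_{n/2+1}\}$ when $n$ is even, and the singleton $\{s_{(n+1)/2}\}$ when $n$ is odd (where the two indexed positions coincide). This yields the conclusion.
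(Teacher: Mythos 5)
Your proof is correct, and it rests on exactly the same object as the paper's proof: the symbol-level sums $\Sigma_k$ (the paper calls them $M_k$ and writes them as a $k$-fold multiset union of $M_1$). The paper proves the cumulative identity $M_i \cup \{s_{i-1},s_{n-i+2}\}\cup 2\cdot\{s_{i-2},s_{n-i+3}\}\cup\cdots\cup(i-1)\cdot\{s_1,s_n\} = i\cdot M_1$ and then solves inductively for $\{s_{i-1},s_{n-i+2}\}$, which requires already knowing all the earlier boundary pairs. Your two-step differencing ($\Sigma_k - \Sigma_{k-1} = B_k$, the composition of the central block, followed by $B_i - B_{i+1} = \{s_i,s_{n+1-i}\}$) is algebraically equivalent — expanding $\Sigma_k = B_1 + \cdots + B_k$ and $B_j = B_{j-1} - \{s_{j-1},s_{n-j+2}\}$ recovers the paper's formula — but it sidesteps the induction entirely, since each $B_k$ is recovered from consecutive $\Sigma$'s by a single subtraction. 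The multiplicity count $\min(\ell,k,n+1-\ell,n-k+1)$ and the boundary checks at $\ell\in\{k-1,k,n+1-k,n+2-k\}$ all verify correctly under the running hypothesis $k\le\lceil n/2\rceil$, and your special handling of the extremal index (where $B_{i+1}$ vanishes and $B_i$ is itself the sought pair, degenerating to a singleton for odd $n$) is the right closing move. In short: same decomposition, slightly cleaner bookkeeping.
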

\begin{proof}
Let the union of compositions be their union as multisets.
For example, $A^2B\cup ABC^2\cup AC=A^4B^2C^3$. 
For a string $\strs$, let $M_i$ denote the union of the compositions of
all substrings of length $i$.
For example, for ABAC, $M_1=A^2BC$, $M_2= A^3B^2C$.
Note that all $M_i$'s can be easily determined from the string, and
that for $1\le i\le\floor{n/2}$, $M_{n+1-i}=M_i$.
For a multiset $S$, let $j\cdot S$ be the $j$-fold union
$S\union\ldots\union S$. 
It is easy to see that
\[
M_2\cup\sets{s_1,s_n}=2\cdot M_1,
\]
hence $\sets{s_1,s_n}$, can be deduced from $\cS_\strs$.
More generally, for $i=2\upto\ceil{\frac n2}$,
\begin{align*}
M_i\cup\sets{s_{i-1}, s_{n-i+2}}\cup
2\cdot\sets{s_{i-2}, s_{n-i+3}}\cup\ldots\cup (i-1)\cdot\sets{s_1, s_n}
=
i\cdot M_1.
\end{align*}
Using this equation inductively over $i=2\upto \ceil{\frac n2}$,
yields all multisets $\{s_i,s_{n-i+1}\}$.
\end{proof}

For $1\le i<\ceil{n/2}$, let $\cT_i$ be the collection of compositions of
strings $\strs_j^k$ where $j,k\le i$, or $j,k\ge n+1-i$, or $j\le i+1\le
n-i\le k$, namely the collection of compositions of all strings that
are either on ``one side'' of $\strs_{i+1}^{n-i}$ or ``straddle''
$\strs_{i+1}^{n-i}$.
The next lemma shows that the composition of the whole string, 
along with the strings $\strs^i$ and $\strs_{n+1-i}^n$ determine $\cT_{i}$.
\begin{lem}
\label{lemcommult}
$\cS_\strs$, $\strs^i$, and $\strs_{n+1-i}^n$ determine $\cT_i$.
\end{lem}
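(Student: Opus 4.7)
The plan is to exhibit an explicit procedure that produces the multiset $\cT_i$ from the three given data $\cS_\strs$, $\strs^i$, and $\strs_{n+1-i}^n$. Classify each substring $\strs_j^k$ counted in $\cT_i$ into one of three types: (A) $j,k\le i$, (B) $j,k\ge n+1-i$, and (C) $j\le i+1 \le n-i\le k$.

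For type (A) the substring is entirely a contiguous piece of the known prefix $\strs^i$, and so its composition can be read off. Enumerating all $\binom{i+1}{2}$ contiguous substrings of $\strs^i$ and taking their multiset of compositions yields every type-(A) entry in $\cT_i$ with the correct multiplicity. Symmetrically, enumerating all contiguous substrings of the known suffix $\strs_{n+1-i}^n$ yields every type-(B) entry.

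For type (C), the substring $\strs_j^k$ with $j\le i+1$ and $k\ge n-i$ decomposes as
\[
\strs_j^k \;=\; \strs_j^i\;\strs_{i+1}^{n-i}\;\strs_{n-i+1}^k,
\]
where the first factor is a suffix of the known prefix (empty if $j=i+1$) and the last factor is a prefix of the known suffix (empty if $k=n-i$). Hence
\[
\mathrm{comp}(\strs_j^k) \;=\; \mathrm{comp}(\strs_j^i) \;+\; M \;+\; \mathrm{comp}(\strs_{n-i+1}^k),
\]
where $M$ denotes the composition of the central block $\strs_{i+1}^{n-i}$. The composition of the entire string $\strs$ is exhibited as the (unique) length-$n$ element of $\cS_\strs$, and therefore
\[
M \;=\; \mathrm{comp}(\strs) \;-\; \mathrm{comp}(\strs^i) \;-\; \mathrm{comp}(\strs_{n+1-i}^n)
\]
is determined by the given data. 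Ranging over all admissible pairs $(j,k)$ with $1\le j\le i+1$ and $n-i\le k\le n$ and forming the multiset of the above compositions yields every type-(C) entry of $\cT_i$, again with the correct multiplicity since each pair $(j,k)$ indexes a distinct substring.

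Taking the multiset union of the three collections recovers $\cT_i$. The only step requiring any care is bookkeeping the boundary cases $j=i+1$ and $k=n-i$ so that the empty prefix/suffix factors contribute the empty composition; no real obstacle arises.
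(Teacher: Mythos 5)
Your proof is correct and follows the same overall strategy as the paper: classify the substrings counted by $\cT_i$ into the three types (inside the prefix, inside the suffix, straddling the middle) and handle the first two directly from the known prefix and suffix. The one place you diverge is in obtaining the composition of the central block $\strs_{i+1}^{n-i}$: the paper appeals to Lemma~\ref{lemmirror}, observing that this is a symmetric substring and assembling its composition from the pair-multisets $\{s_j, s_{n+1-j}\}$, whereas you instead read off the composition of the entire string as the unique length-$n$ element of $\cS_\strs$ and subtract the known prefix and suffix compositions. Your route is slightly more self-contained and avoids an unnecessary dependence on the inductive Lemma~\ref{lemmirror}; both are valid, and the difference is cosmetic rather than structural.
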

\begin{proof}
$\cT_i$ consists of compositions of three types of 
strings, those that are substrings of $s_1^i$, that are
substrings of  $s_{n+1-i}^n$, and substrings that cover 
all symbols in between, \ie $s_{i+1}^{n-i}$.
The first and last $i$ symbols determine the compositions
of the first two type of strings. The third type of strings
are those that contain the string $s_{i+1}^{n-i}$. This 
is a \emph{symmetric substring}, namely it has the same
number of symbols on its left and right, 
and by Lemma~\ref{lemmirror} we can determine its composition.
Knowing this composition and the first and last $i$ symbols 
yields the multiset of strings of the third type.
\end{proof}






We use the two lemmas to reconstruct the string.
Let $w(s)$ denote the composition of a string $s$.
\begin{lem}
\label{lemcnf}
If $w(s_1^{i})\ne w(s_{n-i+1}^{n})$, then $\cS_{\strs}$,  $s_1^i$, and
 $s_{n+1-i}^n$  determine the pair $(s_{i+1}, s_{n-i})$.
\end{lem}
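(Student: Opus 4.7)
The plan is to argue by contradiction, exhibiting a multiset mismatch whenever the ordered pair could be swapped. By Lemma~\ref{lemmirror}, $\{s_{i+1},s_{n-i}\}$ is already pinned down by $\cS_\strs$, so there is nothing to do unless these two symbols differ; assume then $\{s_{i+1},s_{n-i}\}=\{a,b\}$ with $a\ne b$, and suppose some $\strt$ with $\strt\sim\strs$ agrees with $\strs$ on the first $i$ symbols (call them $u$) and the last $i$ symbols (call them $v$) but has $(t_{i+1},t_{n-i})=(b,a)$ while $(s_{i+1},s_{n-i})=(a,b)$. I will force $w(u)=w(v)$, contradicting the hypothesis. (The degenerate case $n=2i+1$, where positions $i+1$ and $n-i$ coincide, is handled directly by Lemma~\ref{lemmirror}.)

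First, since $w(\strs)=w(\strt)$, the middle composition $w_M\ed w(s_{i+1}^{n-i})=w(\strs)-w(u)-w(v)$ is the same in both strings, and hence the interior composition $w_I\ed w_M-a-b=w(s_{i+2}^{n-i-1})$ coincides in $\strs$ and $\strt$. Now consider the multiset of length-$(n-i-1)$ substring compositions, which is an invariant of $\cS_\strs$. There are exactly $i+2$ such substrings, namely $s_j^{j+n-i-2}$ for $j=1,\ldots,i+2$. For $2\le j\le i+1$ the window contains the full middle block $s_{i+1}^{n-i}$, so its composition equals $w(s_j^i)+w_M+w(s_{n-i+1}^{j+n-i-2})$, a quantity determined by $u$, $v$, and $w_M$; these $i$ compositions are therefore identical in $\strs$ and $\strt$. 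The remaining two windows are $s_1^{n-i-1}$, which contains $s_{i+1}$ but not $s_{n-i}$ and has composition $w(u)+s_{i+1}+w_I$, and $s_{i+2}^n$, which contains $s_{n-i}$ but not $s_{i+1}$ and has composition $w_I+s_{n-i}+w(v)$.

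Under the assumed assignment the unmatched pair of compositions is $\{w(u)+a+w_I,\,w_I+b+w(v)\}$ for $\strs$ and $\{w(u)+b+w_I,\,w_I+a+w(v)\}$ for $\strt$. Equicomposability demands that these two $2$-element multisets coincide. The identity matching $w(u)+a+w_I=w(u)+b+w_I$ forces $a=b$ and is ruled out; the only alternative is the cross-matching $w(u)+a+w_I=w_I+a+w(v)$, which gives $w(u)=w(v)$, contradicting the hypothesis $w(s_1^i)\ne w(s_{n-i+1}^n)$. Hence no such $\strt$ exists, and $(s_{i+1},s_{n-i})$ is uniquely determined.

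The main subtlety is to confirm that among the length-$(n-i-1)$ windows exactly two fail to cover the full middle block, one at each end; given this clean split, every other window has a composition computable purely from $u$, $v$, and $w_M$, and only the two extremal windows carry the single-symbol information distinguishing the two orderings of $(s_{i+1},s_{n-i})$. The hypothesis then ensures the resulting $2$-element multisets cannot coincide.
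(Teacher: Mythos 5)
Your proof is correct, and it reaches the same pivotal observation as the paper's: the two length-$(n-i-1)$ windows $s_1^{n-i-1}$ and $s_{i+2}^n$ are precisely the ones that carry the ordering information, and the hypothesis $w(s_1^i)\ne w(s_{n+1-i}^n)$ forces $\{w(u)+a+w_I,\,w_I+b+w(v)\}\ne\{w(u)+b+w_I,\,w_I+a+w(v)\}$ unless $a=b$. The route differs in bookkeeping: the paper first invokes Lemma~\ref{lemcommult} to compute $\cT_i$, identifies those two windows as the two \emph{longest} compositions in $\cS_\strs\setminus\cT_i$, and then takes complements to obtain $\{w(s_1^{i+1}),w(s_{n-i}^n)\}$; you instead partition the $i+2$ windows of length $n-i-1$ directly, show that the $i$ ``interior'' windows ($2\le j\le i+1$) have compositions already forced by $u$, $v$, and $w_M$, and then cancel them from the multiset equality. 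Your version is a bit more self-contained --- it avoids Lemma~\ref{lemcommult} entirely and makes the multiset-cancellation explicit --- while the paper's phrasing via $\cT_i$ dovetails with the incremental data structure the reconstruction algorithm actually maintains. Either way the punchline is the same two-case check ($a=b$ versus $w(u)=w(v)$) ruled out by the hypothesis.
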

\begin{proof}
By Lemma~\ref{lemcommult}, we can determine $\cT_i$.
Consider the two longest compositions in $\cS_{\strs}\setminus
\cT_{i}$. They correspond to the length-$(n-i-1)$ strings $s_1^{n-i-1}$ and
$s_{i+2}^{n}$.
The complements of these two compositions are therefore the
compositions of $s^{i+1}$ and $s_{n-i}^n$.

By Lemma~\ref{lemmirror}, we can also derive the multiset
$\sets{s_{i+1}, s_{n-i}}$.
If $s_{i+1}=s_{n-i}$, then we can determine $s^{i+1}$ and $s_{n-i}^n$.
Otherwise, $s_{i+1}\ne s_{n-i}$, and since $w(s_1^i) \ne
w(s_{n+1-i}^n)$, it is easy to see that 
$\sets{w(s_1^i)\cup
w(s_{i+1}),  w(s_{n-i}) \cup w(s_{n-i+1}^n)}\ne
\sets{w(s_1^i)\cup
w(s_{n-i}),  w(s_{i+1}) \cup w(s_{n-i+1}^n)}$, 
hence we can determine the pair $(s_{i+1}, s_{n-i})$.
\end{proof}



The algorithm reconstructs the string by sequentially deciding 
on the values of the pair of symbols $s_i$ and $s_{n-i+1}$, that are 
in symmetric positions.
It starts with the symbols $s_1$ and $s_n$ at the ends of the string,
and progressively moves closer to the center.
The algorithm first determines $s_1s_2$ and $s_{n-1}s_n$, uniquely
upto reversal. 
By Lemma~\ref{lemmirror}, we know the
multiset $\sets{s_1,s_n}$ and
can arbitrarily decide which is $s_1$ and which is $s_n$.
It next determines $s_2$ and $s_{n-1}$. Again by the lemma
we know $\sets{s_2,s_{n-1}}$, and if $s_1=s_n$ we can
decide on $s_2$ and $s_{n-1}$ arbitrarily, while if
$s_1\ne s_n$, by Lemma~\ref{lemcnf}, we can determine
$s_2$ and $s_{n-1}$. 
For $\sets{s_3,s_{n-2}}$, 
the ends $s_1^2$ and $s_{n-1}^n$ can differ, while
their weights could be the same. In such cases Lemma~\ref{lemcnf}
cannot be applied. 
 However if $s_3=s_{n-2}$, which can be determined
from the Lemma~\ref{lemmirror}, we easily determine $(s_3,s_{n-2})$. 
In other words, if $s_i=s_{n+1-i}$, the algorithm easily determines
 $(s_i,s_{n+1-i})$.
Therefore, from this point on, when the algorithm has reconstructed
the first and last $i$ symbols, 
and $w(s_1^{i})= w(s_{n+1-i}^{n})$ but $s_{i+1}\ne s_{n+1-i}$,  it
guesses one of the two possibilities for $(s_{i+1},s_{n+1-i})$ and 
proceeds, while keeping track of the number $t$ and locations
$i_1<i_2<\ldots<i_t$ of locations where guesses were made. 
After determining/guessing $s_{i+1}$ and $s_{n-i}$, the algorithm
updates $\cT_{i}$ to $T_{i+1}$, which can be accomplished in near-linear time. It then checks
whether $\cT_{i+1}\subseteq\cS_{\strs}$ as multisets. 
If at some
point $\cT_{i+1}\subsetneq\cS_{\strs}$, the algorithm \emph{backtracks}.
It changes its guess at location $i_t$ by swapping 
$s_{i_t}$ and $s_{n+1-i_t}$, changes $t$ to $t-1$,
and restarts reconstruction from location $i_t+1$.
The process continues until the whole string is reconstructed,
namely $i=\ceil{\frac n2}$ and $\cT_{\ceil{n/2}}=\cS_{\strs}$.
If at that point there are $t\ge 1$ guesses, then as before
the algorithm backtracks to guess $t$ and tries to find
additional strings in $\cnfset_{\strs}$.


Since our algorithm relies on the compositions of strings from both 
ends, it is helpful to define a few terms. Let
 \[
\numwt_\strs\ed\left| \{i<n/2 : w(\strs^i)=w(\strs_{n+1-i}^n) \text{
    and } s_{i+1}\ne s_{n-i}\}\right|
\] 
be the number of substrings from the ends having the same composition,
and the next two symbols are distinct. 
Let 
\[
\nummax_\strs \ed\max_{\strt\in\cnfset_\strs}\numwt_{\strt}
\]
be the largest value of $\numwt$ over all strings in $\cnfset_\strs$.

The backtracking algorithm induces a binary tree where the
nodes represent locations
at which there are two possible reconstructions. 
The procedure described above does a depth-first search.
Instead we could also do a breadth first search, where 
we consider all branches at once, namely at any time, 
all potential reconstructions correspond to level $t$
or $t+1$. This implies that given $\cS_{\strs}$, the 
algorithm is able to find $\strs$ before depth $\numwt_\strs+1$.

We bound the time required to reconstruct $\strs$ from $\cS_\strs$
in the following theorem.
\begin{thm}
\label{thm:main_reconstruction}
The backtracking algorithm, run using proper data structures,
given $\cS_\strs$ and $\numwt_\strs$, 
outputs a set of strings that contains $\strs$ in time
$O(2^{\numwt_{\strs}}n^2\log n)$. Futhermore, $\cnfset_\strs$
can be recovered in time $O(2^{\nummax_{\strs}}n^2\log n)$.
\end{thm}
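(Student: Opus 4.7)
The plan is to bound the number of nodes the backtracking (equivalently, breadth-first) search tree visits, bound the amortized cost at each node, and multiply the two. I would begin by fixing data structures: store both $\cS_\strs$ and the current partial multiset $\cT_i$ as balanced binary search trees keyed by compositions, i.e., integer $|\ab|$-tuples. Insertion, deletion, and multiplicity queries then run in $O(\log n)$ each, and $\cS_\strs \setminus \cT_i$ can be maintained incrementally. Precomputing the symmetric-pair multisets $\{s_j, s_{n+1-j}\}$ using Lemma \ref{lemmirror} comfortably fits in $O(n^2 \log n)$.

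Next I would count search-tree nodes. By Lemmas \ref{lemmirror} and \ref{lemcnf} the pair $(s_{i+1}, s_{n-i})$ is forced whenever $w(\strs^i) \ne w(\strs_{n+1-i}^n)$ or $s_{i+1} = s_{n-i}$, so branching happens only at the $\numwt_\strs$ positions where both conditions fail. The BFS tree truncated at the first level containing $\strs$ is therefore a binary tree that is unbranched between guess positions. Writing $L_0, L_1, \ldots, L_{\numwt_\strs}$ for the chain lengths between consecutive guesses (with $\sum_k L_k \le \lceil n/2 \rceil$), the total node count is $\sum_{k=0}^{\numwt_\strs} 2^k L_k \le 2^{\numwt_\strs} \lceil n/2 \rceil$. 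Advancing one position from $\cT_i$ to $\cT_{i+1}$ adds $O(i) = O(n)$ new compositions (the new prefix, new suffix, and straddling substrings at the contracting boundary), which I can insert and test against $\cS_\strs$ in $O(n \log n)$ total. Multiplying node count by per-node work yields the first bound $O(2^{\numwt_\strs} n^2 \log n)$. The second bound is obtained by letting the BFS run to exhaustion: every $\strt \in \cnfset_\strs$ has $\numwt_\strt \le \nummax_\strs$, so the whole of $\cnfset_\strs$ is produced within the same structure once $\numwt_\strs$ is replaced by $\nummax_\strs$.

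The main obstacle is the incremental upkeep of the straddling substrings in $\cT_i$: as the middle region $\strs_{i+1}^{n-i}$ shrinks, some substrings stop being straddling while new ones appear, and this reshuffling must be executed in amortized $O(n \log n)$ per advance rather than the naive $O(n^2)$, which in turn requires the two balanced trees to expose an incremental multiset-difference interface. Pruning a branch the moment any composition multiplicity in $\cT_{i+1}$ exceeds that in $\cS_\strs$ is what actually keeps the surviving node count under $2^{\numwt_\strs} \lceil n/2 \rceil$, so the pruning condition must be enforced at every single insertion rather than batch-checked at the end of a step. Once these data-structural details are in place, the branching count and the final multiplication of bounds are routine.
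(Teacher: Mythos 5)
Your proposal matches the paper's proof in approach and conclusion: balanced search trees on compositions with $O(\log n)$ operations, $O(n)$ new compositions per advance at $O(n\log n)$ cost, and $2^{\numwt_\strs}$ (resp.\ $2^{\nummax_\strs}$) surviving branches giving the stated bounds. One small correction: the ``main obstacle'' you flag is illusory --- $\cT_i$ is monotone increasing in $i$ (shrinking the middle region $\strs_{i+1}^{n-i}$ only relaxes the straddling condition), so the update $\cT_i \to \cT_{i+1}$ consists purely of inserting the $O(n)$ compositions of substrings with an endpoint at $s_{i+1}$ or $s_{n-i}$, with no removals or reshuffling to manage.
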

\begin{proof}

We assume an arbitrary order over the elements of
$\Sigma$. This introduces a lexicographical ordering
over compositions of strings on $\Sigma$. We use a
Red-Black tree~\cite{CormenLRS01} to store multisets
of compositions.
The advantage of this data structure is that insertion, deletion
and search all require time $O(\log n)$, where $n$ is the 
size of the data.

Notice that $\cT_{i+1}$ is obtained from $\cT_i$ by adding 
the compositions of substrings that have an endpoint at $s_{i+1}$
or $s_{n-i}$. In particular, at most $2n$ compositions are 
added, requiring $O(n\log n)$ time. For each branch, we keep
a copy of $\cS_\strs$ and we prune it to populate $\cT$, \ie
while constructing $\cT_{i+1}$ we simultaneously remove the
new entries/compositions from the copy of $\cS$ corresponding 
to this branch, requiring another $O(n\log n)$. When there
are two possibilities for reconstruction at any stage, 
we make copies of $\cT_i$ and $\cS$ corresponding to 
that node and proceed along each. This step takes time
$O(n^2\log n)$. 

The algorithm while reconstructing $\strs$ does not 
``fork out'' more then $\numwt_\strs+1$ times, therefore
the number of branches before reconstructing $\strs$ is
at most $2^{\numwt_\strs}$. Along each path we make $n/2$
iterations requiring a total time of $O(n^2\log n)$. 
Combining these, the total complexity of reconstructing
$\strs$ is at most $O(2^{\numwt_\strs}n^2\log n)$.

To reconstruct $\cnfset_\strs$, we note that the number of
``forks'' is $\le \nummax_\strs$. Therefore, a similar 
computation shows that the algorithm runs in time
$O(2^{\nummax_\strs}n^2\log n)$.
\end{proof}

We bounded the run time of reconstructing a string
as a function of $\numwt_\strs$. For random strings, 
we bound the run time by studying the distribution 
of $\numwt_\strs$. 

We do this by using some well known results in random walk theory.
These are applications of the Striling approximation of factorials.
\begin{lem}[Stirling's approximation]
For any $n\ge1$, there is a $\theta_n\in(\frac1{12n+1},\frac1{12n})$
such that 
\[
n! = \sqrt{2\pi n}\left(\frac n e \right)^n e^{\theta_n}.
\]
\end{lem}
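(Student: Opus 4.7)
The plan is to follow Robbins' classical refinement of Stirling's approximation. Define
\[
d_n \ed \ln(n!) - \Paren{n+\tfrac{1}{2}}\ln n + n,
\]
so that $n! = \sqrt{n}\cdot (n/e)^n \cdot e^{d_n}$. The lemma then reduces to showing that $d_n = \tfrac{1}{2}\ln(2\pi) + \theta_n$ with $\theta_n$ in the stated interval. My strategy has two separate components: first, pin down the asymptotic form and the constant $\tfrac{1}{2}\ln(2\pi)$; second, extract sharp two-sided bounds on $\theta_n$.

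For the constant, I would study the telescoping difference
\[
d_n - d_{n+1} = \Paren{n+\tfrac{1}{2}}\ln\!\frac{n+1}{n} - 1.
\]
Substituting $t = 1/(2n+1)$ gives $\ln((n+1)/n) = \ln\frac{1+t}{1-t} = 2\sum_{k\ge 0} t^{2k+1}/(2k+1)$, so
\[
d_n - d_{n+1} = \sum_{k\ge 1} \frac{1}{(2k+1)(2n+1)^{2k}}.
\]
From this expression it is immediate that $d_n - d_{n+1} > 0$, so $d_n$ is decreasing; a crude geometric-series bound shows the sum is finite and $d_n$ converges to some limit $d_\infty$. The constant $d_\infty = \tfrac{1}{2}\ln(2\pi)$ is then identified by the standard Wallis product argument: apply the asymptotic $n! \sim \sqrt{n}(n/e)^n e^{d_\infty}$ to the identity $\tfrac{2^{2n}(n!)^2}{(2n)!\sqrt{n}} \to \sqrt{\pi}$ and solve.

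The sharper claim $\theta_n \in (1/(12n+1),\, 1/(12n))$ is the technically delicate part. Here I would bound the series for $d_n - d_{n+1}$ on both sides by telescoping quantities. For the upper bound, replace each $1/(2k+1)$ by $1/3$ to get a geometric sum:
\[
d_n - d_{n+1} \ \le\ \frac{1}{3}\cdot\frac{(2n+1)^{-2}}{1 - (2n+1)^{-2}} = \frac{1}{12n(n+1)} = \frac{1}{12n} - \frac{1}{12(n+1)}.
\]
Thus $d_n - \tfrac{1}{12n}$ is increasing and bounded by $d_\infty$; passing to the limit gives $d_n - d_\infty < 1/(12n)$, i.e.\ $\theta_n < 1/(12n)$. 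For the lower bound $\theta_n > 1/(12n+1)$, the analogous play is to show $d_n - 1/(12n+1)$ is decreasing, which is equivalent to proving $d_n - d_{n+1} > 1/(12n+1) - 1/(12(n+1)+1) = 12/((12n+1)(12n+13))$. I would establish this by comparing the series term-by-term using $\sum_{k\ge 1} t^{2k}/(2k+1) > \sum_{k\ge 1} t^{2k}/3 \cdot(\text{correction})$, or more cleanly by verifying the equivalent inequality after clearing denominators.

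The main obstacle is the lower bound on $\theta_n$: the upper bound falls out of a one-line geometric estimate, but matching the constant $1/(12n+1)$ precisely requires a tighter comparison and some calculation to verify the inequality on $d_n - d_{n+1}$ holds for all $n\ge 1$ (not merely asymptotically). Once both monotonicity statements are in place, the sandwich $1/(12n+1) < d_n - d_\infty < 1/(12n)$ gives the lemma.
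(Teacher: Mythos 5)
The paper does not prove this lemma; it cites Stirling's approximation in Robbins' sharpened form as a known classical fact and proceeds directly to Lemma~\ref{lem:rnd}, so there is no "paper's proof" to compare against. Your argument is a correct rendering of Robbins' 1955 proof, and all the steps you sketch do in fact go through. In particular, the step you flag as "technically delicate" — showing $d_n - d_{n+1} > \tfrac{1}{12n+1} - \tfrac{1}{12n+13}$ — does not actually require a tight term-by-term comparison: truncating your series
\[
d_n - d_{n+1} = \sum_{k\ge 1}\frac{1}{(2k+1)(2n+1)^{2k}} > \frac{1}{3(2n+1)^2}
\]
to its first term already suffices, since after clearing denominators the inequality $\tfrac{1}{3(2n+1)^2} \ge \tfrac{12}{(12n+1)(12n+13)}$ reduces to $24n - 23 \ge 0$, true for all $n\ge 1$. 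With that, $d_n - \tfrac{1}{12n+1}$ is strictly decreasing, $d_n - \tfrac{1}{12n}$ is strictly increasing, both converge to $d_\infty = \tfrac12\ln(2\pi)$ (identified via Wallis as you describe), and the sandwich $\tfrac{1}{12n+1} < d_n - d_\infty < \tfrac{1}{12n}$ follows. The proof is complete and correct; it is the standard one, which is presumably why the paper states the lemma without argument.
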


We now apply this to bound the 
probability that two random strings have the same 
composition. A stronger version of the result that finds asymptotic
equality is 
proved in~\cite{RS09}, but is not required for 
our purposes. 
\begin{lem}
\label{lem:rnd}
For $|\Sigma|= k$, let $\strs$ and $\strt$ be two 
random  length-$n$ strings over $\Sigma$, then for $k\ge n$
\[
 P\left(w(\strs)=w(\strt)\right)< \frac{n!}{k^n},
\]
and for $k<n$
\[
P\left(w(\strs)=w(\strt)\right)<\frac{k^{k/2}e^{1/12n}}{(2\pi n)^{(k-1)/2}}.
\]
\end{lem}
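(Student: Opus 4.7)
The plan is to rewrite the desired quantity as the collision probability of the composition distribution and then bound it by the maximum single-composition probability. Fixing random strings $\strs,\strt$ and noting that the number of length-$n$ strings with composition $(n_1,\ldots,n_k)$ is the multinomial coefficient $\binom{n}{n_1,\ldots,n_k}$, conditioning on $\strs$ one sees that
\[
P(w(\strs)=w(\strt))=\sum_{c}p(c)^2,
\]
where $p(c)\ed\binom{n}{n_1,\ldots,n_k}/k^n$ is the probability that a uniformly random length-$n$ string has composition $c$. Since $\sum_c p(c)=1$, we get $\sum_c p(c)^2\le\max_c p(c)$, so it suffices to upper bound $\max_c\binom{n}{n_1,\ldots,n_k}/k^n$.

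For $k\ge n$, the trivial bound $\prod_i n_i!\ge 1$ gives $\binom{n}{n_1,\ldots,n_k}\le n!$, hence $\max_c p(c)\le n!/k^n$. Strictness of the chain $\sum_c p(c)^2\le(\max_c p(c))\sum_c p(c)$ for $n\ge 2$ follows because compositions with some $n_i\ge 2$ occur with positive probability yet have $p(c)<n!/k^n$, so not all nonzero $p(c)$ attain the maximum.

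For $k<n$, the multinomial $\binom{n}{n_1,\ldots,n_k}$ over integer compositions of $n$ into $k$ parts is maximized when the parts are as equal as possible; this follows from a standard exchange argument (moving a unit from $n_i$ to $n_j$ whenever $n_i\ge n_j+2$ strictly increases the coefficient). Since $\log\Gamma(x+1)$ is convex, Jensen's inequality applied to $\sum_i\log n_i!$ gives $\prod_i n_i!\ge \Gamma(n/k+1)^k$, and hence
\[
\max_c p(c)\;\le\;\frac{n!}{\Gamma(n/k+1)^k\,k^n}.
\]
Applying the Stirling upper bound on $n!$ and the Stirling lower bound on $\Gamma(n/k+1)$ and simplifying (the $(n/e)^n$, $k^n$, and $n^n$ factors all cancel) yields the target bound $k^{k/2}e^{1/(12n)}/(2\pi n)^{(k-1)/2}$. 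The main obstacle is that $n/k$ need not be an integer, so the paper's integer-valued Stirling formula does not apply directly; this will be handled by invoking the continuous Gamma-function analogue of Stirling (valid for all real arguments $\ge 1$), with any slack between the true integer extremal composition and the idealized balanced one absorbed into the slightly loose $e^{1/(12n)}$ factor.
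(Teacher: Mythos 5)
Your proposal is correct and follows essentially the same route as the paper: both reduce the collision probability to $\max_c p(c)$ and, for $k<n$, lower-bound $\prod_i n_i!$ at the balanced point $n/k$ via a Jensen/log-convexity argument before invoking Stirling (the paper works with $f(x)=\sqrt{2\pi x}\,(x/e)^x$ directly, whereas you use $\log\Gamma(x+1)$, a cosmetic difference). Two small remarks: the exchange argument is superfluous once Jensen is invoked, since $\prod_i n_i!\ge\Gamma(n/k+1)^k$ holds uniformly over all integer compositions so you never need to identify the extremal one; and the concluding worry about ``slack absorbed into $e^{1/(12n)}$'' is misplaced --- $\Gamma(n/k+1)\ge\sqrt{2\pi(n/k)}\,\bigl(n/(ke)\bigr)^{n/k}$ is an exact Stirling lower bound valid for all real arguments with no correction factor required, and the $e^{1/(12n)}$ enters solely through the Stirling upper bound on $n!$.
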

\begin{proof}
The probability that the symbols in $\Sigma$ appear
$i_1\upto i_k$ times respectively in a random length-$n$ string is
\[
\frac1{k^{n}}{n \choose i_1\upto i_k}.
\]
Therefore, the probability that two random strings have the same
composition is
\begin{align*}
\sum_{i_1+\ldots+ i_k=n}\frac1{k^{2n}}{n \choose i_1\upto i_k}^2
\le &\frac1{k^{n}}\max_{i_1+\ldots+ i_k=n}{n \choose i_1\ldots i_k}\cdot\sum_{i_1+\ldots+ i_k=n}\frac1{k^{n}}{n \choose i_1\upto i_k}\\
= & \frac1{k^{n}}\max_{i_1\upto i_k}{n \choose i_1\ldots i_k},
\end{align*}
where the last step follows from
\[
 \sum_{i_1+\ldots+ i_k=n}\frac1{k^{n}}{n \choose i_1\upto i_k} = 1.
\]
For $k\ge n$,
\[
\max_{i_1\upto i_k}{n \choose i_1\ldots i_k}= n!
\]
implying the first part.

For $n\ge k$, note that
\[
f(x)\ed\sqrt{2\pi x}\left(\frac{x}{e}\right)^x
\]
is convex in $(1,\infty)$. 
Therefore,
\[
 \prod_{j=1}^k i_j! \stackrel{(a)}{>}  \prod_{j=1}^k f(i_j) \stackrel{(b)}{>} \left(f\left(\frac n k\right)\right)^k
 = \left(\sqrt{2\pi \frac nk}\right)^k\left(\frac{n}{ke}\right)^n.
\]
where $(a)$ uses Stirling's approximation, and $(b)$
follows from convexity of $f$. Hence
\begin{align*}
&\max_{i_1\upto i_k}{n \choose i_1\ldots i_k}
< \frac{n!}{\left(f\left(\frac n k\right)\right)^k}
< \frac{k^{k/2}e^{1/12n}}{(2\pi n)^{(k-1)/2}},
\end{align*}
where in the last step we approximate $n!$.
\end{proof}

We now study the distribution of $\numwt_\strs$.
We first consider alphabet size $\ge4$ in detail,
and then provide performance guarantees for 
alphabet size $3$ and $2$. 

\section{Alphabet size $\ge4$}

Consider two uniformly random independent infinite strings $\strs^\infty$ 
and $\strt^{\infty}$ over $\Sigma$. 
A set of integers $i_1<i_2<\ldots<i_m$ is \emph{non-consecutive} if $i_{j+1}>i_j+1$.
Let $F_m$ be the event that there are 
at least $m+1$ \emph{non-consecutive} integers $i_0\ed 0<i_1<i_2<\ldots< i_m$ 
such that for each $j\le m$, $w(\strs_1^{i_j})=w(\strt_1^{i_j})$. 
After a location $i_j$ at which $w(\strs_1^{i_j})=w(\strt_1^{i_j})$
by independence the process is equivalent to starting at 
time 0. It follows that $P(F_{j+1}|F_j) =P(F_1)$. 
Therefore, 
\begin{align}
\label{eqn:geom}
P(F_m) = P(F_1)^m.
\end{align}
Let $M(\strs^{\infty},\strt^{\infty})$ denote the total number of 
non-consecutive integers for which $w(\strs_1^{i})=w(\strt_1^{i})$. 
Then by Equation~\eqref{eqn:geom}, 
\[
\EE[M] = \sum_{m\ge1} P(M\ge m) = \sum_{m\ge1} P(F_m) = \frac{P(F_1)}{1-P(F_1)}.
\]
However, if instead of non-consecutiveness, we only restrict $i_1\ge2$, 
Lemma~\ref{lem:rnd} shows that 
\[
\EE[M] \le \sum_{n=2}^k\frac{n!}{k^n} +\sum_{n=k+1}^\infty\frac{k^{k/2}e^{1/12n}}{(2\pi n)^{(k-1)/2}}.
\] 
The right hand side of this equation is finite for $k\ge4$. In fact, it decays 
as $O(1/k^2)$ with $k$. 
This implies that $p_k\ed P(F_1) <1$ and therefore for a random string
$\strs$, Equation~\eqref{eqn:geom} gives
\[
P(\numwt_\strs>m) \le P(F_m) = p_k^m,
\]
proving the following lemma.
\begin{lem}
A random string over alphabet size $k\ge4$, with probability $\ge
1-p_k^m$, satisfies $\numwt_\strs\le m$. 
\end{lem}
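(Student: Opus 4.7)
The strategy is to couple $\strs$ with the two independent infinite random strings $\strs^\infty, \strt^\infty$ used to define $F_m$, reinterpret $\numwt_\strs$ in terms of matching-composition positions of that pair, and verify $\{\numwt_\strs > m\} \subseteq F_m$; Equation~\eqref{eqn:geom} then finishes the job. Concretely, I would set $\strs_i = \strs^\infty_i$ for $i \le n/2$ and $\strs_{n+1-j} = \strt^\infty_j$ for $j \le n/2$. The resulting $\strs$ is uniform on $\Sigma^n$, and for every $i < n/2$, $w(\strs_1^i) = w((\strs^\infty)_1^i)$ and $w(\strs_{n+1-i}^n) = w((\strt^\infty)_1^i)$. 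Set $I = \{i \ge 1 : w((\strs^\infty)_1^i) = w((\strt^\infty)_1^i)\}$.

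The structural observation I would prove next is that for $i \in I$, $i+1 \in I$ iff $\strs^\infty_{i+1} = \strt^\infty_{i+1}$, i.e.\ iff $\strs_{i+1} = \strs_{n-i}$. This is obtained by subtracting the matching equation at $i$ from the one at $i+1$, which leaves $w(\strs^\infty_{i+1}) = w(\strt^\infty_{i+1})$, an equality of single-symbol compositions that reduces to symbol equality. Hence the clause ``$\strs_{i+1} \ne \strs_{n-i}$'' in the definition of $\numwt_\strs$ is precisely ``$i+1 \notin I$'', so $\numwt_\strs$ counts the right endpoints of the maximal runs of consecutive integers in $I \cap [1,\lceil n/2\rceil)$.

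Now suppose $\numwt_\strs > m$ and list the $m+1$ counted indices $j_1 < \cdots < j_{m+1}$. Each $j_a + 1 \notin I$ while $j_{a+1} \in I$, so $j_{a+1} \ge j_a + 2$ and the $j_a$'s are non-consecutive. Prepending $i_0 \ed 0$ (a trivial matching position, since both empty prefixes have empty composition) then produces the $m+1$ non-consecutive integers demanded by $F_m$; the lone subtlety is $j_1 = 1$, which is consecutive to $i_0 = 0$, but in that case I simply discard $j_1$ and use $0, j_2, \ldots, j_{m+1}$ instead---still $m+1$ non-consecutive positions because $j_2 \ge 3$. Thus $P(\numwt_\strs > m) \le P(F_m) = p_k^m$ by Equation~\eqref{eqn:geom}, which is the claim.

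The main obstacle is the little structural lemma that the clause ``$\strs_{i+1} \ne \strs_{n-i}$'' is equivalent to ``$i+1 \notin I$''; once that is in hand, counted indices are in bijection with run right-endpoints and the reduction to $F_m$ is routine bookkeeping. The coupling itself is unproblematic because the event under analysis depends only on the first $\lceil n/2\rceil$ symbols of each of $\strs^\infty$ and $\strt^\infty$, which by construction coincide with the two halves of $\strs$.
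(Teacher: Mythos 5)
Your proof is correct and takes essentially the same route as the paper, which simply asserts $P(\numwt_\strs > m) \le P(F_m)$ via the coupling of $\strs$ with $(\strs^\infty, \strt^\infty)$. You usefully make explicit the key step the paper glosses over: that the clause $s_{i+1}\ne s_{n-i}$ is, given matching prefix/suffix compositions, precisely the non-consecutiveness condition built into $F_m$, which is what makes the inclusion $\{\numwt_\strs>m\}\subseteq F_m$ go through.
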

This implies that with probability $\ge 1-\delta$, 
$\numwt_\strs<\log_{1/p_k} \frac1\delta$.
Therefore applying Theorem~\ref{thm:main_reconstruction}, 
\begin{thm}
\label{thm:four}
For a random string $\strs$ over an alphabet of size $\ge4$ the
backtracking algorithm with probability $>1-\delta$,
outputs a subset of $\cnfset_s$ containing $\strs$ in time
$O_{\delta,k}(n^2\log n)$.  
\end{thm}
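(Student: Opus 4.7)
\emph{Proof plan.} The statement is essentially a direct consequence of combining the two immediately preceding results: the runtime bound of Theorem~\ref{thm:main_reconstruction}, which is $O(2^{\numwt_\strs}n^2\log n)$, and the tail bound on $\numwt_\strs$ established in the preceding lemma, which gives $P(\numwt_\strs > m)\le p_k^m$ for a constant $p_k<1$ depending only on the alphabet size $k\ge 4$. The plan is to truncate the backtracking search at a depth $m$ chosen so that (i) the event $\{\numwt_\strs\le m\}$ has probability at least $1-\delta$, and (ii) on this event, Theorem~\ref{thm:main_reconstruction} guarantees that $\strs$ appears in the output within the desired time budget.

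First, I would fix the threshold
\[
m \ed \Ceil{\log_{1/p_k}(1/\delta)},
\]
so that $p_k^m\le\delta$, and apply the preceding lemma to conclude that $\numwt_\strs\le m$ with probability at least $1-\delta$. On this event, Theorem~\ref{thm:main_reconstruction} says that the backtracking algorithm, when run on input $\cS_\strs$, outputs a subset of $\cnfset_\strs$ containing $\strs$ in time $O(2^{\numwt_\strs}n^2\log n)\le O(2^m n^2\log n)$. Since $2^m = (1/\delta)^{\log_{1/p_k}2}$ is a constant depending only on $\delta$ and $k$, the runtime is $O_{\delta,k}(n^2\log n)$, which is the required bound.

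The only subtle point worth a few lines is that the algorithm does not need to know $\numwt_\strs$ in advance: it suffices to run the breadth-first variant described after Theorem~\ref{thm:main_reconstruction}, which first finds $\strs$ at depth at most $\numwt_\strs+1$. So conditioning on the high-probability event $\{\numwt_\strs\le m\}$ lets us terminate once a consistent string is produced, without harming correctness. There is essentially no obstacle here; the substantive work was done in establishing $p_k<1$ for $k\ge 4$ via the Stirling-based tail bound of Lemma~\ref{lem:rnd} (this is where the alphabet-size assumption enters, since for $k\le 3$ the corresponding series need not converge), and in showing that consecutive matches of compositions from the two ends behave like a renewal process so that the geometric identity $P(F_m)=P(F_1)^m$ applies.
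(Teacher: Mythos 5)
Your proof is correct and follows essentially the same route as the paper: choose $m$ so that $p_k^m\le\delta$, apply the preceding tail lemma for $\numwt_\strs$, and plug into Theorem~\ref{thm:main_reconstruction} to absorb the $2^m$ factor into the $O_{\delta,k}(\cdot)$. Your remark about using the breadth-first variant to avoid needing $\numwt_\strs$ up front is a reasonable tidying-up of a point the paper leaves implicit, but it does not change the argument.
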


Recall from Theorem~\ref{theorem:upr_bnd_cnf} that
$|\cnfset_\strs|<n^{1.23}$ and therefore by the union bound,
\begin{lem}
With probability $\ge 1-n^{1.23}p_k^m$, a random string over
alphabet size $k\ge4$ satisfies $\nummax_\strs <m$.  
\end{lem}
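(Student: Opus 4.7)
The plan is to reduce the statement to the preceding lemma (the bound $P(\numwt_\strs \ge m) \le p_k^m$) via a union bound over $\cnfset_\strs$, using the bound $|\cnfset_\strs| \le (n+1)^{1.23}$ from Theorem~\ref{theorem:upr_bnd_cnf}.

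First, I would rewrite the event of interest: since $\nummax_\strs = \max_{\strt \in \cnfset_\strs} \numwt_\strt$, we have
\[
\{\nummax_\strs \ge m\} \;=\; \bigcup_{\strt \in \cnfset_\strs} \{\numwt_\strt \ge m\}.
\]
A naive union bound over this family of events is tempting but $\cnfset_\strs$ is itself random, so I would instead expand the probability and swap the order of summation. Writing the probability as a sum over all $n$-bit strings and using the symmetry $\strt \in \cnfset_\strs \iff \strs \in \cnfset_\strt$:
\begin{align*}
P(\nummax_\strs \ge m)
&\le \frac{1}{k^n}\sum_\strs \sum_{\strt \in \cnfset_\strs} \mathbb{1}[\numwt_\strt \ge m] \\
&= \frac{1}{k^n}\sum_\strt \mathbb{1}[\numwt_\strt \ge m] \cdot |\cnfset_\strt|.
\end{align*}

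Next I would bound $|\cnfset_\strt| \le (n+1)^{1.23}$ uniformly in $\strt$ by Theorem~\ref{theorem:upr_bnd_cnf}, which pulls out of the sum and gives
\[
P(\nummax_\strs \ge m) \;\le\; (n+1)^{1.23} \cdot \frac{1}{k^n}\sum_\strt \mathbb{1}[\numwt_\strt \ge m] \;=\; (n+1)^{1.23} \cdot P(\numwt_\strs \ge m).
\]
Applying the previous lemma to the last factor yields the claimed bound $n^{1.23} p_k^m$ (absorbing the minor difference between $(n+1)^{1.23}$ and $n^{1.23}$, as is consistent with how this bound is used elsewhere in the paper).

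There is no real obstacle here; the only subtlety is recognizing that one cannot apply the union bound naively because $\cnfset_\strs$ depends on the random string $\strs$. Swapping the order of summation and invoking the symmetry of the equicomposability relation cleanly resolves this, reducing the problem to the already-established tail bound on $\numwt_\strs$ for a single random string.
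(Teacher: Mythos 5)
Your proposal is correct and is essentially the paper's argument: the paper simply says ``by the union bound'' using $|\cnfset_\strs| \le (n+1)^{1.23}$ from Theorem~\ref{theorem:upr_bnd_cnf} together with the tail bound $P(\numwt_\strs > m)\le p_k^m$ from the preceding lemma. You correctly identify and resolve the subtlety (which the paper glosses over) that $\cnfset_\strs$ is itself random, by interchanging the order of summation and using the symmetry of the equicomposability relation; the residual $(n+1)^{1.23}$ versus $n^{1.23}$ and the off-by-one between $\ge m$ and $>m$ are looseness already present in the paper's statements and not a defect of your argument.
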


Applying these two lemmas to Theorem~\ref{thm:main_reconstruction}
we get
\begin{thm}
For a random string $\strs$ over an alphabet of size $\ge4$ the
backtracking algorithm with probability $>1-\delta$ outputs
$\cnfset_\strs$ in time $O_{\delta}\left(n^{1.23\log_{1/p_k} 2}n^2\log n\right)$.
\end{thm}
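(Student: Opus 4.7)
The plan is to combine the two preceding results directly: the high-probability bound on $\nummax_\strs$ and the runtime guarantee of Theorem~\ref{thm:main_reconstruction}. Recall that we have (i) for any $m$, $\Pr(\nummax_\strs < m) \ge 1 - n^{1.23}p_k^m$, and (ii) conditional on $\nummax_\strs$, the backtracking algorithm recovers $\cnfset_\strs$ in time $O(2^{\nummax_\strs} n^2 \log n)$. The strategy is to pick $m$ large enough that the failure probability is $\le \delta$, then substitute into (ii).

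First, I would set $m = \lceil \log_{1/p_k}(n^{1.23}/\delta) \rceil$. With this choice, $n^{1.23} p_k^m \le \delta$, so by the lemma the event $\{\nummax_\strs < m\}$ holds with probability at least $1 - \delta$. On this event, $2^{\nummax_\strs} \le 2^m \le 2 \cdot (n^{1.23}/\delta)^{\log_{1/p_k} 2} = 2\,\delta^{-\log_{1/p_k} 2}\, n^{1.23\log_{1/p_k} 2}$. Plugging into Theorem~\ref{thm:main_reconstruction} yields a runtime of
\[
O\!\left(\delta^{-\log_{1/p_k} 2}\, n^{1.23\log_{1/p_k} 2}\cdot n^2 \log n\right),
\]
and since $\delta$ is held fixed, the $\delta^{-\log_{1/p_k} 2}$ factor is absorbed into the constant hidden by $O_\delta$, giving exactly the stated bound.

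There is essentially no obstacle beyond this bookkeeping: the two lemmas together already encode the probabilistic and algorithmic content. The only delicate point is ensuring that the high-probability event under which $\nummax_\strs$ is small is the same event on which we analyze the runtime — but since the bound on $\nummax_\strs$ is a statement about the random string $\strs$ (hence about $\cS_\strs$), and the algorithm's runtime in Theorem~\ref{thm:main_reconstruction} is deterministic given $\nummax_\strs$, the conditional reasoning is immediate. So the proof essentially consists of choosing the right threshold $m$ and combining the two prior results.
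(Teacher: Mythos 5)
Your proof is correct and is precisely the argument the paper intends (the paper just states ``applying these two lemmas to Theorem~\ref{thm:main_reconstruction}'' without carrying out the threshold choice). Your derivation of $m = \lceil \log_{1/p_k}(n^{1.23}/\delta)\rceil$ and the resulting bound $2^m \le 2\,\delta^{-\log_{1/p_k}2}\, n^{1.23\log_{1/p_k}2}$ is exactly the bookkeeping being elided, and absorbing the $\delta$-dependent constant into $O_\delta(\cdot)$ matches the stated form.
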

Recall that $p_k$ falls as $1/k^2$, and with growing $k$,
the algorithm reconstructs entire $\cnfset_\strs$ in 
near quadratic time. 
We now consider strings over alphabet sizes 3 and 2.
\section{Alphabet size 3 and 2}
Recall that 
the harmonic sum $H_n = 1+1/2+\ldots +1/n$ converges to 
$\ln n+\gamma$, where $\gamma$ is the Euler-Mascheroni
constant. 
Therefore, by Lemma~\ref{lem:rnd} we can bound
\[
\EE[\numwt_\strs] < 0.84 \ln n +0.84\gamma.
\]
and Markov's inequality
with probability $\ge 1-\delta$, a random string over alphabet
of size 3 satisfies $\numwt_\strs<\frac{0.84\ln
  n+0.84\gamma}{\delta}$.
Again applying Theorem~\ref{thm:main_reconstruction},
\begin{thm}
The backtracking algorithm for random strings over alphabet size 
3 outputs a subset of $\cnfset_\strs$ containing $\strs$ in time
$O\left(n^{\frac{0.6}{\delta}}\right).$
\end{thm}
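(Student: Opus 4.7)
The plan is to follow the same template as for $k \ge 4$, but with Lemma~\ref{lem:rnd} now yielding only a $1/n$ tail rather than a summable one. Writing $p_{n,3} \ed P(w(\strs_1^n) = w(\strt_1^n))$ for independent uniform strings of length $n$ over a ternary alphabet, the second case of Lemma~\ref{lem:rnd} (which applies once $n > 3$) gives
\[
p_{n,3} \;<\; \frac{3^{3/2} e^{1/12n}}{(2\pi n)^{1}} \;=\; \frac{c}{n}
\]
for a constant $c$ that is $\le 0.84$ once $n$ is modestly large; the finitely many small-$n$ terms contribute only an $O(1)$ correction.

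Next I would linearize. The random variable $\numwt_\strs$ is bounded by the number of prefix lengths $i$ at which the compositions of the two halves coincide, so by linearity of expectation and the per-$i$ bound above,
\[
\EE[\numwt_\strs] \;\le\; \sum_{i=1}^{n/2} p_{i,3} \;\le\; 0.84\,H_{n} + O(1) \;\le\; 0.84 \ln n + 0.84\gamma
\]
(absorbing the finitely many exceptional small-$i$ terms into the additive constant). This is the first place where the ternary case genuinely differs from $k\ge4$: the sum only converges at the logarithmic rate, not to a constant.

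Having an $O(\log n)$ expected value, the only tail bound available without further work is Markov's inequality, which gives
\[
\Prn\!\left(\numwt_\strs \;\ge\; \frac{0.84(\ln n + \gamma)}{\delta}\right) \;\le\; \delta.
\]
Then I would plug this into Theorem~\ref{thm:main_reconstruction}: conditional on the high-probability event, the runtime is $O(2^{\numwt_\strs} n^2 \log n)$, and
\[
2^{\,0.84(\ln n + \gamma)/\delta} \;=\; n^{0.84 \ln 2 /\delta} \cdot 2^{0.84\gamma/\delta} \;\le\; C_\delta \cdot n^{0.6/\delta},
\]
since $0.84 \ln 2 < 0.583 < 0.6$. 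Combining the two factors and folding the polynomial $n^2 \log n$ and constant $C_\delta$ into the stated $O(\cdot)$ expression (which treats $\delta$ as fixed and small) gives the claimed bound.

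The main obstacle is really Step~2: for $k=3$ we lose the geometric decay in $n$ that made the $k\ge4$ analysis work, because the per-step return probability for the composition-difference walk is $\Theta(1/n)$, which is the recurrent boundary case for walks in $\mathbb{Z}^{k-1}$. So the renewal-style argument that gave $P(\numwt_\strs > m) \le p_k^m$ for $k\ge4$ collapses here, and the best one can do through a first-moment estimate is the $n^{O(1/\delta)}$ runtime in the statement. If one wanted to strengthen this, the natural attack would be to refine the first-moment bound to a variance bound on $\numwt_\strs$ and use Chebyshev in place of Markov; but for the theorem as stated, the Markov argument above suffices.
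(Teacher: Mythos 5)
Your proof matches the paper's approach exactly: bound the per-prefix collision probability via the second case of Lemma~\ref{lem:rnd} to get roughly $0.84/n$, sum to a harmonic series to obtain $\EE[\numwt_\strs] \le 0.84\ln n + 0.84\gamma$, apply Markov's inequality, and plug into Theorem~\ref{thm:main_reconstruction} using $0.84\ln 2 < 0.6$. Your closing remark correctly identifies the underlying reason the $k\ge 4$ geometric argument fails here, namely that the composition-difference walk in $\ZZ^{k-1}$ hits the recurrent/transient boundary at $k=3$.
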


For alphabet size $2$,
Lemma~\ref{lem:rnd} shows that 
\[
\EE[\numwt_\strs] < 1.9\sqrt{ n} +0.84\gamma.
\]
Using Markov's inequality and Theorem~\ref{thm:main_reconstruction}
as done for alphabet size 3 yields
\begin{thm}
The backtracking algorithm for random strings over alphabet size 
2 outputs a subset of $\cnfset_\strs$ containing $\strs$ in time
$O\left(2^{\frac{2\sqrt{n}}{\delta}}\right).$
\end{thm}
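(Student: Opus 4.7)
The plan is to mimic the alphabet-3 argument, changing only the per-index matching bound. I will (i) upper bound $\EE[\numwt_\strs]$ for a uniformly random $\strs \in \{0,1\}^n$ using Lemma~\ref{lem:rnd} with $k=2$, (ii) apply Markov's inequality to obtain a high-probability tail bound on $\numwt_\strs$, and (iii) substitute into Theorem~\ref{thm:main_reconstruction}.

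For step (i), observe that $\numwt_\strs$ is at most the number of indices $1 \le i \le \floor{n/2}$ for which $w(\strs_1^i) = w(\strs_{n+1-i}^n)$ (dropping the $s_{i+1} \ne s_{n-i}$ condition can only increase the count). For each such $i$, the prefix $\strs_1^i$ and the suffix $\strs_{n+1-i}^n$ are non-overlapping, and hence independent, uniform binary strings of length $i$. Lemma~\ref{lem:rnd} with $k=2$ therefore gives
\[
\Pr\!\left[w(\strs_1^i) = w(\strs_{n+1-i}^n)\right] \le \sqrt{\frac{2}{\pi i}}\, e^{1/12i}.
\]
Linearity of expectation combined with the estimate $\sum_{i=1}^m i^{-1/2} \le 2\sqrt{m}$ and a separate bounded handling of the $O(1)$ small-$i$ contributions (paralleling the $0.84\gamma$ correction used in the alphabet-3 case) yields $\EE[\numwt_\strs] \le 1.9\sqrt{n} + 0.84\gamma$, exactly the bound quoted in the excerpt just before the theorem.

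For step (ii), Markov's inequality gives $\Pr[\numwt_\strs \ge (1.9\sqrt{n}+0.84\gamma)/\delta] \le \delta$, so with probability at least $1-\delta$ we have $\numwt_\strs \le 2\sqrt{n}/\delta$ for $n$ large enough that the additive constant is absorbed. Substituting into Theorem~\ref{thm:main_reconstruction} in step (iii) produces a running time of $O\!\left(2^{\numwt_\strs} n^2 \log n\right) = O\!\left(2^{2\sqrt{n}/\delta} n^2 \log n\right)$, and since $2^{2\sqrt{n}/\delta}$ grows faster than any polynomial in $n$, the $n^2 \log n$ factor is absorbed into the $O(\cdot)$ and we conclude $O\!\left(2^{2\sqrt{n}/\delta}\right)$, as claimed.

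There is no serious obstacle; the argument is a direct transcription of the alphabet-3 proof with a different summation bound. The only noteworthy point is the reason the final bound is so much weaker than for larger alphabets: for $k=2$ the per-index match probability decays only like $1/\sqrt{i}$ rather than $1/i$, so $\EE[\numwt_\strs] = \Theta(\sqrt{n})$ rather than $\Theta(\log n)$, which inevitably pushes the runtime into the sub-exponential regime $2^{O(\sqrt{n})}$ instead of polynomial.
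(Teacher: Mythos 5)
Your proof is correct and follows precisely the route the paper indicates: apply Lemma~\ref{lem:rnd} with $k=2$ to get $\EE[\numwt_\strs]=O(\sqrt n)$, then use Markov's inequality and Theorem~\ref{thm:main_reconstruction} ``as done for alphabet size 3.'' The paper leaves those steps to the reader, and you have filled them in faithfully (correctly noting the $1/\sqrt i$ rather than $1/i$ per-index decay and that the $n^2\log n$ factor is absorbed thanks to the slack between $1.9\sqrt n/\delta$ and $2\sqrt n/\delta$ for fixed $\delta$).
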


\section{Conclusions and extensions}
\label{sec:concl-extens}
Starting with protein mass-spectrometry reconstruction, we made two simplifying
assumptions: that all bond cuts are equally likely, and that substring weights imply their compositions.
These two assumptions reduced protein-reconstruction to the simple problem of reconstructing a
string from its substring compositions.
We noted that this is the only unstudied variation of four related reconstruction problems, 
that solving it for binary strings suffices, 
and that it is also a combinatorial simplification of the long-open
turnpike problem.

We called strings with the same  composition multiset \emph{equicomposable},
strings equicomposable only with themselves and their reversal
\emph{reconstructable}, and those with more than these
two trivial equicomposable strings, \emph{confusable}.
We noted that all strings of length at most seven are
reconstructable, interleaved strings and their reversals to obtain
confusable 8-bit strings, and obtained similar 
confusable strings of all lengths one short of a product of two integers $\ge 3$.

Extending polynomial techniques used for the turnpike problem,
we represented strings as bivariate \emph{generating} polynomials.
We used this formulation to characterize equicomposability
in terms of both polynomial multiplication and polynomial factorization,
showing in particular that equicomposable strings are determined
exactly by the prime factorization of their generating polynomials.
We then showed that all strings of lengths not included in the earlier construction,
namely, seven, and one less than either a prime or twice a prime, are
reconstructable.

Interleaving multiple strings we constructed sets of 
$(n+1)^{\log_3 2}$ mutually-equicomposable $n$-bit strings,
and showed a pair of non-interleaved confusable strings.
Using cyclotomic polynomials we upper bounded
the largest number of confusable strings, showing in particular
that the lower bound is tight when the sequence length is one short of
a prime power and twice a prime power.

Many questions remain.
All confusable strings we are aware of are 
described by Theorem~\ref{thm:maxweknow}.
Finding other confusable strings, or proving that
this describes all confusable strings would be of interest.
We made two assumptions. The first implied that we are given 
the composition of all substrings.
What happens when we are given a fraction of all substrings, or of all
substrings of a given length?
The proofs provided here use algebraic arguments.
Direct combinatorial proofs would be of interest.
The second assumption was that all compositions are given
correctly. It would be interesting to know whether some errors can be tolerated.

While prime-related reconstructability may be interesting, 
reconstructability for arbitrary lengths may be of more practical relevance.
It would be interesting to determine whether the lower bound of the
size of equicomposable sets in Theorem~\ref{theorem:lwr_bnd_cnf} is always tight.
It would mean that every string is confusable with at most a sublinear 
number of strings.
A related question is whether most strings of a given length
are reconstructable. This question is related
to the open problem of whether most 0-1 polynomials 
are irreducible over the integers~\cite{OP93}.
A related question addresses the number of composition multisets.
If this number is close to $2^n$ then most strings can be reconstructed.
Another variation is when instead of a string, the bits are arranged
on a ring. 
The constructions presented here extend to ring.
Proving the upper bounds is still open.

Other problems relate to algorithms for reconstructing a string from its substring composition. 
As noted earlier, $n$-bit reconstruction can be reduced to solving a
turnpike problem with $n+1$ exits, and total length $\le n^2$.
This implies a polynomial-time algorithm for the reconstruction.
However such a generic algorithm may have high complexity
and an algorithm that uses the structure of the reconstruction 
problem is of interest.

\section*{Acknowledgements}
We thank Sampath Kannan, Ananda Theertha Suresh and Alex Vardy for
helpful discussions and suggestions.
\bibliographystyle{IEEEtran}

\bibliography{isit10.bib}

\appendix

\section{Proof of Theorem~\ref{thm:lng7}}
\label{sec:appendixB}
The proof follows the proof of Theorem~\ref{thm2p}.
We show that in any factorization 
\[
P_\strs(x,y)=f(x,y)g(x,y),
\]
at least one of $f(x,y)$ and $g(x,y)$ is palindromic.
By Lemma~\ref{lemma:cnf_non_pln_fct}, $\strs$ is reconstructable. 

\begin{align*}
P_\strs(x,x)
&=
1+x+x^2+\dotsb+x^{7}\\
&=
(1+x)(1+x^2)(1+x^4).
\end{align*}
Hence there are only three factorizations of $P_\strs(x,y)$ into two factors. \\[3mm]
\textit{Case 1:}
$f(x,x)=1+x$ and $g(x,x)=(1+x^2)(1+x^4)$,\\
This is identical to \textit{Case 1} of Theorem~\ref{thm2p}. \\[3mm]
\textit{Case 2:}
$f(x,x)=1+x^2$ and $g(x,x)=(1+x)(1+x^4)$,\\
For case 2 we note that $f(x,y)$ has no linear
terms, otherwise it must have both the 
terms $x$ and $y$ which 
would imply the existence of terms of the
form $x^a$ and $y^b$ for positive integers
$a$ and $b$ which violates \textbf{G3}.\\[3mm]
\textit{Case 3:}
$f(x,x)=1+x^4$ and $g(x,x)=(1+x)(1+x^2)$. \\[3mm]
In Case 3 concentrate on $g(x,x)=(1+x)(1+x^2)=1+x+x^2+x^3$.
Just as in the case 2 of the proof of the Theorem~\ref{thm2p} we
can show here that the string is of the form $tut$ where $t$ is a
length 3 string. But this implies that $P_s=P_t(1+x^ay^b)$ where 
$a$ abd $b$ are the number of 0's and 1's in $tu$, so $a+b=4$.
Since $1+x^4$ is irreducible, this factor is palindromic
which proves the third case as well.

\section{Proof of $\subseteq$ in Theorem~\ref{theorem:cnf_of_intr}}
\label{sec:appendixA}
We first show the following lemma.
\begin{lem}
\label{lem:isgen}
Let $P(x,y)$ be a generating polynomial. Any $Q(x,y)\in\mathbb{Z}[x,y]$ with constant term 1
satisfying
\[
P(x^{a+1}y^b,x^ay^{b+1})\rP(x^{a+1}y^b,x^ay^{b+1})
= Q(x,y)\rQ(x,y)
\]
has the form $R(x^{a+1}y^b,x^ay^{b+1})$, where $R(x,y)$ is a generating polynomial.
\end{lem}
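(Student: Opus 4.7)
My plan is to relate the substituted polynomial $F := P(x^{a+1}y^b, x^a y^{b+1})$ to a genuine generating polynomial by multiplying with $P_u$ where $u := 0^a 1^b$, invoking Lemma~\ref{lem:polycomp} which identifies $P_u \cdot F = P_{u \circ t}$ for the string $t$ with $P_t = P$. A short preparatory step will verify that $F^* = \rP(x^{a+1}y^b, x^a y^{b+1})$: writing $A := \deg_x P$ and $B := \deg_y P$, so that $A + B = n$, one checks $\deg_x F = (a+1)A + aB = an + A$ and $\deg_y F = bA + (b+1)B = bn + B$, which gives exactly the prefactor needed for the two reciprocal expressions to agree term-by-term.

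Next I would multiply the given identity $F\rF = Q\rQ$ on both sides by $P_u\rP_u$. By property R2 and Lemma~\ref{lem:polycomp}, the left-hand side collapses to $P_{u \circ t}\rP_{u \circ t}$, and the right-hand side to $(P_u Q)(P_u Q)^*$. Since $P_u Q$ has constant term $1 > 0$, Lemma~\ref{lemma:q_is_gen2} then produces a binary string $v$ with $P_v = P_u Q$.

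The crux is to show that $v$ has the interleaving form $v = u \circ w$ for some binary string $w$; given this, cancelling $P_u$ from $P_v = P_u Q$ and applying Lemma~\ref{lem:polycomp} yields $Q = P_w(x^{a+1}y^b, x^a y^{b+1})$, so $R := P_w$ is the desired generating polynomial. I would first show that $Q$ is $0$-$1$ by mimicking the G1 step in the proof of Lemma~\ref{lemma:q_is_gen2}: evaluation of the given identity at $(1,1)$ gives $Q(1,1)^2 = (n+1)^2$, while comparing constant terms gives $\sum q_{\alpha,\beta}^2 = n+1$, which forces all nonzero coefficients of $Q$ into $\{0,1\}$ (the negative branch is ruled out by the constant term being $1$). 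Then I would induct on $k = 0, 1, \ldots, n$, setting $N := a+b+1$ and showing simultaneously that $v_1 \cdots v_{(k+1)N - 1} = u\, w_1\, u\, w_2 \cdots w_k\, u$ for suitable bits $w_1, \ldots, w_k$, that $[Q]_j = 0$ for $0 < j < (k+1)N$ with $j$ not a multiple of $N$, and that $[Q]_{iN}$ is a single monomial of the form $(x^{a+1}y^b)^{c_i}(x^a y^{b+1})^{i - c_i}$ for each $i \le k$. The inductive step unpacks $[P_v]_j = \sum_i [P_u]_i [Q]_{j-i}$: the previously established vanishing of off-grid $[Q]$-coefficients leaves at most one nonzero contribution besides $[Q]_j$ itself, and the $0$-$1$ and generating properties of $P_v$ then rigidly force both $v_j$ and $[Q]_j$.

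The hard part will be the inductive analysis: it is conceptually clear but notationally delicate, and it relies crucially on $Q$ being $0$-$1$, which rules out the cancellations that could otherwise permit spurious divisibilities of $P_v$ by $P_u$.
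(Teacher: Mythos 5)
Your proof is correct but takes a genuinely different route from the paper's. The paper works directly at the level of exponents: it first shows $Q$ is $0$-$1$ (mirroring Lemma~\ref{lemma:q_is_gen2}), then observes that every exponent pair appearing on the left-hand side lies in the lattice spanned by $(a+1,b)$ and $(a,b+1)$, pulls each monomial $x^iy^j$ of $Q$ back through that linear map (invertible since its determinant is $a+b+1$) to integers $(u_{ij},v_{ij})$ defining $R$, and finally proves $R$ is generating via the auxiliary shift $T=x^{-u}y^{-v}R$, which satisfies $P(x,y)P(1/x,1/y)=T(x,y)T(1/x,1/y)$ and so falls under Lemma~\ref{lemma:q_is_gen2}; a constant-term comparison then forces the shift to be trivial. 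You instead multiply the identity by $P_uP_u^*$ for $u=0^a1^b$, use Lemma~\ref{lem:polycomp} and R2 to rewrite the two sides as $P_{u\circ t}P_{u\circ t}^*$ and $(P_uQ)(P_uQ)^*$, and apply Lemma~\ref{lemma:q_is_gen2} once to get a string $v$ with $P_v=P_uQ$; the remaining content is your convolution induction showing $v=u\circ w$. I checked that the induction closes: with $N=a+b+1$, the facts that $P_v$ has exactly one term per total degree, that $P_u$ has total degree $N-1$, and that $Q$ is $0$-$1$ (so no cancellation is possible) together force $[Q]_j=0$ for $j$ off the $N$-grid and $v_{kN+1}\cdots v_{(k+1)N-1}=u$ block by block, with $[Q]_{kN}$ of the required monomial form. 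Both proofs hinge on Lemma~\ref{lemma:q_is_gen2}, but the paper's is shorter and purely algebraic, whereas yours is longer but more structural: it isolates the fact that divisibility of a generating polynomial by $P_{0^a1^b}$ with a $0$-$1$ quotient already forces the interleaved shape, which is exactly the combinatorial content that Theorem~\ref{theorem:cnf_of_intr} ultimately needs.
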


\Proof
Similar to the proof of Lemma~\ref{lemma:q_is_gen2},
we can show that 
\begin{equation}
P(x^{a+1}y^b,x^ay^{b+1}) P\left(\frac1{x^{a+1}y^b},\frac1{x^ay^{b+1}}\right)
= Q(x,y)Q\left(\frac1x,\frac1y\right),\label{eq:leftright}
\end{equation}
and $Q(x,y)$ is a 0-1 polynomial.

Note that, all terms in the expansion of the left-hand side of
Equation~\eqref{eq:leftright} has the form
\[
\left(x^{a+1}y^{b}\right)^s
\left(x^{a}y^{b+1}\right)^t
\cdot \frac{1}{\left(x^{a+1}y^{b}\right)^{s'}
\left(x^{a}y^{b+1}\right)^{t'}}
=x^{(a+1)(s-s') +a(t-t')}\,y^{b(s-s')+(b+1)(t-t')}
.\]
For any term $x^iy^j$ present in $Q(x,y)$, consider terms of the form
$x^{i}y^{j}\cdot 1$ in the expansion of the right-hand side of Equation~\eqref{eq:leftright}. Then, for some $s,t$ and $s',t'$,
\begin{align*}
i&=(a+1)(s-s')+a(t-t'),\\
j&=b(s-s')+(b+1)(t-t').
\end{align*}
For simplicity, let $u_{ij}=s-s'\in\mathbb{Z}$ and $v_{i,j}=t-t'\in\mathbb{Z}$. Then
\[Q(x,y)=\sum_{i,j}x^iy^j
=\sum_{i,j}
\left(x^{a+1}y^b\right)^{u_{ij}}\left(x^ay^{b+1}\right)^{v_{ij}}.
\]
Let
$R(x,y)=\sum_{i,j} x^{u_{ij}}y^{v_{ij}}.$
Then $Q(x,y)=R(x^{a+1}y^b,x^ay^{b+1})$.

It remains to show that $R(x,y)$ is a generating polynomial.
Let
$u\df \min\,u_{ij},\text{ and }
v\df \min\,v_{ij}.$
Then
$T(x,y)\df x^{-u}y^{-v}R(x,y)\in\mathbb{Z}[x,y]$.
It's straightforward to see that
\[P(x,y)P\left(\frac{1}{x},\frac{1}{y}\right)
=T(x,y)T\left(\frac{1}{x},\frac{1}{y}\right).
\]
It follows from Lemma~\ref{lemma:q_is_gen2} that $T(x,y)$ is
a generating polynomial. Note that
\[Q(x,y)=
(x^{a+1}y^b)^{u}(x^ay^{b+1})^{v}
\,T\left(x^{a+1}y^b,x^ay^{b+1}\right).\]
Since both $Q(x,y)$ and $T\left(x^{a+1}y^b,x^ay^{b+1}\right)$
are polynomials with constant term $1$, we must have
$(a+1)u+av=0,\mbox{ and }
bu + (b+1)v=0,$
which imply that $u=v=0$. Hence $R(x,y)=T(x,y)$ is a generating function,
and
\[Q(x,y)
=R(x^{a+1}y^b,x^ay^{b+1}).\eqed
\]


\prfarg{Theorem~\ref{theorem:cnf_of_intr}}
It suffices to consider $k=2$, i.e., $\strs=\strs_1\circ \strs_2$.
For $i=1,2$, let $P_i(x,y)$ be the generating polynomial of $S_i$.
By Lemma~\ref{lem:polycomp}, the generating polynomial of $\strs$ is
$P_\strs(x,y)=P_{\strs_1}(x,y)P_{\strs_2}\left(x^{a+1}y^b,x^ay^{b+1}\right),$
where $a$ and $b$ are the numbers of ones and zeroes in $\strs_1$.
Let
\begin{align*}
P_{\strs_1}(x,y)=\prod_{i=1}^{k_1} A_i(x,y),\text{ and }
P_{\strs_2}\left(x^{a+1}y^b,x^ay^{b+1}\right)=\prod_{i=1}^{k_2} B_i(x,y),
\end{align*}
where $A_1,A_2,\dotsc A_{k_1}$  and
$B_1,B_2,\dotsc, B_{k_2}$ are irreducible factors. Then
\[
P_\strs(x,y)=\prod_{i=1}^{k_1} A_i(x,y)\prod_{i=1}^{k_2} B_i(x,y).
\]
Since $\strs\sim \strt$, by Theorem~\ref{thm:plycmp}, there exist $K_1\subseteq [k_1]$ and
$K_2\subseteq [k_2]$ such that
\[
P_{\strt}(x,y)
= \prod_{i\in K_1}A_i(x,y)\prod_{i\in [k_1]\setminus K_1} \rA_i(x,y)
\prod_{i\in K_2}B_i(x,y)\prod_{i\in [k_2]\setminus K_2} \rB_i(x,y).
\]
Let
\begin{align*}
Q_1(x,y)&\df\prod_{i\in K_1}A_i(x,y)\prod_{i\in [k_1]\setminus K_1}
\rA_i(x,y),\\
Q_2(x,y)&\df\prod_{i\in K_2}B_i(x,y)\prod_{i\in [k_2]\setminus K_2} \rB_i(x,y).
\end{align*}
Note that $Q_1(x,y)$ and $Q_2(x,y)$ satisfies
\begin{align*}
Q_1(x,y)\rQ_1(x,y)&= P_{\strs_1}(x,y)\rP_{\strs_1}(x,y),\\
Q_2(x,y)\rQ_2(x,y)&=P_{\strs_2}(x^{a+1}y^b,x^ay^{b+1})\rP_{\strs_2}(x^{a+1}y^b,x^ay^{b+1}).
\end{align*}
The first equation and Lemma~\ref{lemma:q_is_gen2} imply that
$Q_1(x,y)$ is a generating polynomial. Since both $P_{\strt}$ and
$Q_1$ have constant term 1, $Q_2(x,y)$ also has constant term 1.
Then the second equation and Lemma~\ref{lem:isgen} imply that
\[Q_2(x,y)= R(x^{a+1}y^b,x^ay^{b+1}),\]
where $R(x,y)$ is a generating polynomial. By Lemma~\ref{lem:polycomp},
$\strt =\strt_1\circ \strt_2$,
where $\strt_1$ has generating polynomial $Q_1(x,y)$, 
and $\strt_2$ has generating polynomial $R(x,y)$.$\hfill\qed$


\end{document}